\definecolor{DarkGreen}{rgb}{0.1333,0.5451,0.1333}
\definecolor{DarkRed}{rgb}{0.8,0,0}
\definecolor{DarkBlue}{rgb}{0,0,0.44}
\newtheorem{theorem}{Theorem}[section]
\newtheorem{lemma}[theorem]{Lemma}
\newtheorem{corollary}[theorem]{Corollary}
\theoremstyle{definition}
\newtheorem{definition}[theorem]{Definition}
\def\defeq{\stackrel{\mathrm{def}}{=}}
\newenvironment{fminipage}%
  {\begin{Sbox}\begin{minipage}}%
  {\end{minipage}\end{Sbox}\fbox{\TheSbox}}
\newenvironment{algbox}[0]{\vskip 0.2in
\noindent 
\begin{fminipage}{6.3in}
}{
\end{fminipage}
\vskip 0.2in
}
\newcommand{\Rmin}[0]{R_{\mathrm{min}}}
\newcommand{\minimizer}{\textsc{Minimizer}}
\DeclareMathOperator*{\argmin}{arg\,min}
\newcommand{\OV}{\textsc{OrthogonalVectors}\xspace}
\newcommand{\Vvec}{V_{\normalfont\text{vec}}\xspace}
\newcommand{\Vdim}{V_{\normalfont\text{dim}}\xspace}
\newcommand{\Vpad}{V_{\normalfont\text{pad}}\xspace}
\newcommand{\vvec}{v_{\normalfont\text{vec}}\xspace}
\newcommand{\vdim}{v_{\normalfont\text{dim}}\xspace}
\newcommand{\vpad}{v_{\normalfont\text{pad}}\xspace}
\newcommand{\Gfill}[1]{G^{+}_{#1}}
\newcommand{\Vfill}[1]{V^{+}_{#1}}
\newcommand{\Efill}[1]{E^{+}_{#1}}
\newcommand{\degfill}[1]{\deg^{+}_{#1}}
\newcommand{\Nfill}[1]{N^{+}_{#1}}
\newcommand{\Gcomp}[1]{G^{\circ}_{#1}}
\newcommand{\Vcomp}[1]{V^{\circ}_{\normalfont\text{comp}#1}}
\newcommand{\Ecomp}[1]{E^{\circ}_{#1}}
\newcommand{\Ncomp}[1]{N^{\circ}_{\normalfont\text{comp}#1}}
\newcommand{\degcomp}[1]{\deg^{\circ}_{\normalfont\text{comp}#1}}
\newcommand{\Vrem}[1]{V^{\circ}_{\normalfont\text{rem}#1}}
\newcommand{\Nrem}[1]{N^{\circ}_{\normalfont\text{rem}#1}}
\newcommand{\degrem}[1]{\deg^{\circ}_{\normalfont\text{rem}#1}}
\newcommand{\MinDeg}{\textsc{MinDegreeOrdering}\xspace}
\def\prob#1#2{\mbox{\textnormal{Pr}}_{#1}\left[ #2 \right]}
\renewcommand\aa{\boldsymbol{\mathit{a}}}
\newcommand{\variable}[1]{{\normalfont\texttt{#1}}}
\newcommand{\Exp}{\normalfont\text{Exp}}
\newcommand{\nnz}{\normalfont\text{nnz}}
\newcommand{\E}{\mathbb{E}}
\begin{document}

\title{Graph Sketching Against Adaptive Adversaries\\ Applied to the Minimum Degree Algorithm}

\author{
Matthew Fahrbach\thanks{Supported in part by a National Science
Foundation Graduate Research Fellowship under grant DGE-1650044.}
\\
Georgia Tech\\
\href{mailto:matthew.fahrbach@gatech.edu}{\texttt{matthew.fahrbach@gatech.edu}}
\and
Gary L. Miller\thanks{This material is based on work supported by the
National Science Foundation under Grant No. 1637523.}\\
CMU\\
\href{mailto:glmiller@cs.cmu.edu}{\texttt{glmiller@cs.cmu.edu}}
\and
Richard Peng\thanks{This material is based on work supported by the
National Science Foundation under Grant No. 1637566.}\\
Georgia Tech\\
\href{mailto:rpeng@cc.gatech.edu}{\texttt{rpeng@cc.gatech.edu}}
\and
Saurabh Sawlani\footnotemark[3]\\
Georgia Tech\\
\href{mailto:sawlani@gatech.edu}{\texttt{~~~~~sawlani@gatech.edu~~~~}}
\and
Junxing Wang\\
CMU\\
\href{mailto:junxingw@cs.cmu.edu}{\texttt{junxingw@cs.cmu.edu}}
\and
Shen Chen Xu\footnotemark[2]\\
Facebook\thanks{Part of this work was done while at CMU.}\\
\href{mailto:shenchex@cs.cmu.edu}{\texttt{shenchex@cs.cmu.edu}}
}
\maketitle

\begin{abstract}

Motivated by the study of matrix elimination orderings in combinatorial
scientific computing, we utilize graph sketching and local sampling to give a
data structure that provides access to approximate fill degrees of a matrix
undergoing elimination in $O(\text{polylog}(n))$ time per elimination and query.
We then study the problem of using this data structure in the minimum
degree algorithm, which is a widely-used heuristic for producing elimination
orderings for sparse matrices by repeatedly eliminating the vertex with
(approximate) minimum fill degree.
This leads to a nearly-linear time algorithm for generating
approximate greedy minimum degree orderings.
Despite extensive studies of algorithms for elimination orderings 
in combinatorial scientific computing, our result is the first rigorous
incorporation of randomized tools in this setting, as well as the first
nearly-linear time algorithm for producing elimination orderings
with provable approximation guarantees.

While our sketching data structure readily works in the oblivious
adversary model, by repeatedly querying and greedily updating itself,
it enters the adaptive adversarial model where the underlying sketches
become prone to failure due to dependency issues with their internal randomness.
We show how to use an additional sampling procedure to circumvent this
problem and to create an independent access sequence.
Our technique for decorrelating the interleaved queries and updates to this
randomized data structure may be of independent interest.
\end{abstract}

\pagenumbering{gobble}

\vfill

\pagebreak

\pagenumbering{arabic}

\section{Introduction}
\label{sec:Introduction}

Randomization has played an increasingly fundamental role in the
design of modern data structures.
The current best algorithms for
fully-dynamic graph connectivity~\cite{KapronKM13,NanongkaiSW17,
NanongkaiS17,Wulffnilsen17},
shortest paths~\cite{HenzingerKN14,AbrahamCK17},
graph spanners~\cite{BaswanaKS12},
maximal matchings~\cite{BaswanaGS15,Solomon16},
and the dimensionality-reductions of large
matrices~\cite{Woodruff14,CohenMP16,KapralovLMMS17,KyngPPS17} all critically
rely on randomization.
An increasing majority of these data structures operate under the
\emph{oblivious adversary model}, which assumes that updates are generated
independently of the internal randomness used in the data structure.
In contrast,
many applications of data structures are \emph{adaptive}---meaning
that subsequent updates may depend on the output of previous queries.
A classical example of this paradigm 
is the combination of greedy algorithms with data structures,
including Dijkstra's algorithm for computing shortest paths
and Kruskal's algorithm for finding minimum spanning trees.
The limitations imposed by adaptive adversaries are beginning to 
receive attention in the dynamic connectivity~\cite{NanongkaiS17,NanongkaiSW17}
and spanner~\cite{BodwinK16} literature,
but even for these problems there remains a substantial gap
between algorithms that work in the adaptive adversary model 
and those that work only against oblivious adversaries~\cite{BaswanaKS12,KapronKM13}.

Motivated by a practically important example of adaptive
invocations to data structures for greedy algorithms, we study the minimum degree
algorithm for sparse matrix factorization and linear system
solving~\cite{David16:survey}.
This heuristic for precomputing an efficient pivot ordering is ubiquitous in
numerical linear algebra libraries that handle large sparse matrices~\cite{Matlab17},
and relies on a graph-theoretic interpretation of Gaussian elimination.
In particular, the variables and nonzeros in a linear system correspond to
vertices and edges in a graph, respectively. When the variable associated with
vertex $u$ is eliminated, a clique is induced on the neighborhood of $u$,
and then $u$ is deleted from the graph.
This heuristic repeatedly eliminates the vertex of minimum
degree in this graph, which corresponds to the variable with the
fewest nonzeros in its row and column.

Computing elimination orderings that minimize the number of
additional nonzeros, known as \emph{fill}, has been shown to be computationally
hard~\cite{Yannakakis81,NatanzonSS00}, even in parameterized
settings~\cite{KaplanST99,FominV13,WuAPL14,Bliznets16,CaoS17}.  However, the
practical performance of direct methods has greatly benefited from more
efficient algorithms for analyzing elimination
orderings~\cite{Amestoy2004,DGLN04}.
Tools such as elimination trees~\cite{Liu90,GilbertNP94} 
can implicitly represent fill in time that is nearly-linear in the
number of \emph{original} nonzeros,
which allows for efficient prediction and reorganization of
future computation and, more importantly, memory bandwidth.
In contrast to the abundance of algorithms built on examining elimination
orderings via implicit representation~\cite{HendricksonP06,NaumannS12:book},
surprisingly little attention has been given to producing
elimination orderings implicitly.
In the survey by Heggernes et al.~\cite{HeggernesEKP01}, the authors
give an $O(n^{2}m)$ algorithm for computing a minimum degree ordering,
which is more than the cost of Gaussian elimination itself
and significantly more than the nearly-linear time algorithms
for analyzing such orderings~\cite{GilbertNP94}.

\paragraph*{Main Results.}
We begin our study by combining implicit representations of
fill with graph sketching.
The nonzero entries of a partially eliminated matrix can be represented
as the set of vertices reachable within two hops in a graph
that undergoes edge contractions~\cite{GilbertNP94}.
This allows us to incorporate $\ell_0$-sketches~\cite{Cohen97}, which
were originally developed to estimate the cardinality
of reachable sets of vertices in directed graphs.
By augmenting $\ell_0$-sketches with suitable data structures,
we obtain the following result for dynamically maintaining fill structure.

\begin{theorem}
\label{thm:SketchMain}
Against an oblivious adversary,
we can maintain $(1 \pm \epsilon)$-approximations to the degrees of
the graph representation of a matrix undergoing elimination
in $O(\log^3{n} \epsilon^{-2})$ per operation.
\end{theorem}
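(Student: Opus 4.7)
The plan is to combine Cohen's $\ell_0$-sketch with a dynamic data structure that tracks minimum-rank elements in the implicit fill graph. I begin by recalling the classical representation of fill from~\cite{GilbertNP94}: maintain an auxiliary graph $G^{+}$ whose vertices are the remaining (non-eliminated) vertices together with one ``component'' node for each connected set of already-eliminated vertices. Eliminating a remaining vertex $u$ coalesces $u$ and all component nodes adjacent to $u$ into a single new component node. One verifies that the fill neighborhood of a remaining vertex $v$ equals the set of remaining vertices within two hops of $v$ in $G^{+}$, passing through at most one intervening component node. Hence approximating fill degrees reduces to approximating the cardinality of a two-hop reachable set in a graph undergoing component merges.

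For cardinality estimation I would overlay $k = \Theta(\epsilon^{-2} \log n)$ independent $\ell_0$-min sketches~\cite{Cohen97}: for each instance $i$ draw a uniformly random rank $r_i(v) \in [0,1]$ for every original vertex. Cohen's estimator applied to the $k$ independent sample minima $\min_{w \in S} r_i(w)$ yields a $(1 \pm \epsilon)$-approximation to $|S|$, with failure probability $n^{-\Omega(1)}$ by Chernoff plus a union bound over the $\mathrm{poly}(n)$ queries. Obliviousness of the adversary is precisely what guarantees that the set $S$ being queried is independent of the realized ranks, so that these minima remain valid independent samples; this is the only place in the proof where obliviousness enters, and it is exactly the assumption that will be relaxed in later sections.

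The central algorithmic step is maintaining, for every remaining vertex $v$ and every sketch $i$, the minimum rank $m_i(v)$ over $v$'s fill neighbors in $O(\log^2 n)$ amortized time per elimination, for a total of $O(\epsilon^{-2} \log^3 n)$ across the $k$ sketches. The plan is a two-level structure per sketch: for each component node $c$ store the adjacent remaining vertices in a balanced BST keyed by rank with cached minimum $M_i(c)$, and for each remaining vertex $v$ maintain a heap over $\{r_i(w) : w \in N(v)\text{ remaining}\} \cup \{M_i(c) : c \text{ a component node adjacent to } v\}$ whose top yields $m_i(v)$. An elimination then triggers merging the per-component BSTs by smaller-into-larger and propagating the updated $M_i(c)$ into the heaps of the remaining vertices whose adjacency lists touch the merged component. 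The principal obstacle is bounding the total update work; I expect a charging argument against edges of $G^{+}$ that are absorbed into components, of the type used in Sleator--Tarjan amortized analyses, to give each edge only $O(\log n)$ touches before it vanishes, which combined with the smaller-into-larger bound yields the required $O(\log^2 n)$ amortized cost per elimination per sketch and thus the stated theorem.
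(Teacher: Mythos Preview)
Your high-level plan matches the paper's: component-graph representation, $k=\Theta(\epsilon^{-2}\log n)$ independent $\ell_0$-sketches, and a two-level heap structure (one heap per component node over its remaining neighbors, one heap per remaining vertex over its adjacent components' minima). The paper carries this out in Sections~\ref{subsec:ApproxDegreeDS} and~\ref{sec:DynamicGraphs}.

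The gap is in your amortized analysis of the propagation step. You propose a deterministic Sleator--Tarjan / smaller-into-larger charging to bound the cost of pushing updated component minima $M_i(c)$ into the fill heaps of remaining vertices. But smaller-into-larger only applies when \emph{you} get to choose which side to iterate over; here you do not. When two components $x,y$ merge, the side that must be informed is whichever one's minimum \emph{went up}, and that is dictated by the ranks, not by you. For adversarially chosen ranks this cost is not $O(m\,\mathrm{polylog}\,n)$: take a caterpillar with spine $c_1,\dots,c_k$ and $k$ leaves at each $c_i$, eliminate the spine in order, and set the ranks so that the freshly-merged small side always holds the new global minimum. Each merge then forces you to inform all $\Theta(ik)$ accumulated leaves, for a total of $\Theta(k^3)=\Theta(m^{3/2})$ updates.

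What the paper actually does (Lemmas~\ref{lem:RemainingUpdatesDueToDeletion}--\ref{lem:remaining-updates-from-meld}) is exploit the \emph{randomness of the ranks themselves}, not just the obliviousness of the pivot sequence. Two pieces: (i) when the pivoted vertex $v$ is removed from a component $w$'s heap, it was the minimum there with probability $1/|\Nrem{}(w)|$, so the expected propagation cost is $O(1)$ per incident edge; (ii) when melding components $u,v$, the expected number of remaining vertices informed is at most $2|\Nrem{}(u)|\,|\Nrem{}(v)|/(|\Nrem{}(u)|+|\Nrem{}(v)|)$, because the minimum lands on the $u$-side with probability roughly $|\Nrem{}(u)|/(|\Nrem{}(u)|+|\Nrem{}(v)|)$. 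This harmonic-mean bound is then telescoped via the potential $\Phi(\Gcomp{})=\sum_{x\in\Vcomp{}} D(x)\log D(x)$ to give $O(m\log n)$ expected total propagations per sketch, hence $O(m\log^2 n)$ time per sketch and $O(m\log^3 n\,\epsilon^{-2})$ overall. Your smaller-into-larger argument correctly handles merging the neighbor \emph{lists} of components (this is the paper's Lemma~\ref{lem:DegreeEstimationDS}); it is the separate, rank-driven propagation into the fill heaps that needs the randomized accounting you are missing.
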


We also give an exact version of this data structure for cases
where the minimum degree is always small
(e.g., empirical performance of Gaussian elimination on grid graphs~\cite{bornstein1997parallelizing}).
Ignoring issues of potential dependent randomness,
the approximation guarantees of this data structure provide us with an
ordering that we call an \emph{approximate greedy minimum degree ordering},
where at each step a vertex whose degree is close to the minimum is pivoted.
It is unclear if such an ordering approximates a true minimum degree
ordering, but such guarantees are more quantifiable than previous
heuristics for approximating minimum degree orderings~\cite{AmestoyDD96,HeggernesEKP01}.

However, using this randomized data structure in a greedy manner exposes the
severe limitations of data structures that only work in the oblivious adversary
model.
The updates (i.e.\ the vertices we eliminate) depend on the output to previous
minimum-degree queries, and hence its own internal randomness.  The main result
in this paper is an algorithm that uses dynamic sketching,
as well as an additional routine for estimating degrees via local
sampling, to generate an approximate greedy minimum degree sequence in
nearly-linear time against adaptive adversaries.

\begin{theorem}
\label{thm:main}
Given an $n \times n$ matrix $A$ with nonzero graph structure $G$
containing $m$ nonzeros, we can produce a $(1+\epsilon)$-approximate greedy
minimum degree ordering in $O(m \log^{5}n \epsilon^{-2})$ time.
\end{theorem}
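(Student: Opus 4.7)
The plan is to combine the sketch of Theorem~\ref{thm:SketchMain} with an independent local-sampling routine that certifies candidate pivots using fresh randomness. Using the sketch directly to drive the greedy selection would violate obliviousness, since the subsequent queries and updates would be measurable with respect to the sketch's own coins. To break this dependence, the sketch will only be used to propose candidate low-degree vertices, while the actual pivot is chosen by a separate procedure whose random bits are drawn freshly at every round and kept independent of the sketch's randomness.

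\medskip

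\noindent
Concretely, I would maintain the sketch of Theorem~\ref{thm:SketchMain} with an appropriately scaled error parameter, together with a balanced binary search tree keyed on its current degree estimates. In parallel, I would develop a local-sampling subroutine that, given a vertex $u$ in the implicit contraction-based fill representation of~\cite{GilbertNP94}, returns a $(1 \pm \epsilon)$-approximation of its fill degree using polylogarithmically many independent neighbor-of-neighbor samples with fresh coins. At each round the algorithm extracts a short list of candidate vertices with the smallest sketch estimates, invokes the local-sampling routine on each, and pivots the candidate with the smallest verified estimate before committing the corresponding contractions to the sketch and the BST.

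\medskip

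\noindent
The main obstacle is the decorrelation argument: showing that the interleaved queries and updates to the sketch can still be analyzed as if it were being driven by an oblivious adversary. The strategy is to argue inductively on the round that, conditioned on the graph history, the eliminated vertex is a measurable function of the local-sampling coins and the graph alone, and hence independent of the sketch's internal coins. Since every subsequent graph update is a deterministic function of the eliminated vertex, the entire future update sequence remains independent of the sketch's randomness, and Theorem~\ref{thm:SketchMain} continues to apply verbatim. Formalizing this requires that the short list produced by the sketch contains a near-minimum vertex in every one of the $O(n)$ rounds, which follows from the multiplicative guarantee of Theorem~\ref{thm:SketchMain} together with a union bound, after scaling the sketch's error parameter by a logarithmic factor.

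\medskip

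\noindent
For the running time, the process performs $O(m)$ elementary updates to the sketch over the whole elimination sequence, each costing $O(\log^3 n \cdot \epsilon^{-2})$, while the balanced binary search tree and the high-probability amplification of the local-sampling routine contribute two additional logarithmic factors, yielding the claimed $O(m \log^5 n \cdot \epsilon^{-2})$ bound.
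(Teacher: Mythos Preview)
Your proposal has the right architecture---sketch for candidates, independent local sampler for verification---but the decorrelation argument as stated does not go through, and this is precisely the delicate point the paper spends Section~\ref{sec:Decorrelation} resolving.

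You claim that, conditioned on the graph history, the eliminated vertex is a measurable function of the local-sampling coins and the graph alone. But in your scheme the candidate list is produced by the sketch, and the local sampler then selects a minimizer \emph{from that list}. If two vertices $u,v$ both have (near-)minimum fill degree, whether $u$ or $v$ appears in the short list---and hence which one is pivoted---is determined by the sketch's internal coins. The identity of the pivot therefore \emph{does} depend on the sketch randomness, the next graph state inherits that dependence, and the oblivious-adversary hypothesis of Theorem~\ref{thm:SketchMain} is violated at the very next round. Knowing that the list contains \emph{some} near-minimum vertex is enough for the approximation guarantee at that round, but not for the inductive independence you need.

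The paper breaks this circularity by predeclaring, at each round, a target vertex that is defined without any reference to the sketch: the $\epsilon$-decayed minimum $\argmin_v (1-\delta_v)\cdot\textsc{EstimateFill1Degree}(v,\epsilon)$, where the $\delta_v$ are fresh exponential perturbations (Definitions~\ref{def:DecayedMinimum}--\ref{def:PerturbedMinDegreeSequence}). This target is a function of the graph, the local-sampler coins, and the $\delta_v$ only, so the resulting elimination sequence really is independent of the sketch. The sketch buckets are then used merely to \emph{locate} this already-determined target efficiently (Lemma~\ref{lem:FindingDecayedMin}, Lemma~\ref{lem:trimmingisokay}); any failure of the sketch causes a correctness error, not a dependence.

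There is a second gap in your running-time accounting. You bound the sketch maintenance cost but not the aggregate cost of the local-sampling calls, each of which costs $O(\deg(u)\cdot\text{polylog})$. Over $n$ rounds and polylog candidates per round this is not obviously $O(m\cdot\text{polylog})$. The paper's exponential perturbation does double duty here: by the memoryless property, any vertex that survives trimming is the actual pivot with constant probability (Lemma~\ref{lem:CandidateChosen}, Corollary~\ref{corollary:trimGlobalCandidates}), which lets one charge each local-sampling call to the original edges of a pivoted vertex and obtain the $O(m\log^5 n\,\epsilon^{-2})$ bound via the amortization in the proof of Theorem~\ref{thm:ApproxMinDegree}. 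Your proposal has no analogue of this charging step.
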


\paragraph{Techniques.}
Several components of our algorithm are highly tailored to the minimum
degree algorithm. For example, our dynamic sketches and local degree
estimation routine depend on the implicit representation of intermediate states
of Gaussian elimination~\cite{GilbertNP94}.
That said, our underlying randomized techniques
(e.g., $\ell_0$-sketches~\cite{Cohen97} and
wedge sampling~\cite{KallaugherP17,EdenLRS17})
are new additions to combinatorial scientific computing.

The primary focus of this paper is modifying the guarantees in the oblivious
adversary model from Theorem~\ref{thm:SketchMain} to
work within a greedy loop (i.e.\ an adaptive adversary)
to give Theorem~\ref{thm:main}.
However, we do not accomplish this by making the queries
deterministic or worst-case as in~\cite{BodwinK16,NanongkaiS17,NanongkaiSW17}.
Instead, we use an external randomized routine for estimating fill degrees
to create a fixed sequence of updates.
The randomness within the sketching data structure then becomes independent
to the update sequence,
but its internal state is still highly useful for determining which vertices
could have approximate minimum degree.
We then efficiently construct the update sequence using recent developments
for randomized graph algorithms that use
exponential random variables~\cite{MillerPX13,MillerPVX15}.
Our use of sketching can also be viewed as a pseudodeterminstic algorithm
whose goal is to efficiently recover a particular sequence of
vertices~\cite{GatG11,GoldreichGR13}.
We believe that both of these views are valuable to the study of randomness and
for better understanding the relationship between oblivious and adaptive
adversaries.

\paragraph{Organization.}
In Section~\ref{sec:Preliminaries} we formalize the implicit representation of
fill and variants of minimum degree orderings.
In Section~\ref{sec:Overview} we give an overview of our
results, along with a brief description of the algorithms
and techniques we employ.
The use of sketching and sampling to obtain our exact and approximate
algorithms are given in Section~\ref{sec:Sketching} and Section~\ref{sec:Decorrelation},
respectively.
We also detail our derandomization routine in Section~\ref{sec:Decorrelation},
which is crucial for using our randomized data structure against an adaptive
adversary.
In Section~\ref{sec:DegreeEstimation} we demonstrate how to estimate fill degrees
via local sampling, and in Section~\ref{sec:DynamicGraphs} we show how to
maintain sketches as vertices are pivoted.
Lastly, in Section~\ref{sec:Hardness} we discuss hardness results
for computing the minimum degree of a vertex in a partially eliminated
system and also for producing a minimum degree ordering.

\section{Preliminaries}
\label{sec:Preliminaries}

We assume that function arguments are pointers to
objects instead of the objects themselves, and thus
passing an object of size $O(n)$ does not cost $O(n)$ time and space.
This is essentially the ``pass by reference'' construct in
high-level programming languages.

\subsection{Gaussian Elimination and Fill Graphs}
\label{subsec:PreliminariesGaussian}

Gaussian elimination is the process of repeatedly
eliminating variables from a system of linear equations,
while maintaining an equivalent system on the remaining variables.
Algebraically, this involves taking an equation involving
a target variable and subtracting (a scaled version of) this equation
from all others involving the target variable.
We assume throughout the paper that the systems are symmetric
positive definite (SPD) and thus the diagonal will remain
positive, allowing for any pivot order.
This further implies that we can apply elimination operations to
columns in order to isolate the target variable, resulting in the Schur
complement.

A particularly interesting fact about Gaussian elimination is that the
numerical Schur complement is unique irrespective of the pivoting order.
Under the now standard assumption that nonzero elements do not cancel each
other out~\cite{GeorgeL89}, this commutative property also holds for the
combinatorial nonzero structure. By interpreting the nonzero structure of a symmetric
matrix~$A$ as an adjacency matrix for a graph $G$, we can define
the change to the nonzero structure of $A$ as a graph-theoretic operation
on $G$ analogous to the Schur complement.

Our notation extends that of Gilbert, Ng, and Peyton~\cite{GilbertNP94}, who
worked with known elimination orderings and treated the entire fill
pattern (i.e.\ additional nonzeros entries) statically.
Because we work with partially eliminated states, we will need to distinguish
between the \emph{eliminated} and \emph{remaining} vertices in $G$. We
implicitly address this by letting $x$ and $y$ denote eliminated vertices
and by letting $u$, $v$, and $w$ denote remaining vertices.
The following definition of a fill graph allows us to determine the nonzero
structure on the remaining variables of a partially eliminated system.

\begin{definition}
The \emph{fill graph} $\Gfill{} = (\Vfill{},\Efill{})$
is a graph on the remaining vertices such that the edge
$(u,v) \in \Efill{}$
if $u$ and $v$ are connected by a (possibly empty) path of eliminated vertices.
\end{definition}

This characterization of fill means that we can readily compute
the \emph{fill degree} of a vertex~$v$, denoted by
$\degfill{}(v)=|\Nfill{}(v)|$, in a partially eliminated state without
explicitly constructing the matrix.
We can also iteratively form $\Gfill{}$ from the original graph $G$ by
repeatedly removing an eliminated vertex $x$ along with its incident edges, and
then adding edges between all of the neighbors of~$x$ to form a clique.
This operation gives the nonzero structure of the Schur complement.

\begin{lemma}
\label{lem:ComputeFill}
For any graph $G = (V,E)$ and vertex $v \in V$, given an elimination ordering
$S$ we can compute $\degfill{}(v)$ at the step when $v$ is eliminated in
$O(m)$ time.
\end{lemma}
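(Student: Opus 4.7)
The plan is to avoid ever materializing the fill graph $\Gfill{}$ explicitly (which could require forming large cliques), and instead compute $\degfill{}(v)$ directly from the original graph $G$ by exploiting the definition of fill. Let $X \subseteq V$ denote the set of vertices eliminated strictly before $v$ in the ordering $S$. By the definition of the fill graph, a remaining vertex $u \neq v$ is a fill-neighbor of $v$ at the step when $v$ is eliminated if and only if there exists a path $v = w_0, w_1, \dots, w_k = u$ in $G$ such that every intermediate vertex $w_1, \dots, w_{k-1}$ lies in $X$. Thus $\degfill{}(v)$ equals the number of distinct remaining vertices (excluding $v$ itself) that are reachable from $v$ in $G$ using only intermediate hops through $X$.

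Given this characterization, the algorithm is a single graph traversal (BFS or DFS) from $v$ in $G$, governed by the rule: whenever the traversal visits a vertex $w$ that has already been eliminated (i.e.\ $w \in X$), we recurse through its neighbors, but whenever it reaches a remaining vertex $u \neq v$, we mark $u$ as a fill-neighbor and stop extending the traversal through $u$. To implement this cleanly I would maintain two boolean arrays of size $n$ (one for ``visited eliminated vertices'' and one for ``discovered fill-neighbors''), initialize both to false, and then run the traversal from $v$. At termination, $\degfill{}(v)$ is simply the count of entries flagged in the second array. We can precompute the set $X$ from the ordering $S$ in a single scan, and the auxiliary arrays can be reused across invocations.

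For correctness, the traversal discovers exactly those remaining vertices $u$ with a path from $v$ through $X$, matching the combinatorial characterization of $\Nfill{}(v)$, so the reported count equals $\degfill{}(v)$. For running time, observe that each eliminated vertex $w \in X$ is expanded at most once (its ``visited'' flag prevents re-expansion), and each edge incident to such a $w$ is examined a constant number of times; edges touching only remaining vertices are examined at most when exploring neighbors of $v$ or when reaching a remaining vertex as a stopping leaf. Summing over all edges of $G$ yields an $O(n + m) = O(m)$ bound, assuming $G$ has no isolated vertices (otherwise we absorb the $n$ term into a standard convention or preprocess to remove them).

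The only mild subtlety, and the step I would handle most carefully, is ensuring that the traversal bound really is $O(m)$ rather than something that grows with the number of fill edges: since we never enumerate fill edges at all and each original edge of $G$ is touched a constant number of times, the bound is tight. Everything else is routine.
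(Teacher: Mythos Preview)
Your proposal is correct and takes essentially the same approach as the paper's proof: mark the vertices preceding $v$ in $S$ as eliminated, then run a BFS/DFS from $v$ that stops at remaining vertices, and return the number of remaining vertices reached. Your write-up simply fleshes out the correctness and $O(m)$ running-time analysis that the paper leaves implicit.
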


\begin{proof}
Mark all the vertices appearing in $S$ before $v$ as eliminated, and mark the
rest as remaining.
Run a breadth-first search from $v$ that terminates at remaining vertices (not
including $v$).
Let $T$ be the set of vertices where the search terminated.
By the definition of $\Gfill{}$ we have $\degfill{}(v) = |T|$.
\end{proof}

This kind of path finding among eliminated vertices adds an additional layer of
complexity to our data structures.  To overcome this, we contract eliminated
vertices into their connected components (with respect to their induced subgraph
in $G$), which leads to the component graph.

\begin{definition}
We use $\Gcomp{} = (\Vcomp{}, \Vrem{}, \Ecomp{})$ to denote the
\emph{component graph}.
The set of vertices in $\Vcomp{}$ is formed by contracting edges between
eliminated vertices, and the set of vertices that have not been
eliminated is $\Vrem{}$. The set of edges $\Ecomp{}$ is implicitly given
by the contractions.
\end{definition}

\noindent
Note that $\Gcomp{}$ is quasi-bipartite, as the contraction rule
implies there are no edges between vertices in $\Vcomp{}$.
It will be useful to refer to two different kinds of
neighborhoods in a component graph. For any vertex $v$ in $\Gcomp{}$,
let $\Nrem{}(v)$ be the set of neighbors of $v$ are in~$\Vrem{}$,
and let
$\Ncomp{}(v)$ denote the neighbors of $v$ that are in $\Vcomp{}$.
Analogously, we use the notation $\degrem{}(v) = |\Nrem{}(v)|$
and $\degcomp{}(v) = |\Ncomp{}(v)|$.

\vspace{0.25cm}
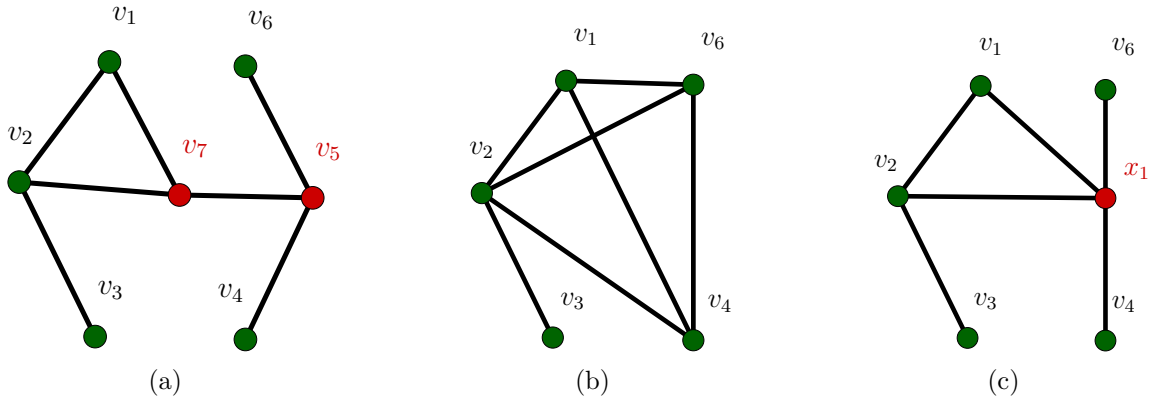
\begin{figure}[H]
  \centering
  \begin{subfigure}[b]{0.31\linewidth}
      \centering
      \resizebox{\linewidth}{!}{
      \centering
        \definecolor{ccqqqq}{rgb}{0.8,0,0}
        \definecolor{qqwuqq}{rgb}{0,0.39215686274509803,0}
        \begin{tikzpicture}[scale=1,line cap=round,line join=round,>=triangle 45,x=1cm,y=1cm]
        \draw [line width=2pt] (-13.86,2.67)-- (-12.6,4.35);
        \draw [line width=2pt] (-13.86,2.67)-- (-12.8,0.51);
        \draw [line width=2pt] (-13.86,2.67)-- (-11.62,2.49);
        \draw [line width=2pt] (-11.62,2.49)-- (-12.6,4.35);
        \draw [line width=2pt] (-11.62,2.49)-- (-9.76,2.45);
        \draw [line width=2pt] (-9.76,2.45)-- (-10.7,4.29);
        \draw [line width=2pt] (-10.7,0.47)-- (-9.76,2.45);
        \begin{scriptsize}
        \draw [fill=qqwuqq] (-12.6,4.35) circle (4.5pt);
        \draw (-12.38,5) node {\large $v_1$};
        \draw [fill=qqwuqq] (-13.86,2.67) circle (4.5pt);
        \draw (-13.84,3.32) node {\large $v_2$};
        \draw [fill=ccqqqq] (-9.76,2.45) circle (4.5pt);
        \draw[color=ccqqqq] (-9.54,3.1) node {\large $v_5$};
        \draw [fill=qqwuqq] (-10.7,0.47) circle (4.5pt);
        \draw (-10.90,1.12) node {\large $v_4$};
        \draw [fill=qqwuqq] (-12.8,0.51) circle (4.5pt);
        \draw (-12.58,1.16) node {\large $v_3$};
        \draw [fill=qqwuqq] (-10.7,4.29) circle (4.5pt);
        \draw (-10.48,4.94) node {\large $v_6$};
        \draw [fill=ccqqqq] (-11.62,2.49) circle (4.5pt);
        \draw[color=ccqqqq] (-11.4,3.14) node {\large $v_7$};
        \end{scriptsize}
        \end{tikzpicture}
    }
    \caption{}
  \end{subfigure}
  \hspace{0.8cm}
  \begin{subfigure}[b]{0.25\linewidth}
  \centering
  \resizebox{\linewidth}{!}{
    \centering
  
	\definecolor{ccqqqq}{rgb}{0.8,0,0}
	\definecolor{qqwuqq}{rgb}{0,0.39215686274509803,0}
	\begin{tikzpicture}[scale=1,line cap=round,line join=round,>=triangle 45,x=1cm,y=1cm]
	\draw [line width=2pt] (-7.6,2.67)-- (-6.34,4.35);
	\draw [line width=2pt] (-7.6,2.67)-- (-6.54,0.51);
	\draw [line width=2pt] (-6.34,4.35)-- (-4.44,0.47);
	\draw [line width=2pt] (-4.44,0.47)-- (-4.44,4.29);
	\draw [line width=2pt] (-4.44,4.29)-- (-6.34,4.35);
	\draw [line width=2pt] (-7.6,2.67)-- (-4.44,0.47);
	\draw [line width=2pt] (-7.6,2.67)-- (-4.44,4.29);
	\begin{scriptsize}
	\draw [fill=qqwuqq] (-6.34,4.35) circle (4.5pt);
	\draw (-6.04,5) node {\large $v_1$};
	\draw [fill=qqwuqq] (-7.6,2.67) circle (4.5pt);
	\draw (-7.6,3.32) node {\large $v_2$};
	\draw [fill=qqwuqq] (-4.44,0.47) circle (4.5pt);
	\draw (-4.04,1.02) node {\large $v_{4}$};
	\draw [fill=qqwuqq] (-6.54,0.51) circle (4.5pt);
	\draw (-6.24,1.06) node {\large $v_{3}$};
	\draw [fill=qqwuqq] (-4.44,4.29) circle (4.5pt);
	\draw (-4.14,4.94) node {\large $v_{6}$};
	\end{scriptsize}
	\end{tikzpicture}
}

    \caption{}
  \end{subfigure}
  \hspace{1cm}
  \begin{subfigure}[b]{0.26\linewidth}
  \centering
  \resizebox{\linewidth}{!}{
  
	\centering
	\definecolor{ccqqqq}{rgb}{0.8,0,0}
	\definecolor{qqwuqq}{rgb}{0,0.39215686274509803,0}
	\begin{tikzpicture}[scale=1,line cap=round,line join=round,>=triangle 45,x=1cm,y=1cm]
	\draw [line width=2pt] (-10.92,-2.56)-- (-9.66,-0.88);
	\draw [line width=2pt] (-10.92,-2.56)-- (-9.86,-4.72);
	\draw [line width=2pt] (-7.76,-2.59)-- (-7.76,-0.94);
	\draw [line width=2pt] (-7.76,-2.59)-- (-7.76,-4.76);
	\draw [line width=2pt] (-9.66,-0.88)-- (-7.76,-2.59);
	\draw [line width=2pt] (-10.92,-2.56)-- (-7.76,-2.59);
	\begin{scriptsize}
	\draw [fill=qqwuqq] (-9.66,-0.88) circle (4.5pt);
	\draw (-9.5,-0.28) node {\large $v_1$};
	\draw [fill=qqwuqq] (-10.92,-2.56) circle (4.5pt);
	\draw (-11.1,-2.05) node {\large $v_2$};
	\draw [fill=ccqqqq] (-7.76,-2.59) circle (4.5pt);
	\draw[color=ccqqqq] (-7.28,-2.18) node {\large $x_1$};
	\draw [fill=qqwuqq] (-7.76,-4.76) circle (4.5pt);
	\draw (-7.48,-4.2) node {\large $v_4$};
	\draw [fill=qqwuqq] (-9.86,-4.72) circle (4.5pt);
	\draw (-9.58,-4.16) node {\large $v_3$};
	\draw [fill=qqwuqq] (-7.76,-0.94) circle (4.5pt);
	\draw (-7.48,-0.28) node {\large $v_6$};
	\end{scriptsize}
	\end{tikzpicture}
}

    \caption{}
  \end{subfigure}

  \caption{The (a) original graph, (b) fill graph, and (c) component graph
  after pivoting $v_5$ and $v_7$.}
  \label{fig:graph_defs}
\end{figure}

For example, let us consider Figure~\ref{fig:graph_defs}.
The original graph $G$ has seven vertices $v_1, v_2,
\dots, v_7$, and the algorithm decides to pivot $v_5$ and $v_7$ marked in red. 
Eliminating these vertices induces a clique on
$v_1,v_2,v_4,v_6$ in the fill graph because
each pair of vertices is connected by a path through
the eliminated vertices $v_5$ and $v_7$.
Our algorithms implicitly maintain the fill graph
by maintaining the component graph, where $v_5$ and $v_7$ merge to form the
connected component~$x_1$ with edges incident to all remaining neighbors of $v_5$
and $v_7$ in the original graph.
Note that an upper bound for the number of edges in a component graph
is the number of edges in the original graph.
We repeatedly exploit this property when proving the time and space bounds
of our algorithms.

\subsection{Minimum Degree Orderings}

The minimum degree algorithm is a greedy heuristic for
reducing the cost of solving sparse linear systems that repeatedly eliminates
the variable involved in the fewest number of equations~\cite{GeorgeL89}.
Although there are many situations where this is suboptimal, it is
remarkably effective and widely used in practice.
For example, the approximate minimum degree algorithm (AMD)~\cite{AmestoyDD96}
is a heuristic for generating minimum degree orderings
that plays an integral role in the sparse linear algebra packages
in MATLAB~\cite{Matlab17}, Mathematica~\cite{Mathematica}, and Julia~\cite{Julia12}.

For any elimination ordering $(u_1, u_2, \dots, u_n$),
we let $G_i$ be the graph with vertices $u_1,u_2, \dots, u_{i}$
marked as eliminated
and $u_{i+1}, u_{i+2}, \dots, u_{n}$ marked as remaining.
We denote the corresponding sequence of fill graphs by
$(\Gfill{0}, \Gfill{1},\dots,\Gfill{n})$,
where $\Gfill{0} = G$ and $\Gfill{n}$ is the empty graph.
Throughout the paper, we frequently use the notation $[n] = \{1,2,\dots,n\}$
when iterating over sets.

\begin{definition}
A \emph{minimum degree ordering} is an elimination ordering such that
for all $i \in [n]$, the vertex
$u_i$ has minimum fill degree in $\Gfill{i - 1}$. Concretely,
this means that
\begin{align*}
\degfill{i-1}(u_{i}) = \min_{v \in \Vfill{i-1} } \degfill{i-1} ( v ).
\end{align*}
\end{definition}

The data structures we use for finding the vertices with minimum fill degree 
are randomized, so we need to be careful to not introduce dependencies
between different steps of the algorithm when several vertices are of minimum degree.
To avoid this problem, we simply require that the lexicographically-least vertex
be eliminated in the event of a tie.

Our notion for approximating a minimum degree ordering is based on finding
a vertex at each step whose degree is close to the minimum in $\Gfill{t}$.
Note that this is the goal of the AMD algorithm.

\begin{definition}
\label{def:ApproxMinDegree}
A \emph{$(1 + \epsilon)$-approximate greedy minimum degree ordering} is an elimination
ordering such that at each step $i \in [n]$, we have
\begin{align*}
  \degfill{i-1}(u_i)
  \leq
  \left( 1 + \epsilon \right)
  \min_{v \in \Vfill{i-1} }
  \degfill{i-1} ( v ).
\end{align*}
\end{definition}
\noindent
This decision process has no lookahead, and thus does not in any way
approximate the minimum possible total fill incurred during Gaussian
elimination, which is known to be NP-complete~\cite{Yannakakis81}.

\subsection{Related Works}
\label{subsec:Related}

\paragraph*{Gaussian Elimination and Fill.}
The study of pivoting orderings is a fundamental question in combinatorial
scientific computing.
Work by George~\cite{George73} led to the study of
nested dissection algorithms, which utilize separators
to give provably smaller fill bounds for planar~\cite{RoseTL76,LiptonRT79}
and separable graphs~\cite{GilbertT87,AlonY10}.
A side effect of this work is the better
characterization of fill via component graphs~\cite{Liu85},
which is used to compute the total fill-in of a specific elimination
ordering~\cite{GilbertNP94}.
This characterization is also used to construct elimination trees,
which are ubiquitous in scientific computing
to preallocate memory and optimize cache behaviors~\cite{Liu90}.

\paragraph*{Finding Low-Fill Orderings.}
The goal of an elimination ordering is to minimize the total fill.
Unfortunately, this problem is NP-complete~\cite{Yannakakis81,Berman90}.
Algorithms that approximate the minimum fill-in within polynomial factors
have been studied~\cite{NatanzonSS00}, as well as
algorithms~\cite{KaplanST99,FominV13} and hardness
results~\cite{WuAPL14,Bliznets16,CaoS17} for parameterized variants.
Partially due to the high overhead of the previous algorithms,
the minimum degree heuristic remains as one of the most widely-used
methods for generating low-fill orderings~\cite{GeorgeL89}.

Somewhat surprisingly, we were not able to find prior works that
compute the minimum degree ordering in time faster
than $O(n^3)$ or works that utilize the implicit representation
of fill provided by elimination trees.\footnote{
We use speculative language here due to the vastness
of the literature on variants of minimum degree algorithms.
}
On the other hand, there are various heuristics for finding
minimum degree-like orderings, including
multiple minimum degree (MMD)~\cite{Liu85}
and the approximate minimum degree algorithm (AMD)~\cite{AmestoyDD96}.
While both of these methods run extremely well in practice,
they have theoretically tight performances of $\Theta(n^2m)$ for MMD
and $\Theta(nm)$ for AMD~\cite{HeggernesEKP01}.
Furthermore, AMD is not always guaranteed to produce a vertex of approximate minimum degree.

\paragraph*{Removing Dependencies in Randomized Algorithms.}
Our size estimators are dynamic---the choice of the pivot, which
directly affects the subsequent fill graph, is a result of the randomness
used to generate the pivot in the previous step---and prone to
dependency problems.
Independence between the access sequence and internal randomness is
a common requirement in recent works on data structures
for maintaining spanning trees, spanners,
and matchings~\cite{BaswanaGS15,KapronKM13,Solomon16}.
Often these algorithms only have guarantees
in the oblivious adversary model,
which states that the adversary can choose the graph and the sequence of
updates, but it cannot choose updates adaptively in response to the
randomly-guided choices of the algorithm.

Recent works in randomized dimensionality-reduction have approached this
issue of dependency by injecting additional randomness to preserve
independence~\cite{LeeS15}.
Quantifying the amount of randomness that is ``lost'' over the course of an
algorithm has recently been characterized using mutual
information~\cite{KapralovNPWWY17}, but their results do not allow
for us to consider $n$ adversarial vertex pivots.
Our analysis also has tenuous connections to recent works utilizing 
matrix martingales to analyze repeated introductions of randomness
into graph sparsification algorithms~\cite{KyngS16,KyngPPS17}.

\section{Overview}
\label{sec:Overview}

We discuss the main components of our algorithms in three parts.
In Section~\ref{subsec:OverviewSketching}
we explore how dynamic graph sketching can be used to construct a
randomized data structure that maintains approximate degrees under
vertex eliminations.
In Section~\ref{subsec:Correlation} we demonstrate how data structures
that work against oblivious adversaries can fail against adaptive adversaries.
We also describe our approach to circumvent this problem for approximate minimum
degree sequences.
In Section~\ref{subsec:OverviewDegreeEstimation} we discuss a local degree estimation
routine (the new key primitive) in the context of estimating
the number of nonzero columns of a matrix via sampling.
Finally, in Section~\ref{subsec:OverviewComputing} we explain the implications
of our results to the study of algorithms for computing elimination orderings.

\subsection{Dynamically Sketching Fill Graphs}
\label{subsec:OverviewSketching}

The core problem of estimating fill degrees
can be viewed as estimating the cardinality of sets
undergoing unions and deletion of elements.
Cardinality estimation algorithms in the streaming algorithm literature
often trade accuracy for space~\cite{FlajoletM85,CormodeM05},
but our degree-approximation data structures use sketching to trade
space for accuracy and more efficient update operations.

We first explain the connection between computing fill degrees and
estimating the size of reachable sets.
Assume for simplicity that no edges exist between the remaining vertices
in the component graph $\Gcomp{}$.
Split each remaining vertex $u$ into two vertices $u_1$ and $u_2$,
and replace every edge $(u,x)$ to a component vertex $x$ by the directed
edges $(u_1,x)$ and $(x,u_2)$.
The fill degree of~$u$ is the number of remaining
vertices $v_2$ reachable from $u_1$ (not including $u_1$).
Cohen~\cite{Cohen97} developed a nearly-linear time size-estimation framework
for reachability problems using sketching and $\ell_0$-estimators.
Adapting this framework to our setting for fill graphs
leads to the following kind of $\ell_0$-sketch data structure.
We refer to the set $N(u)\cup\{u\}$ as the \emph{$1$-neighborhood}
of $u$, and we call its cardinality $\deg(u)+1$ the \emph{$1$-degree} of $u$. 
\begin{definition}\label{def:Sketch}
A \emph{1-neighborhood $\ell_0$-sketch} of a graph $G$ is constructed as follows:
\begin{enumerate}
  \item Each vertex $u \in V$ independently generates a random key $R(u)$
    uniformly from $[0,1)$.
  \item Then each vertex determines which of its neighbors
    (including itself) has the smallest key.
    We denote this by the function
    \[
      \minimizer(u) \defeq \argmin_{v \in N(u)\cup\{u\}} R(v).
    \]
\end{enumerate}
\end{definition}

To give some intuition for how sketching is used to estimate
cardinality, observe that choosing keys independently and uniformly at
random essentially assigns a random vertex $N(u)\cup\{u\}$ to be $\minimizer(u)$.
Therefore, the key value $R(\minimizer(u))$ is correlated with $\deg(u)+1$.
This correlation is the cornerstone of sketching.
If we construct $k=\Omega(\log{n}\epsilon^{-2})$ independent sketches,
then by concentration we can use an order statistic of $R_i(\minimizer(u))$
over all $k$ sketches to give an $\epsilon$-approximation of
$\deg(u) + 1$ with high probability.
We gives the full details in Appendix~\ref{sec:SketchingProofs}.

To maintain sketches of the fill graph as it undergoes vertex
eliminations, we first need to implicitly maintain the component graph
$\Gcomp{}$ (Lemma~\ref{lem:DegreeEstimationDS}).
We demonstrate how to efficiently propagate key values
in a sketch as vertices are pivoted in Section~\ref{sec:DynamicGraphs}.
For now, it is sufficient to know that each vertex in a sketch has an
associated min-heap that it uses to report and update its minimizer.
Because eliminating vertices leads to edge contractions in the component
graph, there is an additional layer of intricacies that we need to resolve
using amortized analysis.

Suppose $v$ is the vertex eliminated as we go from $\Gcomp{t}$ to $\Gcomp{t+1}$.
The sketch propagates this information to relevant vertices in the graph
using a two-level notification mechanism.
The neighbors of~$v$ are informed first, and then they notify their neighbors
about the change, all the while updating the key values in their heaps.
We outline the subroutine $\textsc{PivotVertex}(v)$ that accomplishes this:
\begin{enumerate}
  \item Update the min-heaps of every remaining neighbor of $v$.
  \item For each component neighbor $x$ of $v$,
        if the minimum in its heap changes, then
        propagate this change to the remaining neighbors of $x$
        and merge $x$ with $v$.
\end{enumerate}
While it is simple enough to see that this algorithm correctly maintains key
values,
bounding its running time is nontrivial and requires a careful amortized
analysis to show that the bottleneck operation is the merging of component vertices.

We can merge two min-heaps in $O(\log^2 n)$ time, so 
merging at most $n$ heaps takes $O(n \log^2 n)$ time in total.
To bound the cost of heap updates due to merges,
we define the potential of the component graph as
\[
  \Phi(\Gcomp{t}) \defeq \sum_{u \in \Vcomp{,t}} D(u) \log (D(u)),
\]
where $D(u)$ is the sum of the original degrees of vertices merged into $u$.
Using the fact that a merge operation only informs neighbors
of at most one of the two merged vertices,
we are able to show that the number of updates to produce $\Gcomp{t}$
is of the order of $\Phi(\Gcomp{t})$. It follows that the total number of
updates is at most $O(m \log n)$, which gives us a total update cost of $O(m
\log^2 n)$.

\subsection{Correlation and Decorrelation}
\label{subsec:Correlation}

We now discuss how we use the randomized sketching data structure within a
greedy algorithm.
We start with a simple concrete example to illustrate a problem that an
adaptive adversary can cause.
Consider a data structure that uses sketching to estimate the cardinality of a
subset $S \subseteq [n]$ under the insertion and deletion of elements.
This data structure randomly generates a subset of keys $T
\subseteq [n]$ such that $|T| = \Theta(\log{n} \epsilon^{-2})$, and
it returns as its estimate the scaled intersection
\[
  n \cdot \frac{\left|S \cap T\right|}{|T|},
\]
which is guaranteed to be within an $\epsilon n$-additive error of the
true value $|S|$ by Chernoff bounds, assuming that $T$ is generated
independently of $S$.
Clearly this cardinality-estimation algorithm works in the oblivious adversary model.

However, an adaptive adversary can use answers to previous queries
to infer the set of secret keys~$T$ in $O(n)$ updates and queries.
Consider the following scheme 
in Figure~\ref{fig:ProblematicUpdates}
that returns $S=T$.

\begin{figure}[H]
\begin{algbox}
\begin{enumerate}
\item Initialize $S=[n]$.
\item For each $i=1$ to $n$:
  \begin{enumerate}
    \item Delete $i$ from $S$. If the estimated size of $S$ changed, reinsert $i$ into $S$.
  \end{enumerate}
\item Return $S$.
\end{enumerate}
\end{algbox}
\caption{An adaptive routine that amplifies the error of a
  cardinality-estimation scheme that uses a fixed underlying sketch.}
\label{fig:ProblematicUpdates}
\end{figure}

While the updates performed by a greedy algorithm are less extreme than
this, in the setting where we maintain the cardinality of the smallest of
$k$ dynamic sets, having access to elements in the minimizer does allow for
this kind of sketch deduction.
Any accounting of correlation (in the standard sense) also allows for
worst-case kinds of adaptive behavior, similar to the scheme above.

To remove potential correlation, we use an external routine that is analogous to
the local degree-estimation algorithm used in the approximate
minimum degree algorithm, which runs in time close to the degree it estimates.
In this simplified example, suppose for each cardinality query that the data structure
first regenerates $T$.
Then the probability that $i$ belongs to $S$ is $\Theta(\log{n}\epsilon^2 / n)$.
Stepping through all $i \in [n]$, 
it follows that the expected number of deletions is $\Theta(\log{n}
\epsilon^{-2})$, and hence $S$ remains close to size $n$ with high probability.

Reinjecting randomness is a standard method for
decorrelating a data structure across steps.
However, if we extend this example to the setting where we maintain the
cardinality of $k$ sets (similar to our minimum degree algorithm),
then the previous idea requires that we reestimate the size of every set
to determine the one with minimum cardinality.
As a result, this approach is prohibitively expensive.
However, these kinds of cardinality estimations are actually local---meaning that 
it is sufficient to instead work with a small and accurate subset of candidates
sets.
If we compute the set with minimum cardinality among the candidates
using an external estimation scheme,
then this decision is independent of the random choice of $T$ in the sketching
data structure, which then allows us to use the sketching data structure to
generate the list candidates.

Our algorithm for generating an approximate greedy minimum degree ordering
relies on a similar external routine called
$\textsc{EstimateFill1Degree}(u,\epsilon)$,
which locally estimates the fill 1-degree of~$u$ at any step of the algorithm
in time proportional to $\deg(u)$ in the original graph.
We further describe this estimator in
Section~\ref{subsec:OverviewDegreeEstimation} and present the full
sampling algorithm in Section~\ref{sec:DegreeEstimation}.
In Section~\ref{sec:Decorrelation} we show that to generate an approximate
greedy minimum degree sequence, it is instead sufficient to pivot the vertex
\[
  \argmin_{u \in \Vfill{}}
  \left(1 - \frac{\epsilon \cdot \Exp(1)}{\log{n}}\right)
  \cdot \textsc{EstimateFill1Degree}\left(u, \frac{\epsilon}{\log{n}}\right)
\]
at each step, which we call the \emph{$\epsilon$-decayed minimum} over all
external estimates.

Analogous to the discussion about the set cardinality estimation above,
evaluating the degrees of every remaining vertex using
$\textsc{EstimateFill1Degree}$ at each step
is expensive and leads to a total cost of $\Omega(nm)$.
However, we can reincorporate the sketching data structure and use the following
observations about the perturbation coefficient involving the
exponential random variable $\Exp(1)$ to sample a small number of candidate
vertices that contains the $\epsilon$-decayed minimum.
\begin{itemize}
\item For a set of vertices whose degrees are
  within $1 \pm \epsilon/\log{n}$ of each other, 
it suffices to randomly select and consider $O(1)$ of them 
by generating the highest order statistics of exponential random variables in
decreasing order.
\item By the memoryless property of the exponential distribution,
if we call $\textsc{EstimateFill1Degree}$, then with constant probability it
will be for the vertex we pivot.
Therefore, we can charge the cost of these evaluations to the original
edge count and retain a nearly-linear running time.
\end{itemize}

\noindent
Invoking $\textsc{EstimateFill1Degree}$ only on the candidate vertices
allows us to efficiently find the $\epsilon$-decayed minimizer in each step, which
leads to the nearly-linear runtime as stated in
Theorem~\ref{thm:main}.
The key idea is that any dependence on the $\ell_0$-sketches stops after the
candidates are generated, since their degrees only depend on the randomness of
an external cardinality-estimation routine.

\subsection{Local Estimation of Fill Degrees}
\label{subsec:OverviewDegreeEstimation}

A critical part of the approximate min-degree algorithm
is the \textsc{EstimateFill1Degree} function,
which estimates the fill 1-degree of a vertex $u \in \Vfill{}$
using fresh randomness and $O(\deg(u) \log^2{n} \epsilon^{-2})$
oracle queries to the component graph $\Gcomp{}$.
At the beginning of Section~\ref{sec:DegreeEstimation} we show how to construct
a $(0,1)$-matrix $A$ where each row corresponds to a remaining neighborhood
of a component neighbor of $u$. The number of nonzero columns in $A$
is equal to $\degfill{}(u)$.
Using only the following matrix operations (which correspond to component
graph oracle queries), we analyze the more general problem of counting
the number of nonzero columns in a matrix.
We note that this technique is closely related to recent results in wedge sampling
for triangle counting~\cite{KallaugherP17,EdenLRS17}.

\begin{itemize}
  \item $\textsc{RowSize}(A,i)$: Returns the number of nonzero elements in row $i$ of $A$.
  \item $\textsc{SampleFromRow}(A,i)$: Returns a column index $j$ uniformly at
    random from the nonzero entries of row $i$ of $A$.
  \item $\textsc{QueryValue}(A,i,j)$: Returns the value of $A(i,j)$.
\end{itemize}

\begin{restatable*}[]{lemma}{NonZeroColumnEstimator}
\label{lem:NonZeroColumnEstimator}
There is a routine $\textsc{EstimateNonzeroColumns}$ using
the three operations above that takes as input
(implicit) access to a matrix $A$
and an error $\epsilon$,
and returns an $\epsilon$-approximation to the number
of nonzero columns in $A$ with high probability.
The expected total number of operations used is
$O(r \log^2{n} \epsilon^{-2})$,
where $r$ is the number of rows and $n$ is the number of columns in $A$.
\end{restatable*}

We now give an overview of how \textsc{EstimateNonzeroColumns} works.
Let $B$ be the normalized version of $A$ where every nonzero entry is
divided by its column sum.
The sum of the nonzero entries in $B$ is the number of nonzero
columns in $A$, denoted by $\textsc{NonzeroColumns}(A)$.
If we uniformly sample a nonzero entry of $B$, then the mean of
this distribution is $\textsc{NonzeroColumns}(A)/\nnz(A)$.
Because random variables sampled from this distribution take their value in $[0,1]$,
we can estimate their mean using an \textsc{EstimateMean} subroutine (Lemma~\ref{lem:MeanEstimation}),
which does the following:
\begin{enumerate}
	\item Set a threshold $\sigma$ depending on the accuracy of the desired estimate.
  \item Sample $k$ independent random variables from the distribution until their sum first exceeds $\sigma$.
	\item Return $\sigma/k$.
\end{enumerate}

Using the matrix operations above,
we can easily sample indices $(i,j)$ of nonzero entries in $B$,
but evaluating $B(i,j)$ requires that we know the $j$-th column sum of $A$.
Therefore, to compute this column sum we estimate the mean of a
Bernoulli distribution on the $j$-th column of~$A$ defined by selecting an
entry from $A[:,j]$ uniformly at random. This distribution has
mean $\nnz(A[:,j])/r$, and it is amenable to sampling using the provided
operations.

While the previous estimator works satisfactorily, we show how to combine
these distributions and use a hitting time argument to reduce
the sample complexity by a factor of $O(\epsilon^{-2})$.
Specifically, for a fixed column, we consider a random variable that has a
limited number of attempts to find a nonzero entry by 
uniformly sampling rows.
By optimizing the number of attempts, we can reduce our error overhead in the
runtime
at the expense of a $1/\text{poly}(n)$ perturbation to the approximation.

\subsection{Significance to Combinatorial Scientific Computing}
\label{subsec:OverviewComputing}

Despite the unlikelihood of theoretical gains for solving linear systems
by improved direct methods for sparse Gaussian elimination,
we believe our study could influence combinatorial scientific computing in
several ways.
First, we provide evidence in Section~\ref{sec:Hardness} for the nonexistence
of nearly-linear time algorithms for finding exact minimum degree orderings by
proving conditional hardness results.
Our reduction uses the observation that determining if a graph can be covered
by a particular union of cliques (or equivalently, that the fill graph is
a clique after eliminating certain vertices) is equivalent to the
orthogonal vectors problem~\cite{Williams05}.
Assuming the strong exponential time hypothesis,
this leads to a conditional hardness of $\Omega(m^{4/3-\theta})$
for computing a minimum degree ordering.
However, we believe that this result is suboptimal and 
that a more careful construction could lead to $\Omega(n m^{1-\theta})$-hardness.

On the other hand, advances in minimum degree algorithms cannot be justified in
practice solely by worst-case asymptotic arguments.
In general, nested dissection orderings are asymptotically superior in
quality to minimum degree orderings~\cite{hendrickson1998improving}.
Furthermore, methods based on Krylov spaces, multiscale
analysis, and iterative
methods~\cite{gutknecht2007brief,gaul2013framework} are becoming increasingly popular
as they continue to improve state-of-the-art solvers for large sparse systems.
Such advancements are also starting to be reflected in theoretical works.
As a result, from both a theoretical and practical perspective, we believe that
the most interesting question related to minimum degree algorithms is whether
or not such sequences lead to computational gains for problems of moderate size.

In our approximate minimum degree algorithm, the $O(\log^5{n})$ term
and convenient choice of constants
preclude it from readily impacting elimination orderings in practice.
However, the underlying sketching technique is quite flexible.
For example, consider modifying the dynamic $\ell_0$-sketches such that:
\begin{enumerate}
\item Each vertex maintains $k$ random numbers in the range $[0, 1)$.
\item Each component vertex maintains the smallest $k$ numbers of its remaining
  neighbors.
\item Each remaining vertex then maintains the smallest $k$ numbers among its
  component neighbors.
\end{enumerate}
If we repeatedly eliminate the vertex whose median is largest, this routine
is similar to using~$k$ copies of the previous type of the sketch.
Letting $k=\Omega(\log{n} \epsilon^{-2})$,
we can analyze this variant against an oblivious adversary 
using slight modifications to our original sketching algorithm~\cite{massart2000constants}.
Although our analysis demonstrates that new tools are necessary for studying
its behavior within a greedy algorithm,
we experimentally observed desirable behavior for such sketches.
Therefore, we plan to continue studying this kind of adaptive graph sketching
both theoretically and experimentally.

\section{Sketching Algorithms for Computing Degrees}
\label{sec:Sketching}

Let us recall a few relevant definitions from Section~\ref{sec:Preliminaries}
for convenience.  For a given vertex elimination sequence
$(u_1, u_2, \dots, u_n)$,
let $\Gfill{t}$ denote the fill graph obtained by pivoting
vertices $u_1, u_2, \dots, u_t$, and
let $\delta_t$ denote the minimum degree in $\Gfill{t}$.
An $\ell_0$-sketch data structure consists of the following:
\begin{enumerate}
\item Each vertex $u$ independently generates a key $R(u)$ from $[0,1)$ uniformly at random.
\item Then each vertex $u$ determines which neighbor (including itself) has the smallest
key value. We denote this neighbor by $\minimizer(u)$.
\end{enumerate}

In this section we show that if an $\ell_0$-sketch can efficiently be
maintained for a dynamic graph, then we can use the same set of sketches at
each step to determine the vertex with minimum fill degree and eliminate it.
We explore the dynamic $\ell_0$-sketch data structure for efficiently
propagating key values under pivots in detail in
Section~\ref{sec:DynamicGraphs} (and for now we interface it via
Theorem~\ref{thm:DataStructureMain}).
This technique leads to improved algorithms for computing the minimum degree
ordering of a graph, which we analyze in three different settings.

First, we consider the case where the minimum degree at each step is bounded.
In this case
we choose a fixed number of $\ell_0$-sketches and keep track of every 
minimizer of a vertex over all of the sketch copies.
Note that we can always use $n$ as an upper bound on the minimum fill degree.

\begin{theorem}
\label{thm:BoundedDegreeAlgo}
There is an algorithm \textsc{DeltaCappedMinDegree} that,
when given a graph with a lexicographically-first
min-degree ordering whose minimum degree is always bounded by $\Delta$,
outputs this ordering with high probability in expected time $O(m \Delta \log^3 n)$
and uses space $O(m \Delta \log n)$.
\end{theorem}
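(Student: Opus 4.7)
The plan is to maintain $k = \Theta(\Delta \log n)$ independent copies of the dynamic $\ell_0$-sketch of Theorem~\ref{thm:DataStructureMain}, each seeded with its own key assignment $R_i : V \to [0,1)$. For every remaining vertex $u$, keep the set $M(u) \defeq \bigcup_{i \in [k]} \minimizer_i(u)$ together with its cardinality, and store remaining vertices in a global min-heap keyed lexicographically by $(|M(u)|, u)$. The main loop repeatedly extracts the heap minimum, invokes the pivot routine of all $k$ sketches on that vertex, and for every vertex whose minimizer changed (as reported by the sketch) updates $M$ and its heap entry.

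Correctness follows by induction on the step $t$ along the unique lex-first min-degree ordering $u_1^*, \ldots, u_n^*$. Conditioned on the algorithm having executed $u_1^*, \ldots, u_{t-1}^*$ so far, the current fill graph is exactly $\Gfill{t-1}$, and by the principle of deferred decisions the keys $R_i(v)$ on remaining vertices are still i.i.d.\ uniform on $[0,1)$. Consequently each $\minimizer_i(u)$ is a uniform element of $\Nfill{t-1}(u) \cup \{u\}$, independent across $i$. A two-sided coupon-collector estimate with $k = c \Delta \log n$ and $c$ sufficiently large yields, with probability $1 - n^{-\omega(1)}$, that $|M(u)| = \degfill{t-1}(u) + 1$ whenever $\degfill{t-1}(u) \le \Delta$, and $|M(u)| > \Delta + 1$ whenever $\degfill{t-1}(u) > \Delta$. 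Since $\delta_{t-1} \le \Delta$ by assumption, the heap top is exactly $u_t^*$, and a union bound over all $n$ steps closes the induction with high probability.

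For the complexity, Theorem~\ref{thm:DataStructureMain} maintains a single sketch through all $n$ pivots in expected time $O(m \log^2 n)$ using space $O(m)$. Running $k = \Theta(\Delta \log n)$ independent copies multiplies both bounds by $k$, giving expected time $O(m \Delta \log^3 n)$ and space $O(m \Delta \log n)$ as claimed. Each minimizer change also triggers $O(\log n)$ work to maintain $M(u)$ and the heap, but the total number of such changes is already charged by the sketch data structure, so this overhead is absorbed.

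The main obstacle is the upper tail of the two-sided coupon-collector estimate: showing that $k$ uniform samples from a set of size exceeding $\Delta + 1$ yield strictly more than $\Delta + 1$ distinct elements with high probability. The lower-bound direction (coverage of small neighborhoods) is immediate, but the upper-bound direction requires a union bound over the $\binom{d+1}{\Delta+1}$ subsets of size $\Delta + 1$, combined with $((\Delta+1)/(d+1))^k \le n^{-\omega(1)}$ at $k = \Theta(\Delta \log n)$. Beyond that, some care is needed in the deferred-decisions step, since the algorithm's past pivots are a function of sketch outputs; the argument relies on the fact that, conditioned on no prior error, those pivots form a fixed sequence and so do not leak information about the remaining keys.
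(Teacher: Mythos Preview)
Your proposal is correct and mirrors the paper's proof essentially point-for-point: $k=\Theta(\Delta\log n)$ copies of the sketch from Theorem~\ref{thm:DataStructureMain}, a per-vertex set of minimizers, a global structure keyed by its cardinality (the paper uses a balanced BST rather than a heap, but this is cosmetic), and the two-sided coupon-collector argument split exactly as you outline (the paper's Lemmas~\ref{lem:approxdegree1} and~\ref{lem:approxdegree2}, the latter via the $\binom{d+1}{\Delta+1}$ union bound you identify as the main obstacle). The complexity accounting is also the same: $O(m\log^2 n)$ per sketch from Theorem~\ref{thm:DataStructureMain}, times $k$, plus $O(\log n)$ overhead per reported minimizer change, which is dominated.

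One point worth tightening: the deferred-decisions phrasing is not quite right. Conditioning on the algorithm having pivoted $u_1^*,\ldots,u_{t-1}^*$ \emph{does} constrain the key values (the algorithm observed the minimizers, which depend on all keys), so the keys on remaining vertices are not conditionally i.i.d.\ uniform. The clean argument---which you effectively identify in your final sentence---avoids conditioning altogether: for each $t$, define the good event $A_t$ that the cardinality bounds hold in the \emph{fixed} fill graph $\Gfill{t-1}$ determined by the lex-first ordering (this graph depends only on the input, not on the randomness), take a single union bound over all $n$ steps to get $\Pr[\bigcap_t A_t]\ge 1-n^{-\Omega(1)}$, and then verify by a deterministic induction that on $\bigcap_t A_t$ the algorithm's trajectory coincides with the target sequence. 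The paper is terse here but this is what it is doing.
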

	
Next, we relax the bound on the minimum degrees over all steps of the algorithm
and allow the time and space complexity to be output sensitive by
adaptively increasing the number of $\ell_0$-sketches
as the algorithm progresses.

\begin{theorem}
\label{thm:OutputSensitiveAlgo}
There is an algorithm \textsc{OutputSensitiveMinDegree} that,
when given a graph with a lexicographically-first
min-degree sequence $(\delta_1, \delta_2, \dots, \delta_n)$,
outputs this ordering with high probability in expected time
$O(m \cdot \max_{t \in [n]} \delta_t \cdot \log^3 n)$
and uses space $O(m \cdot \max_{t \in [n]} \delta_t \cdot \log n)$.
\end{theorem}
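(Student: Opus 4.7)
The plan is to reduce the output-sensitive setting to the bounded-degree algorithm of Theorem~\ref{thm:BoundedDegreeAlgo} via a doubling trick on the number of maintained $\ell_0$-sketches. I would keep a current guess $\Delta$ (initially a small constant) and run the bounded-degree algorithm with $K(\Delta) = \Theta(\Delta \log n)$ independent $\ell_0$-sketches. By the coupon-collector argument implicit in Theorem~\ref{thm:BoundedDegreeAlgo}, with high probability every remaining vertex whose fill degree is at most $\Delta$ has all of its $\Gfill{}$-neighbors appear as minimizers across these $K(\Delta)$ sketches, so its exact degree is readable from the sketches. Conversely, if every remaining vertex produces strictly more than $\Delta$ distinct minimizers, we know $\delta_t > \Delta$ and must enlarge the sketch collection.

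Concretely, at each step $t$ I would query the sketches for the remaining vertex with the fewest distinct minimizers, breaking ties lexicographically. If this count is at most $\Delta$, it is exactly the minimum fill-degree vertex in $\Gfill{t-1}$, and we pivot it using the update procedure of Section~\ref{sec:DynamicGraphs}. Otherwise, we repeatedly double $\Delta \gets 2\Delta$ and instantiate $K(\Delta) - K(\Delta/2) = \Theta(\Delta \log n)$ additional $\ell_0$-sketches directly on the current component graph $\Gcomp{t-1}$, re-querying after each enlargement until the minimum distinct-minimizer count drops at or below $\Delta$; we then pivot and move on. Because $\Delta$ is monotone nondecreasing, the final value satisfies $\Delta \leq 2 \max_{t \in [n]} \delta_t$, and thus the total number of sketches ever created is $O(\max_t \delta_t \cdot \log n)$.

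For the runtime, initializing one fresh $\ell_0$-sketch on any intermediate component graph costs $O(m)$ (the component graph has at most $m$ edges), and its subsequent maintenance across the remaining pivots is $O(m \log^2 n)$ amortized by the potential-function argument outlined in Section~\ref{subsec:OverviewSketching}. Summing over all sketches gives total expected time $O(\max_t \delta_t \cdot \log n \cdot m \log^2 n) = O(m \cdot \max_t \delta_t \cdot \log^3 n)$, and the same factor $O(\max_t \delta_t \cdot \log n)$ times the $O(m)$ storage per sketch yields the stated space bound. The overhead of finding the min-count vertex per step can be absorbed by keeping a global priority queue keyed on the distinct-minimizer count, updated lazily from the per-sketch heap notifications of Section~\ref{sec:DynamicGraphs}.

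The main obstacle I anticipate is controlling the cost of retroactively building new sketches on mid-run states: a naive accounting could suggest that each doubling event re-pays for rebuilding \emph{all} prior sketches. The key point to nail down is that at every doubling the number of \emph{newly instantiated} sketches is only a constant factor more than the number that already existed, so a geometric summation over doublings bounds the total population of sketches by $O(\max_t \delta_t \cdot \log n)$, and each such sketch contributes $O(m \log^2 n)$ work total (initialization plus maintenance). A secondary subtlety is that a single pivot step may trigger several consecutive doublings when $\delta_t$ jumps upward; this is harmless because the total number of doublings across the entire execution is $O(\log n)$, and each doubling charges only sketch-construction cost on the then-current component graph, which is already included in the overall $O(m \cdot \max_t \delta_t \cdot \log^3 n)$ budget.
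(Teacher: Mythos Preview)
Your proposal is correct and follows essentially the same approach as the paper: adaptively double the sketch budget whenever the current minimum distinct-minimizer count exceeds the threshold, so that the final cap is at most $2\max_t \delta_t$, and then invoke the analysis of Theorem~\ref{thm:BoundedDegreeAlgo}. The paper's own proof is a single sentence pointing to exactly this reduction; your write-up actually spells out more of the mechanics (in particular the need to instantiate fresh sketches on the current component graph and the geometric-sum accounting over doublings) than the paper does.
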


Lastly, we modify the algorithm to compute an
approximate minimum degree vertex at each step.
By maintaining $\Theta(\log n \epsilon^{-2})$ copies
of the $\ell_0$-sketch data structure, we are able to
accurately approximate the 1-degree of a vertex using the
$(1-1/e)$-th\footnote{Note that we use $e$ to refer to the base of the natural logarithm.}
order statistic of the key values of its minimizers.
We abstract this idea using the following approximate degree data structure,
which when given an elimination ordering directly leads to a nearly-linear time
algorithm.

\begin{theorem}
\label{thm:ApproxDegreeDS}
There is a data structure $\variable{ApproxDegreeDS}$
that supports the following methods:
\begin{itemize}
  \item $\textsc{ApproxDegreeDS\_Pivot}(u)$, which pivots a remaining vertex $u$.
  \item $\textsc{ApproxDegreeDS\_Report}()$, which provides
  balanced binary search tree (BST) containers
  $V_1, V_2, \dots, V_{B}$
  such that all the vertices in the bucket $V_{i}$ have 1-degree in the range
  \[
    \left[
    \left( 1 + \epsilon \right)^{i - 2},
    \left( 1 + \epsilon \right)^{i + 2}
    \right].
  \]
\end{itemize}
  
\noindent
The memory usage of this data structure is $O(m \log{n} \epsilon^{-2})$.
Furthermore, if the pivots are picked independently from the
randomness used in this data structure
(i.e., we work under the oblivious adversary model) then:
\begin{itemize}
  \item The total cost of all the calls to
     $\textsc{ApproxDegreeDS\_Pivot}$ is bounded by $O(m \log^3{n} \epsilon^{-2})$.
  \item
    The cost of each call to $\textsc{ApproxDegreeDS\_Report}$
  is bounded by $O(\log^2{n} \epsilon^{-1})$.
\end{itemize}
\end{theorem}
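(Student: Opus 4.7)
}
My plan is to instantiate $k = \Theta(\log{n}\, \epsilon^{-2})$ independent copies of the dynamic $\ell_0$-sketch from Definition~\ref{def:Sketch}, maintained under vertex eliminations via the subroutine $\textsc{PivotVertex}$ sketched in Section~\ref{subsec:OverviewSketching} (and formalized as Theorem~\ref{thm:DataStructureMain} in Section~\ref{sec:DynamicGraphs}). For each remaining vertex $u$ and each sketch $i \in [k]$, the data structure stores the key $R_i(\minimizer_i(u))$ of $u$'s current minimizer in the $i$-th sketch. The degree estimator for $u$ is a fixed order statistic of these $k$ values --- I would use the $(1-1/e)$-th order statistic $\rho(u)$ as indicated in the theorem discussion --- and the reported estimate of the $1$-degree $\deg^{+}(u)+1$ is $1/\rho(u)$. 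The bucketing is then defined by placing $u$ into the BST $V_i$ whose index satisfies $(1+\epsilon)^{i-1} \le 1/\rho(u) < (1+\epsilon)^i$; the number of nonempty buckets is at most $B = O(\log_{1+\epsilon} n) = O(\epsilon^{-1} \log n)$.

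Next I would establish the approximation guarantee. Since each $R_i(v)$ is i.i.d.\ uniform in $[0,1)$ and $\minimizer_i(u)$ is the argmin over the $d+1$ elements of $u$'s $1$-neighborhood (where $d=\deg^{+}(u)$), the quantity $R_i(\minimizer_i(u))$ is distributed as the minimum of $d+1$ uniforms. A standard Chernoff/concentration argument on the $(1-1/e)$-th order statistic of $k = \Theta(\log n\, \epsilon^{-2})$ such samples (as developed for $\ell_0$-sketches in \cite{Cohen97} and deferred to Appendix~\ref{sec:SketchingProofs}) shows that $1/\rho(u)$ is within a $(1\pm \epsilon/5)$ factor of $d+1$ with probability $1-n^{-c}$, for any desired constant $c$. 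A union bound over all vertices and all times a vertex's estimate changes (at most $O(m)$ across the whole algorithm by the amortized analysis below) guarantees correctness with high probability, and the $\pm 2$ slack in the bucket intervals absorbs these multiplicative errors.

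For the memory and update costs I would directly reuse the amortized analysis behind Theorem~\ref{thm:DataStructureMain}. Each individual sketch requires $O(m)$ space for the component-graph heaps plus $O(n)$ key values, giving $O(mk) = O(m \log n\, \epsilon^{-2})$ total. Each call to $\textsc{PivotVertex}$ in a single sketch costs $O(\log^2 n)$ amortized (the $\log n$ from heap operations and $\log n$ from the potential-function argument with $\Phi(G^{\circ}_t) = \sum_u D(u) \log D(u)$ summing to $O(m \log n)$ across the whole elimination); summed over the $k$ sketches this yields $O(m \log^3 n\, \epsilon^{-2})$ total pivot cost. For each such sketch-level update that changes $\minimizer_i(u)$ I would also perform a BST delete/insert to move $u$ between buckets in $O(\log n)$ time, which is absorbed into the same bound.

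The main obstacle will be the report cost of $O(\log^2 n\, \epsilon^{-1})$, since the ``naive'' interpretation --- return the whole list of nonempty buckets with their contents --- is too expensive. My plan is to return only pointers to the $B = O(\epsilon^{-1} \log n)$ BST containers (which is why the interface passes them by reference, per the convention stated at the start of Section~\ref{sec:Preliminaries}), together with a pointer to the lowest nonempty bucket, which can be tracked using an additional $O(\log n)$ predecessor query on a BST over nonempty bucket indices. Each report then costs $O(B \log n) = O(\log^2 n\, \epsilon^{-1})$ for assembling the handles, matching the claimed bound. The correctness of the bucketing together with the pivot and memory accounting then completes the proof.
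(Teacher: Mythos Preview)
Your proposal is correct and follows essentially the same approach as the paper: $k=\Theta(\log n\,\epsilon^{-2})$ independent $\ell_0$-sketches maintained via Theorem~\ref{thm:DataStructureMain}, the $(1-1/e)$-quantile of the $k$ minimizer keys as the degree estimator (justified by Lemma~\ref{lem:minValue} in Appendix~\ref{sec:SketchingProofs}), and the amortized pivot bound inherited from the sketch analysis. The only implementation difference is that the paper keeps a \emph{single} global BST \variable{quantile} keyed by each vertex's order statistic and realizes the buckets on demand by $B$ treap splits at the thresholds $(1+\epsilon)^{-i}$ (hence the $O(B\log n)$ report cost), whereas you maintain $B$ separate bucket BSTs and move vertices between them on each minimizer update; both yield the same bounds.

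Two small inaccuracies worth cleaning up: the number of minimizer updates over which you union-bound is not $O(m)$ but $O(km\log n)$ (Theorem~\ref{thm:DataStructureMain} gives $O(m\log n)$ list size per sketch), and more to the point the correctness event you need is ``Lemma~\ref{lem:minValue} holds for every remaining vertex at every step,'' which is at most $n^2$ events---either way the $n^{-4}$ tail from Lemma~\ref{lem:minValue} suffices. Also, if you keep $B$ BSTs explicitly, returning their handles is $O(B)$, not $O(B\log n)$; the extra $\log n$ is unnecessary in your variant but harmless.
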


\subsection{Computing the Exact Minimum Degree Ordering}
We first consider the case where the minimum degree in each of the
fill graphs $\Gfill{t}$ is at most $\Delta$.  In this case,
we maintain $k = O(\Delta \log n)$ copies of the $\ell_0$-sketch data structure.
By a coupon collector argument, any vertex with degree at most $\Delta$ contains
all of its neighbors in its list of minimizers with high probability.
This implies that for each $t \in [n]$, we can obtain the exact
minimum degree in~$\Gfill{t}$ with high probability.
Figure~\ref{fig:DeltaCappedGlobalVar} briefly describes the data structures we
will maintain for this version of the algorithm.

\begin{figure}[H]
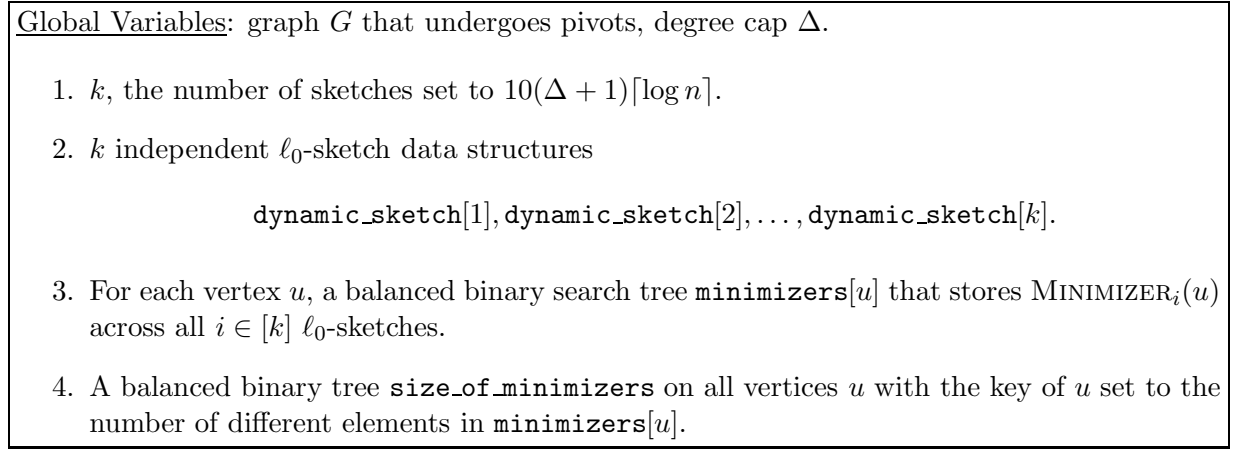

\begin{algbox}
  \underline{Global Variables}:
  graph $G$ that undergoes pivots,
  degree cap $\Delta$.

  \begin{enumerate}
    \item $k$, the number of sketches set to $10(\Delta+1)\lceil\log n\rceil$. 
    \item $k$ independent $\ell_0$-sketch data structures
    \[
      \variable{dynamic\_sketch}[1],
      \variable{dynamic\_sketch}[2],
      \dots,
      \variable{dynamic\_sketch}[k].
    \]

    \item For each vertex $u$, a balanced binary search
      tree $\variable{minimizers}[u]$ that stores $\minimizer_i(u)$
      across all $i \in [k]$ $\ell_0$-sketches.

    \item A balanced binary tree $\variable{size\_of\_minimizers}$
      on all vertices $u$ with the key of $u$ set to the
      number of different elements in $\variable{minimizers}[u]$.
  \end{enumerate}
\end{algbox}
\caption{Global variables for the $\Delta$-capped min-degree algorithm
\textsc{DeltaCappedMinDegree}.}
\label{fig:DeltaCappedGlobalVar}
\end{figure}

Note that if we can efficiently maintain the data structures in
Figure~\ref{fig:DeltaCappedGlobalVar}, then querying the minimum element
in $\variable{size\_of\_minimizers}$ returns the (lexicographically-least) vertex
with minimum degree. Theorem~\ref{thm:DataStructureMain} demonstrates that we
can maintain the $\ell_0$-sketch data structures efficiently.

\begin{theorem}
\label{thm:DataStructureMain}
  Given i.i.d.\ random variables $R(v)$ associated with each vertex
  $v \in \Vfill{t}$,
  there is a data structure $\variable{DynamicSketch}$ that, for each vertex
  $u$, maintains the vertex with minimum~$R(v)$
  among itself and its neighbors in $\Gfill{t}$.
  This data structure supports the following methods:
  \begin{itemize}
    \item $\textsc{QueryMin}(u)$, which returns
      $\minimizer(u)$
      for a remaining vertex $u$ in $O(1)$ time.
    \item $\textsc{PivotVertex}(u)$, which pivots a remaining vertex $u$
    and returns the list of all remaining vertices $v$ whose value
    of
    $\minimizer(v)$
    changed immediately after this pivot.
  \end{itemize}
  The memory usage of this data structure is $O(m)$.
  Moreover, for any choice of key values $R(v)$:
  \begin{itemize}
    \item The total cost of all the pivots is $O(m \log^2{n})$.
    \item 
      The total size of all lists returned by
      $\textsc{PivotVertex}$ over all steps is $O(m \log{n})$.
  \end{itemize}
\end{theorem}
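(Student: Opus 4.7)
The plan is to maintain one mergeable balanced BST $H_v$ per vertex $v$ of the component graph $\Gcomp{}$. For a component vertex $x$, $H_x$ stores the pairs $(w, R(w))$ for each remaining neighbor $w$ of $x$. For a remaining vertex $u$, $H_u$ stores $R(u)$ together with one entry per neighbor $v$ of $u$ in $\Gcomp{}$, whose key is $R(v)$ if $v$ is remaining and $\min H_v$ if $v$ is a component. Since the fill $1$-neighborhood of $u$ consists exactly of $u$, the remaining neighbors of $u$ in $\Gcomp{}$, and the remaining neighbors of each of its component neighbors, $\minimizer(u)$ is precisely the argmin at the root of $H_u$, so $\textsc{QueryMin}(u)$ just reads this in $O(1)$.

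For $\textsc{PivotVertex}(u)$, let $X_1, \ldots, X_k$ be the component neighbors of $u$ and $W$ its remaining neighbors. I first build the heap $H_X$ of the new merged component $X$ by \emph{small-to-large} merging of the $H_{X_i}$ and then inserting $(w, R(w))$ for each $w \in W$ not already present. Next, for every remaining neighbor $w$ of $X$ (equivalently, $w \in W \cup \bigcup_i \Nrem{}(X_i) \setminus \{u\}$), I delete the obsolete entries for $u$ and the $X_i$ from $H_w$ and insert a single fresh entry for $X$ with key $\min H_X$. If the root of $H_w$ changes, $w$ is appended to the output list. Correctness is immediate: at the end of the operation, each surviving remaining vertex has exactly one entry per neighbor in the updated component graph, keyed by the appropriate contribution.

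The main obstacle is the amortized running time, which I would split into \emph{merge work} on the $H_X$ side and \emph{propagation work} on the $H_w$ side. For the merge work, the small-to-large rule guarantees that any fixed entry lands in a heap at least twice as large as its previous host, so each entry is moved $O(\log n)$ times over the entire algorithm; at $O(\log n)$ per BST operation and $O(m)$ total live entries, the cumulative merge cost is $O(m \log^2 n)$. This is exactly what the potential $\Phi(\Gcomp{t}) = \sum_{u \in \Vcomp{,t}} D(u) \log D(u)$ described in Section~\ref{subsec:OverviewSketching} tracks: $\Phi$ begins at $O(m \log n)$, and each merge pays for its reinsertions via a proportional increase in $\Phi$ against this budget.

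For the propagation work, $w$ can only enter the output list if the entry for some adjacent component $X$ inside $H_w$ actually changes value, and each such change can be charged either to the merge that caused $\min H_X$ to decrease, or to the elimination of the unique minimizer of $H_X$ that caused it to increase. Both events are controlled by the same potential and hence sum to $O(m \log n)$ across the entire execution, yielding the claimed total list size. Multiplying by $O(\log n)$ per BST update gives $O(m \log^2 n)$ for propagation, matching the overall time bound. The $O(m)$ memory bound follows because at any moment the combined number of heap entries is at most twice the edge count of the current component graph, which never exceeds $m$.
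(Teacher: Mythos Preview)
Your data structure is essentially the same as the paper's (the paper calls $H_x$ the heap $\variable{remaining}[x]$ and $H_u$ the heap $\variable{fill}[u]$), but your $\textsc{PivotVertex}$ does something the paper carefully avoids: after forming the merged component $X$, you visit \emph{every} remaining neighbor $w$ of $X$ and rewrite its entry in $H_w$. That step alone is not $O(m\log^2 n)$ in total. Take a star with center $c$ and $n-1$ leaves, and pivot $c$ first; now $c$ is a component adjacent to all $n-1$ leaves. Each subsequent pivot of a leaf $v_i$ produces a new component $X$ whose remaining neighborhood is all surviving leaves, so your algorithm performs $\Theta(n)$ BST updates per pivot and $\Theta(n^2)$ overall, while $m=n-1$. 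The potential $\Phi(\Gcomp{})=\sum_{u\in\Vcomp{}}D(u)\log D(u)$ does not pay for this work: $\Phi$ is defined in terms of original degrees and is unchanged by the deletion of $v_i$ from $H_X$, so it cannot be charged for the ``elimination of the unique minimizer'' events you mention. Your bound on the output-list size fails for the same reason and the same example (with $R$ values chosen so that the minimum leaf is always the next one pivoted).

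The paper's $\textsc{PivotVertex}$ informs remaining neighbors \emph{only} when the minimum of a component heap actually changes, and it splits the analysis into two parts. For $\textsc{Meld}(v,w)$, only the side whose minimum strictly decreases is informed, and Lemma~\ref{lem:remaining-updates} bounds the \emph{expected} number of such updates by $2|\Nrem{}(v)||\Nrem{}(w)|/(|\Nrem{}(v)|+|\Nrem{}(w)|)$; the potential argument (Lemma~\ref{lem:potential-increase}) then telescopes this to $O(m\log n)$. For the deletion of $R(v)$ from $\variable{remaining}[w]$, $\textsc{InformRemaining}$ is called only when $R(v)$ was the minimum there, which happens with probability $1/|\Nrem{}(w)|$ over the random $R$ values, giving $O(1)$ expected cost per edge (Lemma~\ref{lem:RemainingUpdatesDueToDeletion}). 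Both arguments are genuinely probabilistic; despite the theorem's ``for any choice of key values'' wording, the paper's proof delivers expected bounds under the assumption that the pivot sequence is independent of the $R$ values, and a deterministic bound of the stated form is false by the star example above. Your small-to-large analysis of the component-heap merges is fine, but the propagation side needs the lazy-inform mechanism and the randomness of $R$ to go through.
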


\noindent
This theorem relies on intermediate data structures described in
Section~\ref{sec:DynamicGraphs}, so we defer the proof until the end of that
section.
Note that this \variable{DynamicSketch} data structure will be essential
to all three min-degree algorithms.

Now consider a sketch of $G^+$ and a vertex $u$ with degree
$\degfill{}(u) \leq \Delta$.  By symmetry of the~$R(v)$ values, each vertex in
$\Nfill{}(u)\cup\{u\}$ is the minimizer of $u$ with probability $1 / (\degfill{}(u)+1)$.
Therefore, if we maintain $O(\Delta \log{n})$ independent $\ell_0$-sketches, we
can ensure that we have an accurate estimation of the minimum fill degree
with high probability.
The pseudocode for this routine is given in Figure~\ref{fig:DeltaCapped}.
We formalize the probability guarantees in Lemma~\ref{lem:approxdegree1} and
Lemma~\ref{lem:approxdegree2},
which are essentially a restatement of \cite[Theorem 2.1]{Cohen97}.

\begin{figure}[H]
\begin{algbox}
$\textsc{DeltaCappedMinDegree}(G, \Delta)$

\underline{Input}: graph $G=(V,E)$, threshold $\Delta$.

\underline{Output}: exact lexicographically-first min-degree ordering
$(u_1, u_2, \dots, u_n)$.

\begin{enumerate}
\item For each step $t=1$ to $n$:
  \begin{enumerate}
    \item Set $u_t \leftarrow \min(\variable{size\_of\_minimizers})$.
    \item $\textsc{DeltaCappedMinDegree\_Pivot}(u_t)$.
    \end{enumerate}
  \item Return $(u_1,u_2,\dots,u_n)$.
\end{enumerate}
\end{algbox}

\begin{algbox}
$\textsc{DeltaCappedMinDegree\_Pivot}(u)$

\underline{Input}:
vertex to be pivoted $u$.

\underline{Output}:
updated global state.

\begin{enumerate}

\item For each sketch $i=1$ to $k$:
\begin{enumerate}
\item $\left(v_1, v_2, \ldots, v_{\ell} \right)
  \leftarrow \variable{dynamic\_sketch}[i].\textsc{PivotVertex}(u)$,
  the set of vertices in the $i$-th sketch whose minimizers changed after pivoting
    out $u$.
\item For each $j=1$ to $\ell$:
\begin{enumerate}
\item Update the values corresponding to sketch $i$
  in $\variable{minimizers}[v_j]$.
\item Update the entry for $v_j$ in
$\variable{size\_of\_minimizers}$ with the
size of $\variable{minimizers}[v_j]$.
\end{enumerate}
\end{enumerate}
\end{enumerate}

\end{algbox}

\caption{Pseudocode for the exact $\Delta$-capped min-degree algorithm,
which utilizes the global data structures for \textsc{DeltaCappedMinDegree}
defined in Figure~\ref{fig:DeltaCappedGlobalVar}.}
\label{fig:DeltaCapped}
\end{figure}

\begin{lemma}\label{lem:approxdegree1}
With high probability,
for all remaining vertices $u$ such that
$\degfill{}(u) \le 2\Delta$ we have
\[
  \variable{size\_of\_minimizers}[u]=\degfill{}(u)+1.
\]
\end{lemma}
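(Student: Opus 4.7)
The plan is to reduce the statement to a simple coupon-collector argument over the $k$ independent sketches, and then to apply a union bound. Fix a remaining vertex $u$ with $d \defeq \degfill{}(u) \le 2\Delta$, and write $N \defeq \Nfill{}(u) \cup \{u\}$, so that $|N| = d+1$. By Theorem~\ref{thm:DataStructureMain}, for each sketch $i \in [k]$ the quantity $\minimizer_i(u)$ is exactly $\argmin_{v \in N} R_i(v)$, where the keys $R_i(v)$ are i.i.d.\ uniform on $[0,1)$. By symmetry of the key distribution, conditional on $N$ the identity of $\minimizer_i(u)$ is uniform over $N$, and since the $k$ sketches use independent keys, the $\minimizer_i(u)$ are i.i.d.\ uniform over $N$.

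The quantity $\variable{size\_of\_minimizers}[u]$ is the number of distinct values in $\{\minimizer_i(u) : i \in [k]\}$, which is at most $|N| = d+1$, with equality iff every element of $N$ appears as $\minimizer_i(u)$ for some $i$. For any fixed $v \in N$,
\[
  \Pr\bigl[v \notin \{\minimizer_i(u)\}_{i \in [k]}\bigr]
    = \left(1 - \tfrac{1}{d+1}\right)^{k}
    \le \exp\!\left(-\tfrac{k}{d+1}\right).
\]
Using $d+1 \le 2\Delta + 1 \le 2(\Delta+1)$ and $k = 10(\Delta+1)\lceil \log n \rceil$, the exponent is at most $-5\log n$, so this probability is at most $n^{-5}$. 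Taking a union bound over the at most $2\Delta + 1 \le 2n$ vertices $v \in N$ shows that $\variable{size\_of\_minimizers}[u] = d+1$ fails with probability at most $2n^{-4}$.

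Finally, I would union bound over the at most $n$ choices of $u$ satisfying $\degfill{}(u) \le 2\Delta$, giving a total failure probability of at most $2 n^{-3}$, which is $o(1/\text{poly}(n))$ as required. The only real subtlety, and the step I would be most careful about, is the symmetry claim that $\minimizer_i(u)$ is uniform over $N$: this uses the fact that the $R_i(v)$ for $v \in N$ are i.i.d.\ and almost surely distinct, and that $\minimizer_i(u)$ depends only on this finite collection of keys, not on any keys outside $N$. Everything else is a direct plug-in of the coupon-collector and union-bound estimates.
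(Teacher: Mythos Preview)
Your proposal is correct and follows essentially the same approach as the paper: both arguments bound, for a fixed $u$ and a fixed $v\in \Nfill{}(u)\cup\{u\}$, the probability that $v$ never appears as a minimizer across the $k$ independent sketches by $(1-1/(d+1))^k \le n^{-5}$, and then apply two union bounds (over $v$ and over $u$). The only cosmetic differences are that you make the symmetry/uniformity of $\minimizer_i(u)$ explicit and bound $|N|$ by $2\Delta+1$ rather than by $n$; the structure and constants line up with the paper's proof.
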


\begin{proof}
The only way we can have
$\variable{size\_of\_minimizers}[u] < \degfill{}(u)+1$ is if at least
one neighbor of $u$ or~$u$ itself is not present in
$\variable{minimizers}[u]$.  Let $v$ be an arbitrary vertex in
$\Nfill{}(u) \cup\{u\}$.
The probability of $v$ not being the minimizer in any of the
$k=10(\Delta+1)\lceil\log n \rceil$ sketches is
\begin{align*}
  \prob{}{\minimizer_i(u) \ne v \text{ for all $i \in [k]$}} 
    &= \left(1-\dfrac{1}{\degfill{}(u)+1} \right)^{k}\\
    &\le \left(1-\dfrac{1}{2\Delta+1}\right)^{10(\Delta+1)\log{n}} \\
    &\le \exp\left(-\frac{10(\Delta+1)\log n}{2\Delta + 1}\right) \\
    &\le \frac{1}{n^{5}}.
\end{align*}

\noindent
We can upper bound the probability that there exists a
vertex $v \in \Nfill{}(u) \cup \{u\}$ not in $\variable{minimizers}[u]$
using a union bound. It follows that
\begin{align*}
  \prob{}{\variable{size\_of\_minimizers}[u] < \degfill{}(u)+1} &\le 
    |\Nfill{}(u)\cup\{u\}| \cdot \prob{}{\minimizer_i(u) \ne v \text{ for all $i \in [k]$}}\\
    &\le \frac{1}{n^{4}}.
\end{align*}
Using a second union bound for the event that there exists
a vertex $u \in \Vfill{}$ such that $\degfill{}(u) \le 2\Delta$ and
$\variable{size\_of\_minimizers}[u] < \degfill{}(u)+1$ completes the proof.
\end{proof}

\begin{lemma}\label{lem:approxdegree2}
With high probability,
for all remaining vertices $u$ with $\degfill{}(u) > 2\Delta$
we have
\[
\variable{size\_of\_minimizers}[u] > \Delta+1.
\]
\end{lemma}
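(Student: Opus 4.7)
The plan is to view the $k=10(\Delta+1)\lceil\log n\rceil$ values $\minimizer_1(u),\dots,\minimizer_k(u)$ as $k$ independent samples drawn uniformly at random from $\Nfill{}(u) \cup \{u\}$. Independence across sketches follows because the $k$ copies of $\variable{dynamic\_sketch}$ use independent key sequences $R_i(\cdot)$, and uniformity within a single sketch is the standard fact that the $\argmin$ of i.i.d.\ $\text{Uniform}[0,1)$ variables is uniform over the indexing set. Under this view, $\variable{size\_of\_minimizers}[u]$ is exactly the number of distinct elements appearing among $k$ i.i.d.\ samples from a ground set of size $N := \degfill{}(u) + 1 \ge 2\Delta + 2$.

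The key observation is that $\variable{size\_of\_minimizers}[u] \le \Delta+1$ forces every one of the $k$ samples to fall inside some subset $T \subseteq \Nfill{}(u)\cup\{u\}$ of size exactly $\Delta+1$. For any fixed such $T$, one sample lies in $T$ with probability $(\Delta+1)/N \le 1/2$, so by independence across sketches the probability that all $k$ samples land in $T$ is at most $(1/2)^k$. Union bounding over the $\binom{N}{\Delta+1} \le n^{\Delta+1}$ choices of $T$ yields
\[
\prob{}{\variable{size\_of\_minimizers}[u] \le \Delta+1} \le n^{\Delta+1} \cdot 2^{-10(\Delta+1)\log n}.
\]
Since $\log$ denotes the natural logarithm throughout the paper (as witnessed by the computation $e^{-5\log n}=n^{-5}$ in the proof of Lemma~\ref{lem:approxdegree1}), the right-hand side equals $n^{(1-10\log 2)(\Delta+1)} \le n^{-5(\Delta+1)} \le n^{-5}$. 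A final union bound over the at most $n$ remaining vertices with $\degfill{}(u) > 2\Delta$ upgrades this to a high-probability statement.

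The only mild subtlety, compared with Lemma~\ref{lem:approxdegree1}, is identifying the correct failure event to union bound over: here the bad event is that a \emph{collective} subset of $\Delta+1$ vertices captures the minimizer of every sketch, rather than the simpler event that a single fixed vertex is missed in all sketches. Consequently the union bound runs over $\binom{N}{\Delta+1}$ subsets rather than $N$ vertices, and the hypothesis $\degfill{}(u) > 2\Delta$ is used precisely to ensure each per-subset probability is at most $(1/2)^k$. Everything else is a routine substitution using the chosen value of $k$.
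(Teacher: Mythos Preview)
Your proof is correct and follows essentially the same approach as the paper. The only cosmetic difference is that you union bound over subsets $T$ of size $\Delta+1$ that \emph{contain} all minimizers, whereas the paper union bounds over the complementary subsets $S$ of size $\degfill{}(u)-\Delta$ that are \emph{disjoint} from the minimizers; the per-subset probability and the binomial count are identical under this complementation, and both arguments invoke the hypothesis $\degfill{}(u)>2\Delta$ at the same point to get the $1/2$ bound on the per-sample probability.
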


\begin{proof}
We first upper bound the probability of the event
$\variable{size\_of\_minimizers}[u] \le \Delta + 1$.
Let $S$ be any subset of $\Nfill{}(u)\cup\{u\}$ of size
$\degfill{}(u) - \Delta > 0$.
Using the assumption that $\degfill{}(u) \ge 2\Delta + 1$,
\begin{align*}
  \Pr\left[S \cap \variable{minimizers}[u] = \emptyset \right] 
    &= \left(1 - \frac{\degfill{}(u) - \Delta}{\degfill{}(u)+1}\right)^k\\
    &\le \exp\left(-\frac{10\left(\degfill{}(u) - \Delta\right)(\Delta+1)\log{n}}{\degfill{}(u)+1}\right)\\
    &\le \exp\left(-\frac{5\left(\degfill{}(u) + 1\right)(\Delta+1)\log{n}}{\degfill{}(u)+1}\right)\\
    &= \frac{1}{n^{5(\Delta+1)}}.
\end{align*}

\noindent
Next, sum over all choices of the set $S$ and use a union bound.
Note that this over counts events, but this suffices for an upper bound.
It follows that
\begin{align*}
  \Pr\left[\variable{size\_of\_minimizers}[u] \leq \Delta+1\right] & \leq 
  {\degfill{}(u)+1 \choose \degfill{}(u)-\Delta} \cdot \Pr\left[S \cap \variable{minimizers}[u] = \emptyset\right] \\
& \leq {\degfill{}(u)+1 \choose \Delta+1} \cdot \dfrac{1}{n^{5(\Delta+1)}} \\
& \leq \dfrac{n^{\Delta+1}}{n^{5(\Delta+1)}} \\
&\le \dfrac{1}{n^{4}}.
\end{align*}

\noindent
Using a second union bound for the event that there exists a vertex
$u \in \Vfill{}$ such that $\degfill{}(u) > 2\Delta$ and
$\variable{size\_of\_minimizers}[u] \le \Delta + 1$ completes the proof.
\end{proof}

\begin{proof}[Proof of Theorem~\ref{thm:BoundedDegreeAlgo}.]
The algorithm correctly pivots the minimum degree vertex by
Lemma~\ref{lem:approxdegree1} and Lemma~\ref{lem:approxdegree2}.
For the space complexity, each of the $k$ $\ell_0$-sketch data
structures uses $O(m)$ memory
by Theorem~\ref{thm:DataStructureMain}, and
for each vertex there is a corresponding balanced binary search tree
$\variable{minimizers}$ which uses
$O(k)$ space.
We also have $\variable{size\_of\_minimizers}$, which uses
$O(n)$ space.
Therefore, since we assume $m \ge n$, the total space is
$O(km+nk + n) = O(m \Delta \log{n})$.

For the running time, Theorem~\ref{thm:DataStructureMain} gives a cost
of $O(m \log^2{n})$ across all \textsc{PivotVertex} calls per sketch,
and thus a total cost of $O(m \Delta \log^3{n})$ over all sketches.
Theorem~\ref{thm:DataStructureMain} also states
the sum of~$\ell$ (the length
of the update lists) across all steps is at most $O(m \log{n})$.
Each of these updates leads to two BST updates,
so the total overhead is $O(m \log^2{n})$,
which is an equal order term.
\end{proof}

\subsection{Modifying the Algorithm to be Output Sensitive by Adaptive Sketching}
\label{subsec:AdaptiveSketching}

If we do away with the condition that minimum fill degrees are bounded above by
$\Delta$, then the number of copies of the $\ell_0$-sketch data structure
needed depends only on the values of the minimum fill degree at each
step.
Therefore, we can modify \textsc{DeltaCappedMinDegree} to potentially
be more efficient by adaptively maintaining the required number of sketches.

To accurately estimate degrees in $\Gfill{t}$ we need 
$\Omega(\delta_t \log n)$ copies of the $\ell_0$-sketch data structure,
but we do not know the values of $\delta_t$ a priori.
To rectify this, consider the following scheme that adaptively keeps a sufficient
number of copies of the $\ell_0$-sketch data structures.
First, initialize the value
$c=\delta_0$ (the minimum degree in $G$). Then for each step $t=1$ to $n$ update $c$
according to:
\begin{enumerate}
  \item Let $\delta_t(c)$ be the candidate minimum degree
    in $\Gfill{t}$ using $k=10c\lceil \log n \rceil$ sketches.
\item If $\delta_t(c) > c/2$, then set $c \leftarrow 2c$ and repeat.
\end{enumerate}

The core idea of the routine above is that if the candidate minimum degree is at
most $c/2$, then with high probability the true minimum degree is at most $c$.
It follows that using $O(c \log n)$ sketching data structures
guarantees the minimum degree estimate is correct with high probability. 

\begin{proof}[Proof of Theorem~\ref{thm:OutputSensitiveAlgo}.]
The proof is analogous to that of Theorem~\ref{thm:BoundedDegreeAlgo}.
The upper bound for the minimum degrees is now 
$\Delta = 2 \cdot \max_{t \in [n]} \delta_t$, and so
the time and space complexities follow. 
\end{proof}

\subsection{Computing an Approximate Minimum Degree}
\label{subsec:ApproxDegreeDS}
To avoid bounding the minimum fill degree over all steps and to make the
running time independent of the output, we modify the previous algorithms to
obtain an approximate min-degree vertex at each step.
We reduce the number of $\ell_0$-sketches
and use the reciprocal of the $(1-1/e)$-th order statistic to approximate the
cardinality $\variable{size\_of\_minimizers}[u]$ (and hence the 1-degree of $u$)
to obtain a nearly-linear time approximation algorithm.

There is, however, a subtle issue with the randomness involved with this algorithm.
A necessary condition for the algorithm to succeed as intended is that the
sketches at each step are independent of the past decisions of the algorithm.
Therefore, we must remove all dependencies between previous and current queries.
In Section~\ref{subsec:Correlation} we demonstrate how correlations between
steps can amplify.
To avoid this problem, we must decorrelate the current state of the sketches
from earlier pivoting updates to the data structures.
We carefully address this issue in Section~\ref{sec:Decorrelation}.
Instead of simply selecting a vertex with an approximate min-degree, this
algorithm instead requires access to all vertices whose estimated degree is
within a certain range of values.
Therefore, this approximation algorithm uses a bucketing data structure, as
opposed to the previous two versions that output the vertex to be pivoted.
Figure~\ref{fig:ApproxDegreeDSGlobalVar} describes the global data structures
for this version of the algorithm.

\begin{figure}[H]
  
  \begin{algbox}
   \underline{Global Variables}: graph $G$,
    error tolerance $\epsilon > 0$.
    
    \begin{enumerate}
      \item $k$, the number of sketches set to
        $50 \left\lceil \log{n} \epsilon^{-2} \right\rceil$.
 
      \item $k$ independent $\ell_0$-sketch data structures
      \[
        \variable{dynamic\_sketch}[1],
        \variable{dynamic\_sketch}[2],
        \dots,
        \variable{dynamic\_sketch}[k].
      \]

      \item For each vertex $u$, a balanced binary search
      tree $\variable{minimizers}[u]$ that stores 
      $\minimizer_i(u)$ across all $i \in [k]$ $\ell_0$-sketches,
      and maintains the element in $\variable{minimizers}[u]$ with rank
      \[
        \left\lfloor k \left(1- \dfrac{1}{e} \right)\right\rfloor.
      \]

      \item A balanced binary tree $\variable{quantile}$
      over all vertices $u$ whose key is the
      $\lfloor k \left(1-1/e\right)\rfloor$-ranked element in
        $\variable{minimizers}[u]$.
      
    \end{enumerate}
  \end{algbox}

\caption{Global variables and data structures for $\textsc{ApproxDegreeDS}$,
which returns (implicit) partitions of vertices into buckets with $\epsilon$-approximate degrees.}
\label{fig:ApproxDegreeDSGlobalVar}
\end{figure}

To successfully use fewer sketches, for a given vertex $u$ we estimate the
cardinality of the set of its minimizers via its order statistics
instead of using the exact cardinality as we did before with
the binary search tree $\variable{size\_of\_minimizers}[u]$.
Exploiting correlations in the order statistics of sketches is often the
underlying idea behind efficient cardinality estimation.
In particular, we make use of the following lemma, which is 
essentially a restatement of \cite[Propositions 7.1 and 7.2]{Cohen97}.

\begin{restatable}{lemma}{minValue}
\label{lem:minValue}
Suppose that we have $k$ copies of the $\ell_0$-sketch data structure,
for $k = 50 \left\lceil \log{n} \epsilon^{-2} \right\rceil$.
Let $u$ be any vertex such that $\deg(u)+1 > 2 \epsilon^{-1}$,
and let $Q(u)$ denote the
$\lfloor k \left(1-1/e\right)\rfloor$-ranked
key value in the list $\variable{minimizers}[u]$.
Then, with high probability, we have
\[
  \frac{1 - \epsilon}{\deg(u)+1}
     \leq Q(u) \leq \frac{1 + \epsilon}{\deg(u)+1}.
\]
\end{restatable}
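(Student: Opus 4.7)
The plan is to reduce the claim to a standard concentration statement about order statistics of minima of i.i.d.\ uniform random variables. Fix the vertex $u$, write $d \defeq \deg(u)+1$, and note that for each sketch $i \in [k]$ the key $M_i \defeq R_i(\minimizer_i(u))$ is the minimum of $d$ independent $\text{Uniform}[0,1)$ random variables. Consequently $M_i$ has CDF $F(x)=1-(1-x)^d$, and the $M_i$ are mutually independent across $i$. The quantity $Q(u)$ is the $\lfloor k(1-1/e)\rfloor$-th order statistic of $M_1,\dots,M_k$, so it suffices to control how many $M_i$ fall below the candidate thresholds $(1-\epsilon)/d$ and $(1+\epsilon)/d$.

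For the lower bound $Q(u)\geq (1-\epsilon)/d$, I would compute
\[
p^- \defeq \Pr\bigl[M_i \leq (1-\epsilon)/d\bigr] = 1-\bigl(1-(1-\epsilon)/d\bigr)^d.
\]
Using the expansion $\ln(1-x) = -x - x^2/2 - \cdots$ with $x=(1-\epsilon)/d$ gives $\bigl(1-(1-\epsilon)/d\bigr)^d = e^{-(1-\epsilon)}\bigl(1 - O(1/d)\bigr)$; the hypothesis $d > 2\epsilon^{-1}$ makes the $O(1/d)$ correction at most a constant times $\epsilon$. Combined with $e^{-(1-\epsilon)} \geq e^{-1}(1+\epsilon)$, this yields a bound of the form $p^- \leq (1-1/e) - c\epsilon$ for an absolute constant $c>0$. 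An entirely symmetric computation for the upper bound produces $p^+ \defeq \Pr[M_i \leq (1+\epsilon)/d] \geq (1-1/e) + c'\epsilon$.

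Now let $X^-$ (resp.\ $X^+$) count the number of $M_i$ with $M_i \leq (1-\epsilon)/d$ (resp.\ $M_i \leq (1+\epsilon)/d$). These are binomial sums of $k$ independent indicators with means $kp^-$ and $kp^+$, so by Hoeffding's inequality
\[
\Pr\bigl[|X^\pm - kp^\pm| \geq ck\epsilon/2\bigr] \leq 2\exp\bigl(-\Omega(k\epsilon^2)\bigr) \leq 2\exp(-\Omega(\log n))
\]
since $k = 50\lceil \log n \,\epsilon^{-2}\rceil$. Choosing the constant appropriately, with probability $1-n^{-\Omega(1)}$ we simultaneously have $X^- < \lfloor k(1-1/e)\rfloor \leq X^+$, which forces $(1-\epsilon)/d < Q(u) \leq (1+\epsilon)/d$, as desired.

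The main obstacle I anticipate is simply bookkeeping in the Taylor step: one must verify that the constant-order gap between $p^\pm$ and $1-1/e$ is genuinely linear in $\epsilon$ (and not swamped by the $O(1/d)$ correction), which is exactly where the hypothesis $d > 2\epsilon^{-1}$ is used. Everything else is routine concentration, and a union bound over the at most $n$ remaining vertices yields the claim with high probability uniformly over $u$.
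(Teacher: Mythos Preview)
Your proposal is correct and follows essentially the same route as the paper: both arguments compute the probability that a single sketch's minimum lands above (or below) the target threshold, show that this probability differs from $1/e$ (equivalently, $1-1/e$ in your complementary formulation) by $\Theta(\epsilon)$ via the Taylor bound $(1-(1\pm\epsilon)/d)^d = e^{-(1\pm\epsilon)}(1-O(1/d))$, and then apply Hoeffding to $k=\Theta(\log n\,\epsilon^{-2})$ independent indicators. The paper merely packages the numerical step as a separate lemma and works with the complementary indicators $\mathbf{1}[M_i \geq t]$ rather than $\mathbf{1}[M_i \leq t]$, but the content and use of the hypothesis $d>2\epsilon^{-1}$ are identical.
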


\noindent
In \cite{Cohen97} they assume that the random keys $R(v)$ are drawn from the
exponential distribution (and hence the minimum key value is also), whereas
we assume that $R(v)$ is drawn independently from the uniform distribution.
When $\deg(u)$ is large enough though, the minimum of $\deg(u)$ random variables
from either distribution is almost identically distributed.
For completeness, we prove Lemma~\ref{lem:minValue} when the keys $R(v)$
are drawn from the uniform distribution in Appendix~\ref{sec:SketchingProofs}.

This idea leads to the following subroutine for providing implicit access to
all vertices with approximately the same degree. This is critical for our
nearly-linear time algorithm, and we explain its intricacies in
Section~\ref{sec:Decorrelation}. The pseudocode for this subroutine is given in
Figure~\ref{fig:approxdegree}.

\begin{figure}[H]
  \begin{algbox}
    $\textsc{ApproxDegreeDS\_Pivot}(u)$
    
    \underline{Input}: vertex to be pivoted, $u$.
    
    \underline{Output}: updated global state.
    
    \begin{enumerate}
      \item For each sketch $i=1$ to $k$:
         \begin{enumerate}
            \item $\left(v_1, v_2, \dots, v_{\ell} \right)
              \leftarrow \variable{dynamic\_sketch}[i].\textsc{PivotVertex}(u)$,
              the set of vertices in the $i$-th sketch whose minimizers changed
              after we pivot out $u$.
            \item For each $j=1$ to $\ell$:
              \begin{enumerate}
                \item Update the values corresponding to sketch $i$ in
                   $\variable{minimizers}[v_j]$,
                   which in turn updates its
                   $\lfloor k(1 - 1/e) \rfloor$-ranked quantile.
                \item Update the entry for $v_j$ in
                  $\variable{quantile}$ with the new value of the
                  $\lfloor k(1 - 1/e) \rfloor$-ranked quantile of
                  $\variable{minimizers}[v_j]$.
    \end{enumerate}
    \end{enumerate}
    \end{enumerate}
  \end{algbox}
  
  \begin{algbox}
    $\textsc{ApproxDegreeDS\_Report()}$
    
    \underline{Output}: approximate bucketing of the vertices
    by their fill 1-degrees.
    
    \begin{enumerate}
      \item For each $i=0$ to $B = O(\log{n} \epsilon^{-1})$:
      \begin{enumerate}
        \item Set $V_i$ to be the split binary tree
        in $\variable{quantile}$ that contains all
        nodes with $\lfloor k(1 - 1/e) \rfloor$-ranked quantiles in the range
      \[
        \left[\left(1 + \epsilon\right)^{-(i + 1)}, \left(1 + \epsilon\right)^{-i}\right].
      \]
      \end{enumerate}
     \item Return $(V_1, V_2, \dots, V_{B})$.
    \end{enumerate}
    
  \end{algbox}
  
  \caption{Pseudocode for the data structure that returns pointers to binary
  trees containing partitions of the remaining vertices into sets with
  $\epsilon$-approximate degrees.
  }
  \label{fig:approxdegree}
\end{figure}

Observe that because 1-degrees are bounded by $n$,
whenever we call $\textsc{ApproxDegreeDS\_Report}$ we have
$B = O(\log n \epsilon^{-1})$ with high probability by
Lemma~\ref{lem:minValue}.
Therefore, this data structure can simply return pointers
to the first element in each of the partitions $V_1, V_2, \dots, V_B$.

\begin{proof}[Proof of Theorem~\ref{thm:ApproxDegreeDS}.]
By construction,
all vertices in $V_i$ have their $\lfloor k(1 - 1/e) \rfloor$-ranked quantile
in the range 
\[
  \left[(1+\epsilon)^{-(i+1)},(1+\epsilon)^{-i}\right].
\]
By Lemma~\ref{lem:minValue}, the 1-degree of any vertex in bucket $V_i$
lies in the range
\[
  \left[(1-\epsilon)(1+\epsilon)^{i}, (1+\epsilon)^{i+2}\right]
\]
with high probability, which is within the claimed range for $\epsilon \leq 1/2$.

The proof of time and space complexities is similar to that of
Theorem~\ref{thm:BoundedDegreeAlgo}.
Letting the number of sketches $k = O(\log{n} \epsilon^{-2})$ instead of
$O(\Delta \log n)$ proves the space bound.
One of the main differences in this data structure is that
we need to store information about the
$\lfloor k(1 - 1/e) \rfloor$-ranked quantiles.
These queries can be supported in $O(\log{n})$ time by augmenting a balanced
binary search tree with information about sizes of the subtrees in standard
ways (e.g., \cite[Chapter 14]{CLRS3}).
It follows that the total cost of all calls to
\textsc{ApproxDegreeDS\_Pivot} is $O(m \log^3{n} \epsilon^{-2})$.
To analyze each call to \textsc{ApproxDegreeDS\_Report}, we use standard
splitting operations for binary search trees (e.g., treaps~\cite{Seidel96}),
which allows us to construct each bucket in $O(\log{n})$ time.
\end{proof}

\noindent
Note that there will be overlaps between the 1-degree intervals, so determining
which bucket contains a given vertex is ambiguous if its order statistic is
near the boundary of an interval.

An immediate corollary of Theorem~\ref{thm:ApproxDegreeDS} is that
we can provide access to approximate min-degree vertices
for a fixed sequence of updates by always returning
an entry from the first nonempty bucket.

\begin{corollary}
\label{cor:ApproxMindegreeOblivious}
For a fixed elimination ordering $(u_1, u_2, \dots, u_n)$,
we can find $(1 + \epsilon)$-approximate minimum degree vertices
in each of the intermediate states in 
$O(m \log^{3}n\epsilon^{-2})$ time.
\end{corollary}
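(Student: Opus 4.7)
The plan is to invoke Theorem~\ref{thm:ApproxDegreeDS} as a black box, since the hypothesis that the ordering is fixed (and hence independent of the data structure's randomness) puts us squarely in the oblivious adversary model that the theorem requires. I will instantiate $\variable{ApproxDegreeDS}$ with a rescaled error parameter $\epsilon' = \epsilon/5$ (any constant that absorbs the multiplicative slack from the overlapping bucket endpoints will do), then simulate the elimination as follows. At step $t$, I call $\textsc{ApproxDegreeDS\_Report}()$ to obtain the buckets $V_1, V_2, \dots, V_B$, locate the smallest index $i^\star$ with $V_{i^\star} \ne \emptyset$, output any vertex from $V_{i^\star}$ as the approximate minimum-degree vertex, and then call $\textsc{ApproxDegreeDS\_Pivot}(u_t)$ to advance the data structure along the prescribed ordering.

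Correctness follows from the bucket guarantee in Theorem~\ref{thm:ApproxDegreeDS}. A vertex $u \in V_{i^\star}$ has fill $1$-degree at most $(1 + \epsilon')^{i^\star + 2}$, while the true minimum-degree vertex $v^\star$ must belong to some bucket $V_j$ with $j \geq i^\star$, giving a $1$-degree at least $(1 + \epsilon')^{j - 2} \geq (1 + \epsilon')^{i^\star - 2}$. Hence the ratio between the returned degree and the true minimum is at most $(1 + \epsilon')^4 \leq 1 + \epsilon$ for our choice of $\epsilon'$, yielding the required $(1+\epsilon)$-approximation at every intermediate state with high probability via a union bound over the $n$ steps.

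For the running time, Theorem~\ref{thm:ApproxDegreeDS} bounds the aggregate cost of the $n$ pivot calls by $O(m \log^3 n \, \epsilon^{-2})$ and the cost of each report by $O(\log^2 n \, \epsilon^{-1})$. Summed over $n \leq m$ steps the reports contribute $O(m \log^2 n \, \epsilon^{-1})$, which is dominated by the pivot cost. Locating the first nonempty bucket among the $B = O(\log n \, \epsilon^{-1})$ candidates takes $O(\log n \, \epsilon^{-1})$ per step and is likewise subsumed, so the total work is $O(m \log^3 n \, \epsilon^{-2})$ as claimed.

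There is essentially no hard part here: the only content beyond citing Theorem~\ref{thm:ApproxDegreeDS} is the constant-factor rescaling of $\epsilon$ to compensate for the fact that adjacent buckets may overlap by up to a $(1+\epsilon)^4$ factor in their true degree ranges, and the routine observation that a fixed pivot sequence is precisely the oblivious model under which the theorem's pivot cost bound holds.
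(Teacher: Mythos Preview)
Your proposal is correct and follows exactly the approach the paper intends: the paper states the corollary as an immediate consequence of Theorem~\ref{thm:ApproxDegreeDS}, obtained by returning any vertex from the first nonempty bucket, and you have simply spelled out the constant-factor bookkeeping (rescaling $\epsilon$ to absorb the $(1+\epsilon')^4$ slack) and the running-time accounting that the paper leaves implicit.
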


\noindent
It is also possible to adaptively choose the number of sketches
for the $(1+\epsilon)$-approximate minimum degree algorithm by using a
subroutine that is similar to the one in Section~\ref{subsec:AdaptiveSketching}.


\section{Generating Decorrelated Sequences}
\label{sec:Decorrelation}

In this section we present a nearly-linear
$(1 + \epsilon)$-approximate marginal min-degree algorithm.
This algorithm relies on degree approximation via sketching,
as described in Theorem~\ref{thm:ApproxDegreeDS}.
In particular, it uses the randomized data structure $\variable{ApproxDegreeDS}$,
which provides access to buckets of vertices where the $i$-th bucket
contains vertices with fill 1-degree in the range
$[(1+\epsilon)^{i-2}, (1+\epsilon)^{i+2}]$.

\begin{theorem}
\label{thm:ApproxMinDegree}
There is an algorithm $\textsc{ApproxMinDegreeSequence}$
that produces a $(1+\epsilon)$-approximate marginal min-degree ordering
in expected $O(m \log^5 n \epsilon^{-2})$ time with high probability.
\end{theorem}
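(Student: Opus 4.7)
My plan is to instantiate the overview from Section~\ref{subsec:Correlation} by using $\variable{ApproxDegreeDS}$ to produce a short list of candidates per step, breaking ties via exponentially perturbed estimates computed by $\textsc{EstimateFill1Degree}$ on \emph{fresh} randomness. This gives a pivot whose identity depends only on the local estimator's coins, so the adaptive query sequence issued to the sketching structure is, conditionally, a fixed sequence and Theorem~\ref{thm:ApproxDegreeDS} applies.

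First I would describe the per-step procedure. Call $\textsc{ApproxDegreeDS\_Report}$ to obtain buckets $V_1,\dots,V_B$. Starting from the smallest non-empty bucket $V_{i^\star}$, sweep buckets $V_{i^\star},V_{i^\star+1},\dots$ and, within each bucket, draw fresh Exp$(1)$ variables $Y_u$ attached to the vertices; then, using the well-known fact that the top order statistics of exponentials can be generated in decreasing order in $O(1)$ time each, walk the vertices in order of decreasing $Y_u$ and for each evaluate
\[
  f(u) \;\defeq\; \left(1 - \frac{\epsilon \cdot Y_u}{\log n}\right)\cdot \textsc{EstimateFill1Degree}\!\left(u,\frac{\epsilon}{\log n}\right).
\]
Stop scanning a bucket once the ``cap'' $(1+\epsilon/\log n)^{-2}$ bound on possible $f(u)$ in later buckets exceeds the running minimum; this cuts off at $O(\log n/\epsilon)$ buckets past $V_{i^\star}$ and, by the decayed-minimum argument from Section~\ref{subsec:Correlation}, only $O(1)$ vertices per bucket in expectation. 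Pivot the argmin and feed it into $\variable{ApproxDegreeDS}$.

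Next, I would prove correctness. Lemma~\ref{lem:NonZeroColumnEstimator} (applied via the reduction to counting nonzero columns sketched at the start of Section~\ref{sec:DegreeEstimation}) shows $\textsc{EstimateFill1Degree}(u,\epsilon/\log n)$ is within $(1\pm\epsilon/\log n)$ of the true fill 1-degree w.h.p., so $f(u)=(1\pm O(\epsilon/\log n))\,\degfill{}(u)$. Because $1-\epsilon Y_u/\log n\le 1$, the $f$-minimizer has degree at most $(1+O(\epsilon/\log n))\delta$ times the true minimum on the scanned vertices; since the scan starts at the approximately minimum bucket returned by Theorem~\ref{thm:ApproxDegreeDS}, the pivot satisfies Definition~\ref{def:ApproxMinDegree} with slack absorbed into the constant. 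Crucially, the \emph{identity} of the pivot is a function of only $\{Y_u\}$ and the local estimator's internal randomness, which is generated independently of the sketch each step; conditioning on this randomness fixes the update sequence to $\variable{ApproxDegreeDS}$, so its obliviousness-based guarantees in Theorem~\ref{thm:ApproxDegreeDS} hold verbatim.

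For runtime, Theorem~\ref{thm:ApproxDegreeDS} contributes $O(m\log^3 n\,\epsilon^{-2})$ for all pivots and $O(\log^2 n\,\epsilon^{-1})$ per report, totaling $O(m\log^3 n\,\epsilon^{-2})$ across $n$ steps. The per-call cost of $\textsc{EstimateFill1Degree}(u,\epsilon/\log n)$ is $O(\deg_G(u)\log^4 n\,\epsilon^{-2})$ by Lemma~\ref{lem:NonZeroColumnEstimator}. The key accounting step is to charge these: by the memoryless property, conditioned on $u$ being scanned, it has a constant probability of being the pivot, so the expected number of evaluations per pivot is $O(\log n/\epsilon)$ (the bucket cap) times $O(1)$ for the intra-bucket walk, and the total evaluation cost telescopes to $\sum_{t} \deg_G(u_t)\cdot O(\log^4 n\,\epsilon^{-2})\cdot O(\log n/\epsilon)$; against the $m=\sum_u \deg_G(u)$ budget this yields $O(m\log^5 n\,\epsilon^{-2})$.

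The hard part will be the charging argument that keeps the total local-estimation work nearly linear despite the intra-step scanning of multiple buckets: I need the exponential perturbation to simultaneously (i) ensure no single low-degree vertex is queried too many times across steps and (ii) guarantee that the scanned candidate list always contains the true approximate minimizer. Formalizing this requires showing that, conditional on the current graph state and the sketch's report, a random $\Exp(1)$-tilt yields a candidate ordering whose expected prefix length before hitting the pivot is $O(1)$, and that re-drawing $Y_u$ per step (coupled with fresh local-estimator randomness) makes the $n$ steps behave like independent trials for the charging sum, so standard concentration over $m$ edges gives the stated high-probability bound.
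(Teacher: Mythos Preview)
Your overall architecture matches the paper's, but there are two genuine gaps.

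\textbf{Decorrelation is not established.} You assert that the pivot depends only on $\{Y_u\}$ and the fresh estimator coins, but in your scan the candidate set is assembled bucket by bucket, and bucket membership \emph{is} a function of the sketch. Which $O(1)$ vertices carry the largest $Y_u$ in a given bucket changes when the bucket contents change, so a different sketch realization can produce a different candidate list and hence a different $\arg\min f$. The paper closes this by first \emph{defining} the target as the global $\epsilon$-decayed minimizer of $(1-\delta_v)\cdot\textsc{EstimateFill1Degree}(v,\cdot)$ over all remaining vertices (Definition~\ref{def:PerturbedMinDegreeSequence}), a quantity independent of the sketch by construction, and then proving that the bucket-based procedure always recovers exactly this vertex (Lemma~\ref{lem:FindingDecayedMin} for the intra-bucket part and Lemma~\ref{lem:trimmingisokay} for the cross-bucket trimming). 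Without that lemma pair, conditioning on the estimator's coins does not fix the update sequence fed to $\variable{ApproxDegreeDS}$, and the oblivious guarantee of Theorem~\ref{thm:ApproxDegreeDS} cannot be invoked.

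\textbf{The runtime charging does not telescope as written.} The cost of $\textsc{EstimateFill1Degree}(v,\cdot)$ scales with the component-graph degree of $v$, not of the eventual pivot $u_t$; vertices in the same fill-degree bucket can have very different component-graph degrees, so you cannot replace every candidate's cost by $\deg_G(u_t)$. The memoryless property gives the right inequality, but it must be applied per vertex: from $\Pr[v\text{ evaluated at step }t]\le c\cdot\Pr[v\text{ is the pivot at step }t]$ one obtains $\E[\text{cost at step }t]\le c\cdot\E[\text{cost}(u_t)]$, and then $\sum_t \deg^\circ_t(u_t)\le 2m$ telescopes. This already absorbs the ``constant probability'' factor, so your extra $O(\log n/\epsilon)$ multiplier is double-counting (and indeed your own arithmetic yields $\epsilon^{-3}$, not $\epsilon^{-2}$). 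Moreover, the constant-probability premise only holds for candidates that survive the paper's trimming step (Corollary~\ref{corollary:trimGlobalCandidates}); in your description you evaluate $\textsc{EstimateFill1Degree}$ on candidates from every scanned bucket, and a candidate from a bucket far above $V_{i^\star}$ is \emph{not} the pivot with constant probability, so the charging inequality fails for those calls. The fix is to trim using only the cheap surrogate $(1-\delta_v)(1+\hat\epsilon)^j$ \emph{before} any local estimation, exactly as in line~\ref{algline:trimGlobalCandidates} of Figure~\ref{fig:ApproxMinDegreeSequence}.
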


At each step of this algorithm, reporting any member of the first nonempty
bucket gives an approximate minimum degree vertex to pivot.  However, such
a choice must not have any dependence on the randomness used to get to this
step, and more importantly, it should not affect pivoting decisions in future
steps.  To address this issue, we introduce an additional layer of
randomization that decorrelates the $\ell_0$-sketches and the choice of
vertices to pivot.
Most of this section focuses on our technique for
efficiently decorrelating such sequences.

The pseudocode for \textsc{ApproxMinDegreeSequence} is given in
Figure~\ref{fig:ApproxMinDegreeSequence}. This algorithm makes use of the
following global data structures and subroutines.
\begin{itemize}
  \item $\variable{ApproxDegreeDS}$:
    Returns buckets of vertices with approximately equal 1-degrees
    (Section~\ref{subsec:ApproxDegreeDS}).
	\item \textsc{ExpDecayedCandidates}: Takes a sequence of values
    that are within $1 \pm \epsilon$ of each other,
	  randomly perturbs the elements, and returns the new ($\epsilon$-decayed) sequence
    (Section~\ref{subsec:epsilonDecayed}).
  \item \textsc{EstimateFill1Degree}:
    Gives an $\epsilon$-approximation
	  to the 1-degree of any vertex (Section~\ref{sec:DegreeEstimation}).
\end{itemize}

\noindent
We give the formal statement for \textsc{EstimateFill1Degree} in the following result.

\begin{theorem}\label{thm:DegreeEstimation}
There is a data structure that maintains a component graph $G^{\circ}$
under (adversarial) vertex pivots in a total of $O(m \log^2{n})$ time
and supports the operation $\textsc{EstimateFill1Degree}(u, \epsilon)$,
which given a vertex $u$ and
error threshold $\epsilon > 0$,
returns with high probability a $\epsilon$-approximation to the fill 1-degree of
$u$ by making $O(\deg(u) \log^2{n} \epsilon^{-2})$ oracle
queries to $G^{\circ}$.
\end{theorem}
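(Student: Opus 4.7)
The plan is to prove this in two components: maintaining the component graph $G^{\circ}$ under adversarial pivots within the $O(m \log^2 n)$ total time budget, and then implementing $\textsc{EstimateFill1Degree}$ on top of it by reducing to the non-zero column estimator of Lemma~\ref{lem:NonZeroColumnEstimator}.

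For the component graph, I would store, for each component vertex $x \in \Vcomp{}$, its remaining neighborhood $\Nrem{}(x)$ as a balanced binary search tree, and symmetrically, for each remaining vertex $u \in \Vrem{}$, its component neighborhood $\Ncomp{}(u)$ as a BST; each BST caches its size. Pivoting a vertex $v$ proceeds by (i) deleting $v$ from the remaining-neighbor BSTs in which it appears, (ii) merging $v$ together with the components in $\Ncomp{}(v)$ into a new component $x'$ whose $\Nrem{}(x')$ is the union of the $\Nrem{}(\cdot)$ BSTs of its constituents, and (iii) updating each affected remaining vertex $w$'s $\Ncomp{}(w)$ to replace the old components by $x'$. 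The key implementation choice is smaller-into-larger merging: iteratively combine the two BSTs of smallest current size by reinserting the smaller one into the larger. Each original edge $(w, y)$ contributes one element to the BST representation, and that element participates in $O(\log n)$ merges over the life of the algorithm (each doubling the host component's size). Each move costs $O(\log n)$ for a BST insertion plus an $O(\log n)$ symmetric update to the reverse incidence BST on the remaining side, giving $O(m \log^2 n)$ total.

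For $\textsc{EstimateFill1Degree}(u, \epsilon)$, I would define an implicit $(0,1)$-matrix $A$ whose rows are indexed by $\Ncomp{}(u) \cup \{u\}$, whose columns are indexed by $\Vrem{} \setminus \{u\}$, and where $A[y, w] = 1$ iff $w \in \Nrem{}(y)$. By the definition of the fill graph, a remaining $w \neq u$ lies in $\Nfill{}(u)$ precisely when column $w$ of $A$ has some nonzero entry, so $\textsc{NonzeroColumns}(A) = \degfill{}(u)$, and adding one yields a $(1 \pm \epsilon)$-approximation to the fill 1-degree. The three oracle operations required by Lemma~\ref{lem:NonZeroColumnEstimator} are served directly by the BST representation of $G^{\circ}$: $\textsc{RowSize}$ reads a cached size, $\textsc{SampleFromRow}$ performs rank-indexed sampling on a BST, and $\textsc{QueryValue}$ is a BST lookup. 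The row count is $r = \degcomp{}(u) + 1$, and since distinct component neighbors of $u$ in $G^{\circ}$ must contain disjoint sets of original neighbors of $u$, we have $\degcomp{}(u) \leq \deg(u)$. Applying Lemma~\ref{lem:NonZeroColumnEstimator} then gives the claimed $O(\deg(u) \log^2 n \epsilon^{-2})$ oracle query bound with high probability.

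The main obstacle is the amortized analysis of the component graph maintenance under adversarial pivots: a single pivot can simultaneously merge an unbounded number of components, so the multi-way merge must be carefully serialized into pairwise smaller-into-larger operations so that the potential-based charging goes through. A secondary subtlety is that the reverse updates to $\Ncomp{}(w)$ for each affected remaining vertex $w$ must be charged against the same edge-moving potential rather than against the raw count of affected vertices; this works because each such update corresponds to scanning an element of a smaller-side BST during a merge, which the smaller-into-larger argument already bounds.
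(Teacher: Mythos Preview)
Your proposal is correct and follows essentially the same approach as the paper: maintain $G^{\circ}$ via balanced BSTs with smaller-into-larger merging (this is exactly the content of Lemma~\ref{lem:DegreeEstimationDS}), and implement \textsc{EstimateFill1Degree} by casting the fill-neighborhood as the nonzero-column count of an implicit $(0,1)$-matrix and invoking Lemma~\ref{lem:NonZeroColumnEstimator}. The only cosmetic difference is that you exclude $u$ from the column set and add $1$ at the end, whereas the paper works with the fill $1$-neighborhood directly; your bound $r = \degcomp{}(u)+1 \le \deg_G(u)+1$ matches the paper's row count.
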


\begin{figure}[H]
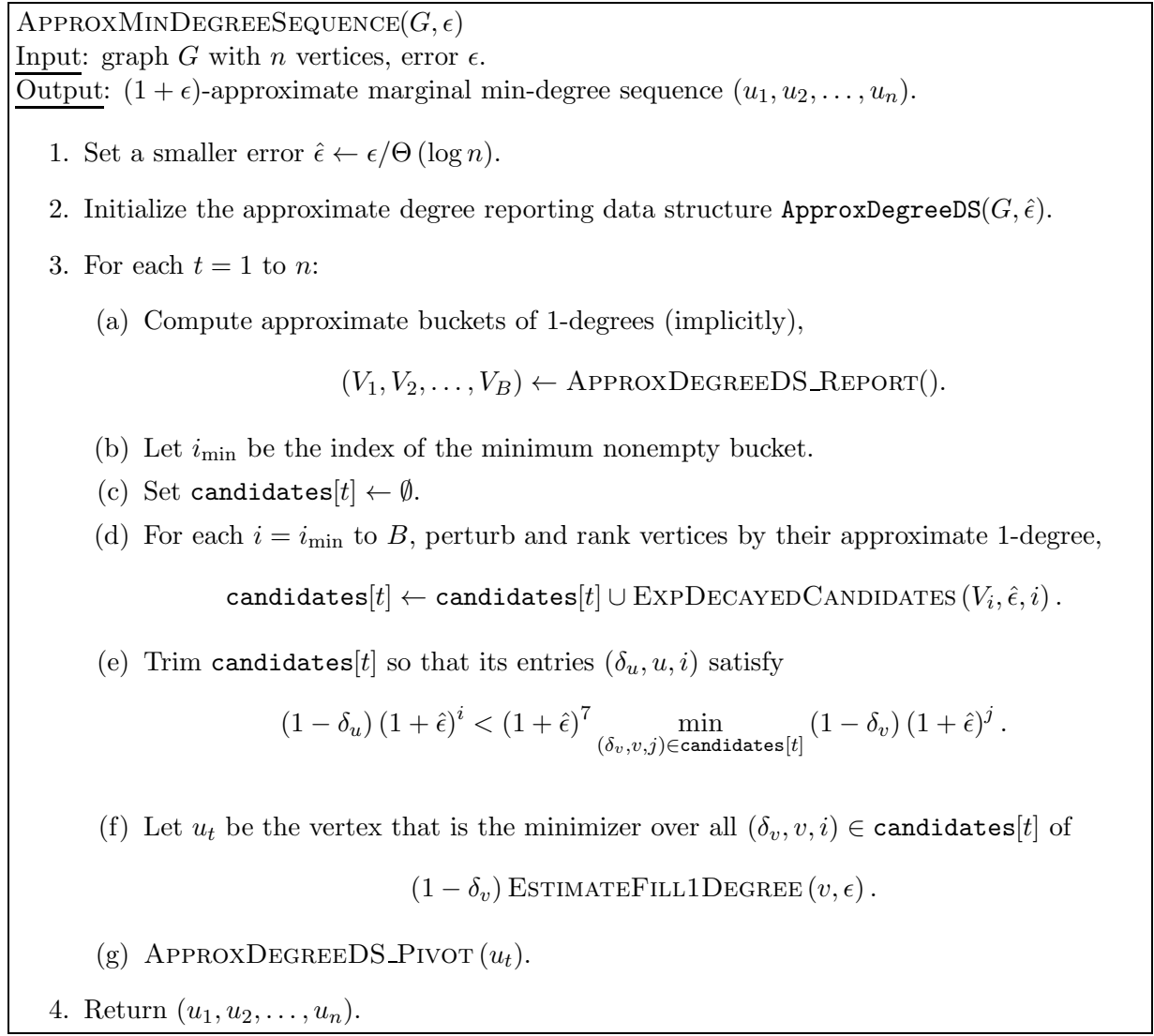

	
	\begin{algbox}
		
		$\textsc{ApproxMinDegreeSequence}(G, \epsilon)$
		
		\underline{Input}: graph $G$ with $n$ vertices, error $\epsilon$.
		
    \underline{Output}: $(1+\epsilon)$-approximate marginal min-degree
    sequence $(u_1, u_2, \dots, u_n)$.
		
		\begin{enumerate}
			
			\item Set a smaller error
      $\hat{\epsilon}
        \leftarrow \epsilon / \Theta\left(\log{n}\right)$. \label{algline:smallerror}
			
			\item 
			Initialize the approximate degree reporting data structure
      $\variable{ApproxDegreeDS}(G, \hat{\epsilon})$.
			
			\item For each $t=1$ to $n$:
			\begin{enumerate}
				
				\item Compute approximate buckets of 1-degrees (implicitly),
				\[
          \left( V_1 , V_2, \dots, V_B \right)
          \leftarrow
          \textsc{ApproxDegreeDS\_Report}().
				\]
				
				\item Let $i_{\min}$ be the index of the minimum nonempty bucket.
				
        \item Set $\variable{candidates}[t] \leftarrow \emptyset$.
				
				\item For each $i=i_{\min}$ to $B$,
        perturb and rank vertices by their approximate 1-degree,
				\[
          \variable{candidates}[t]
          \leftarrow \variable{candidates}[t]
              \cup \textsc{ExpDecayedCandidates}\left( V_{i}, \hat{\epsilon}, i \right).
				\]
				
      \item Trim $\variable{candidates}[t]$ so that its
				entries $(\delta_u, u, i)$ satisfy
				\[
          \left( 1 - \delta_{u} \right)
          \left( 1 + \hat{\epsilon} \right)^{i}
          <
          \left( 1 + \hat{\epsilon} \right)^{7}
          \min_{(\delta_v, v, j) \in \variable{candidates}[t]}
          \left( 1 - \delta_{v} \right)
          \left( 1 + \hat{\epsilon} \right)^{j}.
				\] \label{algline:trimGlobalCandidates}

        \item Let $u_t$ be the vertex that is the minimizer
          over all $(\delta_{v}, v, i) \in \variable{candidates}[t]$
        of
				\[
          \left( 1 - \delta_{v} \right)
          \textsc{EstimateFill1Degree}\left(v, \epsilon \right).
				\]

      \item $\textsc{ApproxDegreeDS\_Pivot}\left(u_t\right)$.
			\end{enumerate}
			
    \item Return $\left(u_1,u_2,\dots,u_n\right)$.
		\end{enumerate}
		
	\end{algbox}
	
  \caption{Pseudocode for the $(1+\epsilon)$-approximate marginal
  minimum degree ordering algorithm.} \label{fig:ApproxMinDegreeSequence}
\end{figure}

The most important part of this algorithm is arguably the use of exponential
random variables to construct a list of candidates that is completely uncorrelated
with the randomness used to generate the $\ell_0$-sketches and the choice of
previous vertex pivots.  The next subsection summarizes some desirable
properties of exponential distributions that we exploit for efficient
perturbations.

\subsection{Exponential Random Variables}
The exponential distribution is a continuous analog of the
geometric distribution that describes the time between events in
a Poisson point process.
We utilize well-known facts about its order statistics, which 
have also appeared in the study of fault tolerance and
distributed graph decompositions~\cite{MillerPX13}.
For a rate parameter $\lambda$, the exponential distribution
$\Exp(\lambda)$
is defined by the probability density function (PDF)
\begin{align*}
  f_{\Exp(\lambda)}(x) =
  \begin{cases}
    \lambda \exp\left(-\lambda x\right) & \text{if $x \ge 0$,}\\
    0 & \text{otherwise.}
  \end{cases}
\end{align*}
We will also make use of its cumulative density function (CDF)
\begin{align*}
  F_{\Exp(\lambda)}(x) &=
  \begin{cases}
    1 - \exp\left(-\lambda x\right) & \text{if $x \ge 0$,}\\
    0 & \text{otherwise}.
  \end{cases}
\end{align*}

A crucial property of the exponential distribution is that it is memoryless.
This means that for any rate $\lambda > 0$ and $s,t \ge 0$,
an exponentially distributed random variable $X$ satisfies the relation
\[
  \Pr\left[X > s + t \mid X > s\right] = \Pr\left[X > t\right].
\]
A substantial portion of our analysis relies on the \emph{order statistics} of
exponential random variables.
Given $n$ random variables $X_1, X_2, \dots, X_n$, the $i$-th
order statistic is  the value of the $i$-th minimum random variable.
A useful fact about i.i.d.\ exponential random variables is that the
difference between consecutive order statistics also follows an exponential
distribution. The algorithmic consequences of this property are that we can
sample the smallest (or largest) $k$ of $n$ exponential random variables in
increasing (or decreasing) order without ever generating all $n$ random
variables.

\begin{lemma}[{\cite{Feller71:book}}]
\label{lem:OrderStatisticExp}
Let $X_{(i)}^n$ denote the $i$-th order statistic of $n$
i.i.d.\ random variables drawn from the distribution $\Exp(\lambda)$.
Then, the $n$ variables $X_{(1)}^n, X_{(2)}^n- X_{(1)}^n,\dots, X_{(n)}^n - X_{(n - 1)}^n$
are independent, and the density of $X_{(k + 1)}^n - X_{(k)}^n$
is given by the distribution $\Exp((n - k)\lambda)$.
\end{lemma}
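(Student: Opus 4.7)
The plan is to proceed by induction on $n$ using the memoryless property of the exponential distribution, which handles both the independence claim and the marginal rates in one sweep.

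For the base case $n = 1$, the single variable $X_{(1)}^{1}$ is simply $X_1 \sim \Exp(\lambda)$, which matches the formula $\Exp((n-k)\lambda)$ with $n=1$ and $k=0$ (adopting the convention $X_{(0)}^{n} := 0$). For the inductive step from $n-1$ to $n$, I would first show that $X_{(1)}^{n} \sim \Exp(n\lambda)$. This is the standard minimum-of-exponentials computation:
\[
\Pr\bigl[X_{(1)}^{n} > t\bigr] \;=\; \prod_{i=1}^{n} \Pr[X_i > t] \;=\; \exp(-n\lambda t),
\]
establishing the $k = 0$ case of the density claim.

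Next, I would condition on the joint event $\{X_{(1)}^{n} = s,\ \argmin_i X_i = j\}$. By symmetry, the index $j$ is uniform on $[n]$ and independent of the minimum value $s$. The memoryless property of $\Exp(\lambda)$ then says that for each $i \ne j$, the conditional distribution of $X_i - s$ given $X_i > s$ is again $\Exp(\lambda)$, and these residuals are mutually independent and independent of $(s, j)$. Thus conditionally on $(s, j)$, the $n-1$ shifted variables $\{X_i - s : i \ne j\}$ form an i.i.d.\ $\Exp(\lambda)$ sample to which the inductive hypothesis applies.

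Finally, because subtracting the common shift $s$ from all non-minimum coordinates preserves consecutive differences, the gaps $X_{(k+1)}^{n} - X_{(k)}^{n}$ for $k \ge 1$ coincide with the analogous gaps of the $n-1$ shifted variables. The inductive hypothesis therefore yields that these $n-1$ gaps are independent with the $(k+1)$-th gap distributed as $\Exp((n-1-(k-1))\lambda) = \Exp((n-k)\lambda)$, conditionally on $(s, j)$. Since this conditional distribution does not depend on $(s, j)$, the later gaps are unconditionally independent of $X_{(1)}^{n}$ as well, completing the induction. The only place that requires care is verifying that conditioning on $\{X_j = s\} \cap \{X_i > s \text{ for } i \ne j\}$ really leaves the residuals $X_i - s$ i.i.d.\ $\Exp(\lambda)$ independent of $s$; as a sanity alternative one can instead change variables $(X_{(1)}, \ldots, X_{(n)}) \mapsto (Y_1, Y_2, \ldots, Y_n)$ with $Y_k = X_{(k)} - X_{(k-1)}$ on the ordered-simplex density $n!\,\lambda^n \exp(-\lambda \sum x_i)$, noting the Jacobian is $1$ and that $\sum x_i = \sum_{k=1}^{n} (n-k+1) y_k$, which immediately factorizes into the claimed product of independent exponentials.
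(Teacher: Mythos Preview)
Your proposal is correct and follows essentially the same approach that the paper sketches immediately after stating the lemma: compute the CDF of $X_{(1)}^n$ to see it is $\Exp(n\lambda)$, then condition on the minimum and invoke the memoryless property so that the remaining $n-1$ shifted variables are again i.i.d.\ $\Exp(\lambda)$, and iterate. Your write-up is simply a more carefully organized induction with the conditioning made explicit, and the Jacobian change-of-variables you mention at the end is a nice alternative derivation that the paper does not discuss.
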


One approach to prove Lemma~\ref{lem:OrderStatisticExp} uses the i.i.d.\
assumption to show that the CDF of
$X_{(1)}^n$ is 
\begin{align*}
  F_{X^n_{(1)}}(x)&=1-(1-F_{\Exp(\lambda)}(x))^n
  \\&=1-\exp(-n\lambda x).
\end{align*}
This proves that $X^n_{(1)}$ follows an exponential
distribution with rate $n\lambda$.
Conditioning on $X^n_{(1)}$, we see that $X^n_{(2)}-X^n_{(1)}$
follows an exponential distribution equal to $X_{(1)}^{n-1}$
by the memoryless property. Therefore, one can repeat this argument to get the
density of $X^n_{(k+1)}-X^n_{(k)}$ for all $k$ up to $n-1$.

\subsection{Implicitly Sampling $\epsilon$-Decayed Minimums}
\label{subsec:epsilonDecayed}

The key idea in this section is the notion of $\epsilon$-decay, which
we use to slightly perturb approximate 1-degree sequences.
It is motivated by the need to decorrelate the list of vertices grouped
approximately by their 1-degree from previous sources of randomness in the
algorithm.
In the following definition, $n$ is the number of vertices in the original
graph before pivoting and $c_1 > 1$ is a constant.

\begin{definition}
\label{def:DecayedMinimum}
Given a sequence $(x_1,x_2,\dots,x_k) \in \mathbb{R}^{k}$,
we construct the corresponding \emph{$\epsilon$-decayed sequence}
$(y_1,y_2,\dots,y_k)$ by independently sampling
the exponential random variables
\[
  \delta_{i} \sim \hat{\epsilon} \cdot \Exp(1),
\]
where $\hat{\epsilon} = \epsilon/(c_1 \log{n})$
as in line~\ref{algline:smallerror} in \textsc{ApproxMinDegreeSequence},
and letting
\[
  y_i \leftarrow \left(1 - \delta_i\right)x_i.
\]
We say that the \emph{$\epsilon$-decayed minimum} of $(x_1,x_2,\dots,x_k)$
is the value $\min(y_1,y_2,\dots,y_k)$.
\end{definition}

\begin{definition}
\label{def:PerturbedMinDegreeSequence}
Given an error parameter $\epsilon > 0$ and an $\epsilon$-approximate
1-degree estimation routine $\textsc{Estimate1Degree}(G,u)$,
an \emph{$\epsilon$-decayed minimum degree ordering} is a sequence
such that:
\begin{enumerate}
  \item The vertex $u_t$ corresponds to the $\epsilon$-decayed minimum of
   $\textsc{Estimate1Degree}\left(\Gfill{t-1}, v\right)$
  over all remaining vertices $v \in \Vfill{t-1}$.
  \item The fill graph $\Gfill{t}$ is obtained after eliminating
    $u_t$ from $\Gfill{t-1}$.
\end{enumerate}
\end{definition}

Observe that the randomness of this perturbed degree estimator is regenerated
at each step and thus removes any previous dependence.  Next, we show that
this adjustment is a well-behaved approximation, and then we show how to
efficiently sample an $\epsilon$-decayed minimum degree.

\begin{lemma}
\label{lem:ApproxFactor}
Let $Y$ be an $\epsilon$-decayed minimum
of $(x_1, x_2, \dots, x_k)$.
With high probability, we have
\[
  Y \ge (1 - \epsilon) \min\left(x_1, x_2, \dots, x_k\right).
\]
\end{lemma}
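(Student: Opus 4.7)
The plan is a straightforward tail-bound and union-bound argument exploiting the fact that each $\delta_i$ is an exponential random variable with a very small scaling factor $\hat{\epsilon} = \epsilon/(c_1 \log n)$.

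First, I would reduce the claim to a simultaneous bound on the perturbation coefficients $\delta_i$. Since $y_i = (1 - \delta_i) x_i$ and we assume the $x_i$ are nonnegative (they represent approximations to 1-degrees), if we can show that $\delta_i \le \epsilon$ for every $i \in [k]$, then $y_i \ge (1 - \epsilon) x_i \ge (1 - \epsilon) \min_j x_j$, and taking the minimum over $i$ gives the desired conclusion $Y \ge (1 - \epsilon) \min_j x_j$.

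Second, I would bound the probability that any single $\delta_i$ exceeds $\epsilon$. Since $\delta_i = \hat{\epsilon} \cdot Z_i$ where $Z_i \sim \Exp(1)$, the event $\delta_i > \epsilon$ is equivalent to $Z_i > \epsilon/\hat{\epsilon} = c_1 \log n$. Using the CDF of the exponential distribution,
\[
\Pr[\delta_i > \epsilon] = \Pr[Z_i > c_1 \log n] = \exp(-c_1 \log n) = n^{-c_1}.
\]
A union bound over all $k \le n$ indices yields
\[
\Pr\left[\exists i : \delta_i > \epsilon\right] \le k \cdot n^{-c_1} \le n^{-(c_1 - 1)},
\]
which is $1/\mathrm{poly}(n)$ for sufficiently large constant $c_1$. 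This is precisely why the definition of $\hat{\epsilon}$ includes the logarithmic factor.

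There is no real obstacle here; the argument is essentially a one-line tail-bound calculation, and the sole role of the $\log n$ factor in the definition of $\hat{\epsilon}$ in Definition~\ref{def:DecayedMinimum} is to make this union bound go through with high probability. The only small point to verify is that $k \le \mathrm{poly}(n)$, which holds since $k$ indexes a subset of vertices in a graph on $n$ vertices.
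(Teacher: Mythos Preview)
Your proposal is correct and follows essentially the same approach as the paper: reduce to the event that some $\delta_i > \epsilon$, use the exponential tail $\Pr[\delta_i > \epsilon] = \exp(-c_1\log n) = n^{-c_1}$, and union bound over $k \le n$ indices to get probability at most $n^{1-c_1}$.
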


\begin{proof}
We bound the probability of the complementary event
\[
  \Pr\left[Y < (1 - \epsilon ) \min\left(x_1, x_2, \dots, x_k\right) \right]. 
\]
Observe that we can upper bound this probability by
the probability that some $x_i$ decreases
to less than $1 - \epsilon$ times its original value.
Recall that we set $\hat{\epsilon} = \epsilon/(c_1 \log{n})$
for some constant $c_1 > 1$.
Consider $k$ i.i.d.\ exponential random variables
$X_1,X_2,\dots,X_k \sim \Exp(1)$,
and let
\[
  \delta_i = \hat{\epsilon} \cdot X_i,
\]
as in the definition of an $\epsilon$-decayed minimum.
Using the CDF of the exponential distribution, for each
$i \in [k]$ we have
\begin{align*}
  \Pr\left[\delta_{i} > \epsilon\right]
  &=
  \Pr\left[\frac{\epsilon}{c_1 \log n} \cdot X_i > \epsilon\right]\\
  &= \Pr\left[X_i > c_1 \log n\right]\\
  &= \exp \left( - c_1 \log{n} \right).
\end{align*}
It follows by a union bound that
\begin{align*}
  \Pr\left[
    \max_{i \in [k]}
     \delta_{i} >  \epsilon \right]
  &\leq \sum_{i=1}^n
  \Pr\left[\delta_{i} > \epsilon \right] \\
  &= n^{1-c_1},
\end{align*}
which completes the proof since $c_1 > 1$.
\end{proof}

By the previous lemma,
to produce a $(1+\epsilon)$-approximate marginal minimum degree
ordering, it suffices to compute an $\epsilon$-decayed minimum degree ordering.
Specifically, at each step we only need to
find the $\epsilon$-decayed minimum among the approximate fill 1-degrees of the
remaining vertices.  It turns out, however, that computing the approximate
1-degree for each remaining vertex in every iteration is expensive,
so we avoid this problem by using 
\textsc{ExpDecayedCandidates} on each bucket of vertices
to carefully select a
representative subset of candidates,
and then we pivot out the minimizer over all buckets.
The pseudocode for this subroutine is given in
Figure~\ref{fig:ExpDecayedCandidates}, where we again let $\hat{\epsilon} =
\epsilon / (c_1 \log{n})$ for some constant $c_1 > 1$.
Next, we show that this sampling technique is equivalent to finding the
$\epsilon$-decayed minimum over all remaining vertices with high probability.

\begin{figure}[H]
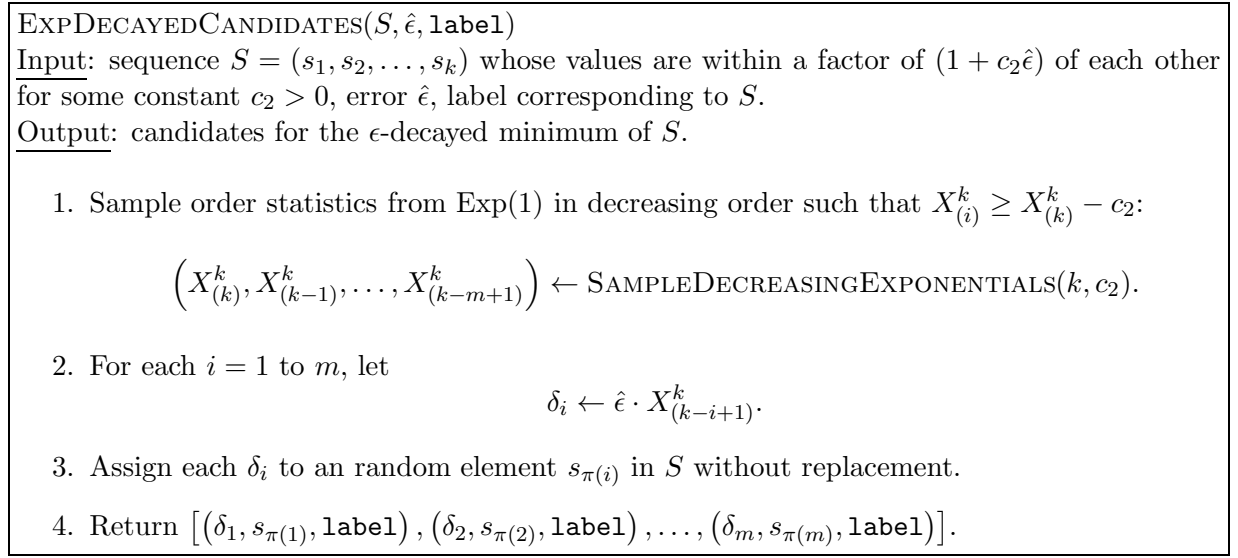


\begin{algbox}
$\textsc{ExpDecayedCandidates}
   (S, \hat{\epsilon},\variable{label})$

\underline{Input}:
sequence $S=(s_1,s_2,\dots,s_k)$ 
whose values are within a factor of $(1 + c_2 \hat{\epsilon})$ of each other
for some constant $c_2>0$, error $\hat{\epsilon}$, label corresponding to $S$.

\underline{Output}:
candidates for the $\epsilon$-decayed minimum of $S$.

\begin{enumerate}

\item 
Sample order statistics from $\Exp(1)$ in decreasing order
such that $X_{(i)}^k \ge X_{(k)}^k - c_2$:
\[
  \left( X_{(k)}^k, X_{(k-1)}^k, \dots, X_{(k - m + 1)}^k \right) \leftarrow
    \textsc{SampleDecreasingExponentials}(k, c_2).
\]

\item For each $i = 1$ to $m$, let
\[
  \delta_{i} \leftarrow \hat{\epsilon} \cdot X_{(k - i + 1)}^k.
\]

\item Assign each $\delta_{i}$ to an random element $s_{\pi(i)}$
in $S$ without replacement.

\item Return 
  $\left[
    \left(\delta_{1}, s_{\pi\left(1\right)}, \variable{label}\right),
    \left(\delta_{2}, s_{\pi\left(2\right)}, \variable{label}\right),
    \dots,
    \left(\delta_{m}, s_{\pi\left(m\right)}, \variable{label}\right)
    \right]$.
\end{enumerate}

\end{algbox}

\caption{
Pseudocode for generating an expected constant-size list of candidates for the
$\epsilon$-decayed minimum of a sequence of values that are within
$(1 + c_2\hat{\epsilon})$ of each other.
}

\label{fig:ExpDecayedCandidates}
\end{figure}

Note that the input sequence to \textsc{ExpDecayedCandidates}
requires that all its elements are within a factor of
$(1 + c_2\hat{\epsilon})$ of each other.
We achieve this easily using the vertex buckets returned by
$\textsc{ApproxDegreeDS\_Report}$ in Section~\ref{subsec:ApproxDegreeDS}
when $\hat{\epsilon}$ is the error tolerance.
The next lemma shows that the approximate vertex 1-degrees in
any such bucket satisfy the required input condition.

\begin{lemma}\label{lem:BucketBounds}
For any bucket $V_i$ of vertices returned by
$\textsc{ApproxDegreeDS\_Report}$, there exists a constant $c_2>0$ such that
all of the approximate 1-degrees are within a factor
of $(1+c_2 \hat{\epsilon})$ of each other.
Alternatively, all of the approximate 1-degrees are within a factor
of $(1+\hat{\epsilon})^7$ of each other.
\end{lemma}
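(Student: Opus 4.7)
The plan is to derive both claims directly from Theorem~\ref{thm:ApproxDegreeDS}, instantiated with the error parameter $\hat{\epsilon}$ that was used when initializing \variable{ApproxDegreeDS} inside \textsc{ApproxMinDegreeSequence}. That theorem guarantees with high probability that every vertex $u$ in bucket $V_i$ has true fill 1-degree $\degfill{}(u)+1$ lying in $[(1+\hat{\epsilon})^{i-2},\,(1+\hat{\epsilon})^{i+2}]$, so any two such true 1-degrees in the same bucket differ by at most a multiplicative factor of $(1+\hat{\epsilon})^4$.

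Next, I would translate from the true 1-degree to the natural sketch estimate $1/Q(u)$ that is used to form the buckets. Lemma~\ref{lem:minValue} tells us $(\degfill{}(u)+1)\cdot Q(u)\in[1-\hat{\epsilon},\,1+\hat{\epsilon}]$, so $1/Q(u)$ differs from $\degfill{}(u)+1$ by a factor in $[1/(1+\hat{\epsilon}),\,1/(1-\hat{\epsilon})]$. An elementary computation verifies that $(1+\hat{\epsilon})/(1-\hat{\epsilon})\le(1+\hat{\epsilon})^3$ whenever $\hat{\epsilon}\le(\sqrt{5}-1)/2$, since this inequality is equivalent to $(1+\hat{\epsilon})^2(1-\hat{\epsilon})\ge 1$. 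Combining this sketch slack with the $(1+\hat{\epsilon})^4$ bound on true 1-degrees gives
\[
\frac{1/Q(u)}{1/Q(v)} \;\le\; \frac{\degfill{}(u)+1}{\degfill{}(v)+1}\cdot\frac{1+\hat{\epsilon}}{1-\hat{\epsilon}} \;\le\; (1+\hat{\epsilon})^{4}\cdot(1+\hat{\epsilon})^{3} \;=\; (1+\hat{\epsilon})^{7}
\]
for every $u,v\in V_i$, which is the second (``alternative'') form of the claim.

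Finally, for the first form, I would invoke the elementary inequality $(1+x)^7\le 1+c_2 x$, which holds for any $x$ below a fixed absolute constant and an appropriate constant $c_2$ (for instance $c_2=9$ suffices for $x\le 1/10$). Since $\hat{\epsilon}=\epsilon/\Theta(\log n)$, this range condition is met for all sufficiently large $n$, yielding the $(1+c_2\hat{\epsilon})$-version of the bound. The only mild subtlety is identifying the intended notion of ``approximate 1-degree''; interpreted as the sketch estimate $1/Q(u)$ that underlies both the quantile-based bucketing in $\textsc{ApproxDegreeDS\_Report}$ and the call to $\textsc{ExpDecayedCandidates}$, the two sources of slack (bucket width and sketch error) combine to give exactly the claimed $(1+\hat{\epsilon})^{7}$, with no further work required.
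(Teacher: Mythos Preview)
Your overall structure—start from Theorem~\ref{thm:ApproxDegreeDS} to pin the true fill 1-degrees of a bucket into $[(1+\hat\epsilon)^{i-2},(1+\hat\epsilon)^{i+2}]$, then widen by a $(1\pm\hat\epsilon)$ estimator slack, then collapse $(1+\hat\epsilon)^7$ to $1+c_2\hat\epsilon$—is exactly the paper's argument. The arithmetic you do (bounding $(1+\hat\epsilon)/(1-\hat\epsilon)$ by $(1+\hat\epsilon)^3$) is equivalent to the paper's step of enlarging the range to $[(1+\hat\epsilon)^{i-4},(1+\hat\epsilon)^{i+3}]$.

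The one genuine discrepancy is your identification of ``approximate 1-degree'' with the sketch quantity $1/Q(u)$. In the paper it means the output of $\textsc{EstimateFill1Degree}(u,\hat\epsilon)$: the proof of the lemma explicitly defines $\widetilde{\deg^+_t}(u)+1:=\textsc{EstimateFill1Degree}(u,\hat\epsilon)$ and invokes Theorem~\ref{thm:DegreeEstimation} (not Lemma~\ref{lem:minValue}) for the $(1\pm\hat\epsilon)$ slack, and the downstream Lemma~\ref{lem:trimmingisokay} literally uses the resulting bounds $\textsc{EstimateFill1Degree}(u,\epsilon)\in[(1+\hat\epsilon)^{i-4},(1+\hat\epsilon)^{i+3}]$. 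A tell that $1/Q(u)$ is not the intended quantity: buckets are \emph{defined} by the range of $Q(u)$, so any two $1/Q(u)$ values in $V_i$ already differ by at most $(1+\hat\epsilon)$, and routing through true degrees to get $(1+\hat\epsilon)^7$ would be both unnecessary and weaker. The fix is a one-symbol swap—replace $1/Q(u)$ and Lemma~\ref{lem:minValue} by $\textsc{EstimateFill1Degree}(u,\hat\epsilon)$ and Theorem~\ref{thm:DegreeEstimation}—after which your proof is the paper's proof.
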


\begin{proof}
The bucket $V_i$ has vertices with approximate 1-degrees in the range
\[
  \left[(1+\hat{\epsilon})^{i-2},(1+\hat{\epsilon})^{i+2}\right].
\]
by Theorem~\ref{thm:ApproxDegreeDS}.
We have oracle access to the component graph $\Gcomp{t}$ by
Theorem~\ref{thm:DegreeEstimation} and therefore can invoke
\textsc{EstimateFill1Degree} on it.
Instead of treating calls to $\textsc{EstimateFill1Degree}$
and the values of $\delta_i$ used to generate $\epsilon$-decayed
minimums as random variables, we view them as fixed values
by removing the bad cases with high probability.
That is, we define
\begin{equation*}
\label{eqn:EstimateFill1DegreeDef}
  \widetilde{\degfill{t}}(u)+1 \defeq
    \textsc{EstimateFill1Degree}\left(u,\hat{\epsilon}\right).
\end{equation*}
Every call to $\textsc{EstimateFill1Degree}$ is correct with high probability
by Theorem~\ref{thm:DegreeEstimation}, so we have
\[
  \left(1-\hat{\epsilon}\right)\left(\degfill{t}(u)+1\right)
     \le \widetilde{\degfill{t}}(u)+1
     \le \left(1+\hat{\epsilon}\right)\left(\degfill{t}(u)+1\right).
\]
This implies that all the approximate 1-degrees in bucket $V_i$ are in the
range
\[
  \left[\left(1+\hat{\epsilon}\right)^{i-4}, \left(1+\hat{\epsilon}\right)^{i+3}\right],
\]
for $\hat{\epsilon}$ sufficiently small.
Therefore, all of these values are within a factor
\[
  \frac{\left(1+\hat{\epsilon}\right)^{i+3}}{\left(1+\hat{\epsilon}\right)^{i-4}}
   = \left(1+\hat{\epsilon}\right)^{7}
   \le \left(1+c_2 \hat{\epsilon}\right)
\]
of each other for some $c_2 > 0$, which completes the proof.
\end{proof}

The most important part of 
\textsc{ExpDecayedCandidates} is
generating the order statistics efficiently.
In Figure~\ref{fig:SamplingDecreasingExponentials}
we show how to iteratively sample the variables
$X_{(i)}^k$ in decreasing order
using Lemma~\ref{lem:OrderStatisticExp}.
This technique is critically important for us because we only consider order
statistics satisfying the condition $X_{(k)}^k - c_2$, which is at most
a constant number of variables in expectation.

\begin{figure}[H]
\begin{algbox}
\textsc{SampleDecreasingExponentials}$(k, c_2)$

\underline{Input}:
integer $k \ge 0$, real-valued threshold $c_2 > 0$.

\underline{Output}:
order statistics $X_{(k)}^k, X_{(k-1)}^k, \dots, X_{(k-m+1)}^k$
from $\Exp(1)$
such that $X_{(k-m+1)}^k \ge X_{(k)}^k - c_2$
and $X_{(k-m)}^k < X_{(k)}^k - c_2$.

\begin{enumerate}

\item Sample $X_{(k)}^k$ using its CDF
\[
  \Pr\left[X_{(k)}^k \le x\right] = \left(1 - e^{-x}\right)^k.
\]

\item For each $i=1$ to $k-1$:
\begin{enumerate}
  \item Sample the difference $Y \sim \Exp(i)$ and let
    \[
      X_{(k-i)}^k \leftarrow X_{(k - i + 1)}^k - Y.
    \]
  \item If $X_{(k-i)}^k < X_{(k)}^k - c_2$,
    let $m \leftarrow i$ and
    exit the loop.
\end{enumerate}
  \item Return 
    $\left(X_{(k)}^k, X_{(k-1)}^k, \dots, X_{(k-m+1)}^k\right)$.
\end{enumerate}

\end{algbox}

\caption{
Pseudocode for iteratively generating order statistics of exponential random variables
in decreasing order within a threshold $c_2$ of the maximum value $X_{(k)}^k$.
}

\label{fig:SamplingDecreasingExponentials}

\end{figure}

To show that our algorithm is correct, we must prove that (1) the algorithm
selects a bounded number of candidates in expectation at each step, and (2) the
true $\epsilon$-decayed minimum belongs to the candidate list.
We analyze both of these conditions in the following lemma.

\begin{lemma}
\label{lem:FindingDecayedMin}
If $x_1, x_2, \dots, x_k$ are within a factor of
$(1 + c_2\hat{\epsilon})$ of each other,
then the $\epsilon$-decayed minimum 
is among the candidates returned by
$\textsc{ExpDecayedCandidates}((x_1, x_2, \dots, x_k), \hat{\epsilon},\cdot)$.
Furthermore, the expected number of candidates returned is bounded by the
constant $e^{c_2}$.
\end{lemma}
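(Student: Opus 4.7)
The plan is to prove two facts: (i) the index $i^*$ that attains the true $\epsilon$-decayed minimum of $(x_1,\dots,x_k)$ is always among the candidates returned by \textsc{ExpDecayedCandidates}, and (ii) the expected length $m$ of the returned list is at most $e^{c_2}$. Both parts rely on viewing the algorithm's two-step sampling procedure (draw the top order statistics of $k$ i.i.d.\ $\Exp(1)$ variables via \textsc{SampleDecreasingExponentials}, then uniformly assign them to distinct elements of $S$) as being equal in joint distribution to first drawing $X_1,\dots,X_k$ i.i.d.\ from $\Exp(1)$, setting $\delta_i = \hat\epsilon X_i$, and then retaining only those indices $i$ for which $X_i \geq X_{(k)}^k - c_2$. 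This equivalence follows immediately from exchangeability combined with the spacings identity of Lemma~\ref{lem:OrderStatisticExp}.

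For correctness, I would let $j^*$ be the index achieving $\max_i X_i$ and let $i^*$ be the true $\epsilon$-decayed minimizer. Starting from $(1 - \hat\epsilon X_{i^*}) x_{i^*} \leq (1 - \hat\epsilon X_{j^*}) x_{j^*}$, dividing by $x_{\min} := \min_i x_i$ and applying $x_{i^*} \geq x_{\min}$ together with $x_{j^*} \leq (1 + c_2 \hat\epsilon) x_{\min}$, a short algebraic rearrangement yields $X_{j^*} - X_{i^*} \leq c_2 (1 - \hat\epsilon X_{j^*})$. In the typical case $\hat\epsilon X_{j^*} \leq 1$ this gives $X_{i^*} \geq X_{(k)}^k - c_2$, so $i^*$ survives the truncation filter; in the degenerate case $\hat\epsilon X_{j^*} > 1$, the same algebra forces $X_{i^*} > X_{j^*}$, which is impossible unless $i^* = j^*$, and $j^*$ is always returned. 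Hence $i^*$ is among the candidates in all cases.

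For the candidate count, I would apply the spacings identity directly: setting $Y_j := X_{(k - j + 1)}^k - X_{(k - j)}^k \sim \Exp(j)$ independently, the returned count satisfies $m = \min\{i \geq 1 : \sum_{j = 1}^{i} Y_j > c_2\}$, so $\E[m] = \sum_{i \geq 0} \Pr[\sum_{j = 1}^{i} Y_j \leq c_2]$. The key step is the closed form $\Pr[\sum_{j = 1}^{i} Y_j \leq c_2] = (1 - e^{-c_2})^i$, which collapses the sum into a geometric series with total $e^{c_2}$. The main obstacle is justifying this exact identity, since naive moment-generating-function bounds such as $\Pr[\sum_{j = 1}^{i} Y_j \leq c_2] \leq e^{c_2} \prod_{j = 1}^{i} j/(j + 1) = e^{c_2}/(i + 1)$ yield the divergent series $\sum_i 1/(i + 1)$ and thus no finite bound on $\E[m]$. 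I would establish the closed form either by partial-fraction expansion of the hypoexponential Laplace transform $\prod_{j = 1}^{i} j/(j + s)$, by induction on $i$, or most cleanly through R\'enyi's representation $\sum_{j = 1}^{i} E_j/j \stackrel{d}{=} \max_{j \leq i} Z_j$ for i.i.d.\ $\Exp(1)$ variables $\{E_j\}$ and $\{Z_j\}$, under which $m$ becomes the first index $i$ with $Z_i > c_2$, a geometric random variable with parameter $e^{-c_2}$ and hence mean $e^{c_2}$.
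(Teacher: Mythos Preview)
Your argument matches the paper's on both fronts. The correctness step is the same inequality manipulation (the paper phrases it as a contradiction, you directly), and your expected-count computation via $\E[m] = \sum_{i \ge 0} \Pr\bigl[\sum_{j \le i} Y_j \le c_2\bigr]$ together with the R\'enyi identity $\sum_{j \le i} \Exp(j) \stackrel{d}{=} \max_{j \le i} \Exp(1)$ lands on precisely the geometric series $\sum_{i \ge 0} (1 - e^{-c_2})^i = e^{c_2}$ that the paper reaches through indicator variables and the same spacings identity.

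One caveat: your handling of the degenerate case $\hat\epsilon X_{j^*} > 1$ is incorrect. The assertion that ``the same algebra forces $X_{i^*} > X_{j^*}$'' fails; for instance with $\hat\epsilon = 0.1$, $c_2 = 1$, $(x_1,x_2) = (1, 1.1)$ and $(X_1,X_2) = (30, 28.5)$, the sole candidate is index $1$ yet the decayed minimum $(1-\delta_2)x_2 = -2.035 < -2 = (1-\delta_1)x_1$ is attained at index $2$. The paper's proof carries the identical gap: its product inequality $(1-\delta_j)x_{\pi(j)} > (1-\delta_k+c_2\hat\epsilon)\,x_{\pi(k)}/(1+c_2\hat\epsilon)$ tacitly assumes $1-\delta_j \ge 0$. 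This is immaterial downstream because Lemma~\ref{lem:ApproxFactor} gives $\max_i \delta_i \le \epsilon < 1$ with high probability, but the first claim of the lemma is not literally a sure statement as written.
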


\begin{proof}
Let $X_1,X_2, \dots, X_k \sim \Exp(1)$ be i.i.d.\
and let the order statistic $X_{(k)}^{k} = \max\{X_1,X_2,\dots,X_k\}$.
We first verify the correctness of
\textsc{SampleDecreasingExponentials}.
Using the CDF of the exponential distribution, it follows that the
CDF of $X_{(k)}^k$ is
\begin{align*}
  F_{X_{(k)}^k}(x) &= \Pr\left[\max\{X_1,X_2,\dots,X_n\} \le x\right]\\
  &= \prod_{i=1}^k \Pr\left[X_i \le x\right]\\
  &= \left(1 - e^{-x}\right)^k.
\end{align*}
The memoryless property implies that
we can generate
$X_{(k-1)}^k, X_{(k-2)}^k, \dots, X_{(1)}^k$ iteratively
by sampling their differences from
an exponential distribution whose rate is given by
Lemma~\ref{lem:OrderStatisticExp}.

Now we show that the $\epsilon$-decayed minimum is among the candidates.
We claim that it suffices to sample every $X_{(i)}^k$ such that
$X_{(i)}^k \ge X_{(k)}^k - c_2$, or equivalently every $\delta_i$ such that
$\delta_i \ge \delta_k - c_2 \hat{\epsilon}$.
To see this, suppose for contradiction that the $\epsilon$-decayed minimum
$x_{\pi(j)}$ is not included.
Then we have $\delta_{j} < \delta_k - c_2 \hat{\epsilon}$, so it follows that
\begin{align*}
  \left(1-\delta_{j}\right) x_{\pi(j)}
   &> \left(1 - \delta_k + c_2\hat{\epsilon}\right) \frac{x_{\pi(k)}}{1+c_2 \hat{\epsilon}}
   \ge (1-\delta_k)x_{\pi(k)},
\end{align*}
which is a contradiction.
Therefore, our candidate list contains the $\epsilon$-decayed minimum.

Lastly, to count the expected number of candidates, we count the number of
values $\delta_j$ generated.
Let $Z_i$ be the indicator variable for the event $X_i \ge X_{(k)}^k - c_2$.
Then $Z = \sum_{i=1}^k Z_i$ indicates the size of the candidate list.
Using the memoryless property of exponential random variables, we have
\begin{align*}
  \E[Z] &= \sum_{i=1}^k \E[Z_i]\\
     &= \sum_{i=1}^k \prob{}{X_{(i)}^k \ge X_{(k)}^k - c_2} \\
     &= 1 + \sum_{i=1}^{k-1} \prob{}{X_{(i)}^k \ge X_{(k)}^k - c_2} \\
     &= 1 + \sum_{i=1}^{k-1} \prob{}{X_{(i)}^i \le c_2}\\
     &= 1 + \sum_{i=1}^{k-1} \left( 1 - e^{-c_2} \right)^i\\
     &\le e^{c_2},
\end{align*}
where the final equality considers the geometric series.
Therefore, at most a constant number of exponential random variables are
generated as we sample backwards from the maximum.
\end{proof}

We cannot simply work with the first nonempty bucket because the randomness
introduces a $1\pm\epsilon$ peturbation.  Furthermore, the bucket
containing the vertex with minimum degree is dependent on the randomness of
the sketches (as discussed in Theorem~\ref{thm:ApproxDegreeDS}).
To bypass this problem we inject additional, uncorrelated randomness into
the algorithm at each step to find $O(1)$ candidates for each of
the $O(\log n \hat{\epsilon}^{-1})$ buckets, which
increases the number of global candidates to
$O(\log n \hat{\epsilon}^{-1})$.
Then in the penultimate step of each iteration, before we compute the
approximate 1-degrees of candidate vertices (which is somewhat expensive), we
carefully filter the global list so that the global $\epsilon$-decayed minimum
remains in the list with high probability.

\begin{lemma}\label{lem:trimmingisokay}
Let $(\delta_{u},u,i)$ be the entry over all $(\delta_{v},v,j) \in
\variable{candidates}[t]$ that minimizes
\[
  \left(1 - \delta_{v}\right) \textsc{EstimateFill1Degree}(v,\epsilon).
\]
Then, with high probability, we have
\[
  \left( 1 - \delta_{u} \right)
  \left( 1 + \hat{\epsilon} \right)^{i}
  \leq
  \left( 1 + \hat{\epsilon} \right)^{7}
  \min_{(\delta_v, v, j) \in \variable{candidates}[t]}
  \left( 1 - \delta_{v} \right)
  \left( 1 + \hat{\epsilon} \right)^{j}.
\]
\end{lemma}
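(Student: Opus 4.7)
The plan is to observe that for each candidate $(\delta_v, v, j) \in \variable{candidates}[t]$, both the bucket label $(1+\hat{\epsilon})^j$ and the value $\textsc{EstimateFill1Degree}(v, \hat{\epsilon})$ are tight approximations to the true fill 1-degree $d(v) \defeq \degfill{t-1}(v)+1$, and then to transfer the optimality of $u_t$ from the estimator-weighted quantity to the label-weighted quantity in the claim. First, I would invoke Theorem~\ref{thm:ApproxDegreeDS}: every candidate from bucket $j$ satisfies $(1+\hat{\epsilon})^{j-2} \le d(v) \le (1+\hat{\epsilon})^{j+2}$ with high probability, so $(1+\hat{\epsilon})^j$ agrees with $d(v)$ up to a factor $(1+\hat{\epsilon})^{\pm 2}$. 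Next, by Theorem~\ref{thm:DegreeEstimation}, each call to $\textsc{EstimateFill1Degree}(v,\hat{\epsilon})$ returns a value in $(1\pm\hat{\epsilon}) d(v)$ with high probability. Composing these bounds (using $(1-\hat{\epsilon})\ge (1+\hat{\epsilon})^{-1}$ for small $\hat{\epsilon}$), the estimator and label differ by at most a factor of $(1+\hat{\epsilon})^3$ in either direction.

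With these guarantees in hand, let $(\delta_{v^*}, v^*, j^*)$ attain the minimum on the right-hand side. Since $u = u_t$ is chosen as the minimizer of $(1-\delta_v) \cdot \textsc{EstimateFill1Degree}(v, \hat{\epsilon})$, the key calculation is the chain
\[
(1-\delta_{u})(1+\hat{\epsilon})^i
\le (1+\hat{\epsilon})^{3}(1-\delta_{u})\,\textsc{EstimateFill1Degree}(u,\hat{\epsilon})
\le (1+\hat{\epsilon})^{3}(1-\delta_{v^*})\,\textsc{EstimateFill1Degree}(v^*,\hat{\epsilon})
\le (1+\hat{\epsilon})^{6}(1-\delta_{v^*})(1+\hat{\epsilon})^{j^*},
\]
where the outer two inequalities use the label-vs-estimator slack of $(1+\hat{\epsilon})^3$ and the middle inequality is the optimality of $u$. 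The extra factor of $(1+\hat{\epsilon})$ beyond the $(1+\hat{\epsilon})^6$ produced by this chain is buffer that absorbs the non-tightness of the bucket intervals near their endpoints (e.g., a candidate could lie up to a factor of $(1+\hat{\epsilon})^2$ away from $(1+\hat{\epsilon})^j$), giving the stated $(1+\hat{\epsilon})^7$ bound.

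The main obstacle I anticipate is ensuring that both high-probability guarantees hold \emph{simultaneously} for every vertex in $\variable{candidates}[t]$, since the estimator and bucketing use independent randomness but each individually fails with small probability. By Lemma~\ref{lem:FindingDecayedMin} each call to $\textsc{ExpDecayedCandidates}$ contributes an expected $O(1)$ entries, and $\textsc{ApproxDegreeDS\_Report}$ produces $B = O(\log n \cdot \hat{\epsilon}^{-1})$ buckets, so $|\variable{candidates}[t]|$ is polylogarithmic with high probability (via a Chernoff-style tail on the sum of geometric-type random variables). A union bound over this polylogarithmic set against the failure events of Theorems~\ref{thm:ApproxDegreeDS} and~\ref{thm:DegreeEstimation} then preserves the high-probability conclusion. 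A secondary point of care is that the error parameter passed to $\textsc{EstimateFill1Degree}$ in the minimization step must match the $\hat{\epsilon}$ used throughout $\variable{ApproxDegreeDS}$; otherwise a factor of $(1+\epsilon)$ rather than $(1+\hat{\epsilon})$ would enter the chain above and overwhelm the $(1+\hat{\epsilon})^7$ budget.
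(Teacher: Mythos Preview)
Your approach is essentially identical to the paper's: both compare the bucket label $(1+\hat{\epsilon})^j$ and the estimator $\textsc{EstimateFill1Degree}(v,\hat{\epsilon})$ to the true fill 1-degree via Theorem~\ref{thm:ApproxDegreeDS} and Theorem~\ref{thm:DegreeEstimation}, and then sandwich the optimality inequality for $u$ between these two approximations. The paper packages the first step as Lemma~\ref{lem:BucketBounds}, which gives $\textsc{EstimateFill1Degree}(u,\hat{\epsilon}) \ge (1+\hat{\epsilon})^{i-4}$ and $\textsc{EstimateFill1Degree}(v,\hat{\epsilon}) \le (1+\hat{\epsilon})^{j+3}$, and then the chain yields the factor $(1+\hat{\epsilon})^{7}$ directly.

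One small arithmetic slip: your claim $(1-\hat{\epsilon}) \ge (1+\hat{\epsilon})^{-1}$ is false (indeed $(1-\hat{\epsilon})(1+\hat{\epsilon}) = 1-\hat{\epsilon}^2 < 1$). The correct bound is $(1-\hat{\epsilon}) \ge (1+\hat{\epsilon})^{-2}$ for small $\hat{\epsilon}$, so the lower comparison between estimator and label loses a factor of $(1+\hat{\epsilon})^{4}$ rather than $(1+\hat{\epsilon})^{3}$. Your chain then produces $(1+\hat{\epsilon})^{7}$ exactly, not $(1+\hat{\epsilon})^{6}$ with an unexplained ``buffer''; the extra factor is not slack but is genuinely consumed. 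Your observation about the $\epsilon$ versus $\hat{\epsilon}$ parameter in the call to $\textsc{EstimateFill1Degree}$ is well taken and matches the paper's own (implicit) use of $\hat{\epsilon}$ in Lemma~\ref{lem:BucketBounds}.
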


\begin{proof}
Let $(\delta_{v},v,j)$ be an arbitrary entry in
$\variable{candidates}[t]$.
By assumption, we have
\[
  \left( 1 - \delta_{u} \right) \textsc{EstimateFill1Degree}(u,\epsilon)
  \leq \left( 1 - \delta_{v} \right) \textsc{EstimateFill1Degree}(v,\epsilon).
\]
Using inequalities in Lemma~\ref{lem:BucketBounds}, it follows that
\[
  \textsc{EstimateFill1Degree}(u,\epsilon) \geq (1+ \hat{\epsilon})^{i-4}
\]
and
\[
  \textsc{EstimateFill1Degree}(v,\epsilon) \le (1+ \hat{\epsilon})^{j+3}
\]
with high probability.
Substituting these into the previous inequality gives us the result.
\end{proof}

\subsection{Analysis of the Approximation Algorithm}
\label{subsec:AnalyzingApproxMinDeg}

Now that we have all of the building blocks from the previous subsection,
we prove the correctness of the $(1+\epsilon)$-approximate marginal
minimum degree algorithm and bound its running time.

\begin{lemma}
\label{lem:Correctness}
For any graph $G$ and any error $\epsilon$,
the output of $\textsc{ApproxMinDegreeSequence}(G, \epsilon)$
is a $(1+\epsilon)$-approximate marginal minimum degree sequence
with high probability.
\end{lemma}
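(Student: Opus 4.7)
The plan is to fix an arbitrary step $t$ and condition on three events that together hold at every step with high probability, by union bounds over steps, buckets, and perturbations: (i) every invocation of $\textsc{ApproxDegreeDS\_Report}$ produces the bucketing guarantee of Theorem~\ref{thm:ApproxDegreeDS}; (ii) every call to $\textsc{EstimateFill1Degree}(\cdot,\hat{\epsilon})$ returns a $(1\pm\hat{\epsilon})$-multiplicative estimate (Theorem~\ref{thm:DegreeEstimation}); and (iii) every perturbation $\delta_i$ sampled during this step satisfies $\delta_i\le\epsilon$ (Lemma~\ref{lem:ApproxFactor}). Under these, it suffices to show $d(u_t)\le (1+\epsilon)(d^*+1)$, where $d(u):=\degfill{t-1}(u)+1$ and $d^*:=\min_v \degfill{t-1}(v)$.

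Let $v^*$ realize $d^*$, and let $V_{i^*}$ be its bucket. Combining the bucket guarantee with $\textsc{EstimateFill1Degree}$ accuracy, one obtains $(1+\hat{\epsilon})^{i^*-5}\le d^*+1$ and $\widetilde{d}(u)\le(1+\hat{\epsilon})^{i^*+3}$ for every $u\in V_{i^*}$, where $\widetilde{d}(u):=\textsc{EstimateFill1Degree}(u,\hat{\epsilon})$. Since $V_{i^*}$ is nonempty and $i^*\ge i_{\min}$, the loop invokes $\textsc{ExpDecayedCandidates}(V_{i^*},\hat{\epsilon},i^*)$ whose precondition is verified by Lemma~\ref{lem:BucketBounds}; Lemma~\ref{lem:FindingDecayedMin} then guarantees that at least one entry $(\delta_{u'},u',i^*)$ with $u'\in V_{i^*}$ enters $\variable{candidates}[t]$. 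By Lemma~\ref{lem:trimmingisokay} the global minimizer of $(1-\delta_v)\widetilde{d}(v)$ over $\variable{candidates}[t]$ survives trimming, so the vertex $u_t$ returned by step~6 is precisely this global minimizer. Comparing its score to that of $u'$ gives
\[
(1-\delta_{u_t})\,\widetilde{d}(u_t)
\le (1-\delta_{u'})\,\widetilde{d}(u')
\le (1+\hat{\epsilon})^{i^*+3}
\le (1+\hat{\epsilon})^{8}(d^*+1).
\]
Dividing by $(1-\delta_{u_t})\ge 1-\epsilon$ and passing from $\widetilde{d}(u_t)$ to $d(u_t)$ via a further factor of $1/(1-\hat{\epsilon})$ yields
\[
d(u_t)\le \frac{(1+\hat{\epsilon})^{8}}{(1-\epsilon)(1-\hat{\epsilon})}\,(d^*+1).
\]
Choosing $\hat{\epsilon}=\epsilon/\Theta(\log n)$ per line~\ref{algline:smallerror}, the $(1+\hat\epsilon)^{O(1)}$ factors compose to $1+O(\epsilon/\log n)$, and the right-hand side becomes $(1+\epsilon+O(\epsilon^2))(d^*+1)$; a rescaling $\epsilon\mapsto \epsilon/c$ for an appropriate constant $c$ recovers the stated $(1+\epsilon)$-approximation without affecting the asymptotic runtime or the $\epsilon^{-2}$ dependence.

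The main obstacle I anticipate is the careful bookkeeping of the compounding slack factors, which must be balanced against two qualitatively different sources of error: the $O(1)$-many $(1+\hat{\epsilon})$-style terms (from the bucket width, the $(1+\hat{\epsilon})^7$ trimming slack, and $\textsc{EstimateFill1Degree}$ accuracy), which are individually tiny because $\hat{\epsilon}=\Theta(\epsilon/\log n)$; and the $1/(1-\delta_{u_t})$ term, which can be as large as $1/(1-\epsilon)\approx 1+\epsilon$ and therefore consumes essentially the entire approximation budget on its own. The remedy is exactly the rescaling argument above together with the observation that $u_t$'s survival of trimming, guaranteed by Lemma~\ref{lem:trimmingisokay}, lets us compare against the \emph{specific} representative $u'\in V_{i^*}$ rather than an arbitrary surviving candidate; without this comparison the bucket label $j$ of the surviving minimizer would be uncontrolled and the multiplicative loss could grow with the number of buckets. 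Finally, the high-probability guarantee follows by a union bound over all $n$ pivots, since each of the $n^{-\Omega(1)}$ failure probabilities for Theorems~\ref{thm:ApproxDegreeDS}, \ref{thm:DegreeEstimation}, and Lemmas~\ref{lem:ApproxFactor}, \ref{lem:FindingDecayedMin}, \ref{lem:trimmingisokay} can be made polynomially small with room to spare.
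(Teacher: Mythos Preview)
Your proposal is correct and follows essentially the same approach as the paper's proof: both condition on the high-probability events for \textsc{ApproxDegreeDS\_Report}, \textsc{EstimateFill1Degree}, and the perturbation bound (Lemma~\ref{lem:ApproxFactor}), invoke Lemmas~\ref{lem:BucketBounds}, \ref{lem:FindingDecayedMin}, and \ref{lem:trimmingisokay} in the same roles, and then take a union bound over the $n$ steps. The only cosmetic difference is that the paper phrases the conclusion as ``$u_t$ is the $\epsilon$-decayed minimum over all values of $\textsc{EstimateFill1Degree}(v,\epsilon)$'' and then appeals to Lemma~\ref{lem:ApproxFactor}, whereas you cut straight to a single comparison with a representative $u'\in V_{i^*}$; your version makes the compounding $(1+\hat\epsilon)^{O(1)}$ factors and the final $\epsilon\mapsto\epsilon/c$ rescaling explicit, which the paper leaves implicit.
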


\begin{proof}
We prove by induction that for some constant $c > 1$, after $t$ steps of the
algorithm, the output is a $(1+\epsilon)$-approximate marginal min-degree
ordering with probability at least $1 - tn^{-c}$.
The base case when $t = 0$ follows trivially because nothing has happened.
For the inductive hypothesis, assume that after $t$ steps the sequence
$(u_1, u_2, \dots, u_t)$ is a
$(1+\epsilon)$-approximate min-degree ordering
and let the graph state be $\Gfill{t}$, where the eliminated vertices
$u_1,u_2,\dots,u_t$ have been pivoted.

By Lemma~\ref{lem:BucketBounds}, all values in a given bucket are within
a factor of $1+c_2\hat{\epsilon}$ of each other.
We use the guarantees of Lemma~\ref{lem:FindingDecayedMin}
to compute the $\epsilon$-decayed minimum candidate of each bucket.  It follows
from Lemma~\ref{lem:trimmingisokay} that after we trim the candidate
list, one of the remaining candidates is the original minimizer of
$(1-\delta_v)\textsc{EstimateFill1Degree}(v,\epsilon)$ with high probability.
Therefore,~$u_t$ is the $\epsilon$-decayed minimum over all values of
$\textsc{EstimateFill1Degree}(v,\epsilon)$ with high probability.
Lastly, invoking the bound on distortions incurred by $\epsilon$-decay
in Lemma~\ref{lem:ApproxFactor} and accounting for the error of
\textsc{EstimateFill1Degree}, the 1-degree of $u_t$ is within $1+\epsilon$ of
the minimum 1-degree in the fill graph $\Gfill{t}$ with high probability.
Taking a union bound over all the high probability claims, we have a failure
probability of at most $n^{-c}$.
Thus, the inductive hypothesis also holds for $t+1$.
\end{proof}

We now analyze the cost of the algorithm.  To do this, we first show that if a
vertex is close to the global $\epsilon$-decayed minimum, then there is a
reasonable chance that it actually is the minimizer.  In other words, if the
algorithm queries the approximate degree of a vertex, then it is likely that
this vertex belongs to the $\epsilon$-decayed approximate degree sequence.
This explains the trimming condition in \textsc{ApproxMinDegreeSequence}.

\begin{lemma}
\label{lem:CandidateChosen}
For any constant $c_3 \ge 1$, choice of error $\hat{\epsilon}$,
sequence of values $(x_1,x_2, \dots, x_k)$,
and index $i \in [k]$, we have
\begin{align*}
  & \Pr\left[
  \normalfont{\text{$i$ corresponds to the $\epsilon$-decayed minimum
          of $(x_1, x_2, \dots, x_k)$}}\right]\\
  &\hspace{6.5cm} \geq \exp\left( -2 c_3 \right)
  \Pr\left[
\left( 1 - \delta_{i} \right) x_{i}
  < \left( 1 +  \hat{\epsilon}\right)^{c_{3}}
  \min_{j \in [k]} \left( 1 - \delta_{j} \right) x_{j}\right].
\end{align*}
\end{lemma}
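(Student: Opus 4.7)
The plan is to condition on all $\delta_{j}$ with $j\ne i$, which reduces both events to simple threshold conditions on the single exponential variable $\delta_{i}$, and then to exploit the memoryless property of the exponential distribution to bound the ratio of probabilities.

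Concretely, fix the values $\delta_{j}$ for $j\ne i$, and let $M=\min_{j\ne i}(1-\delta_{j})x_{j}$. Writing $y_{j}=(1-\delta_{j})x_{j}$, the event that index $i$ corresponds to the $\epsilon$-decayed minimum of $(x_{1},\ldots,x_{k})$ is the event $(1-\delta_{i})x_{i}\le M$, i.e.\ $\delta_{i}\ge \alpha$ where $\alpha\defeq 1-M/x_{i}$. Similarly, the inner event on the right-hand side, $(1-\delta_{i})x_{i}<(1+\hat{\epsilon})^{c_{3}}\min_{j}(1-\delta_{j})x_{j}$, becomes $\delta_{i}>\beta$, where $\beta\defeq 1-(1+\hat{\epsilon})^{c_{3}}M/x_{i}$. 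Since $(1+\hat{\epsilon})^{c_{3}}\ge 1$, we have $\beta\le\alpha$.

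Because $\delta_{i}=\hat{\epsilon}\cdot X_{i}$ with $X_{i}\sim \Exp(1)$, for any threshold $t$ we have $\Pr[\delta_{i}\ge t]=\exp(-t^{+}/\hat{\epsilon})$, where $t^{+}=\max(t,0)$. Substituting,
\[
\frac{\Pr[\delta_{i}\ge \alpha \mid \delta_{-i}]}{\Pr[\delta_{i}>\beta \mid \delta_{-i}]}
\;=\;
\exp\!\left(-\frac{\alpha^{+}-\beta^{+}}{\hat{\epsilon}}\right)
\;\ge\;
\exp\!\left(-\frac{(\alpha-\beta)^{+}}{\hat{\epsilon}}\right).
\]
Now $\alpha-\beta = ((1+\hat{\epsilon})^{c_{3}}-1)\cdot M/x_{i}$. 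If $\beta\le 0$, then $M/x_{i}\le (1+\hat{\epsilon})^{-c_{3}}$, and otherwise (when $\beta>0$) we have $M/x_{i}\le 1$. In either case $(\alpha-\beta)^{+}\le (1+\hat{\epsilon})^{c_{3}}-1$. Using the standard estimate $(1+\hat{\epsilon})^{c_{3}}\le \exp(c_{3}\hat{\epsilon})\le 1+2c_{3}\hat{\epsilon}$ valid whenever $c_{3}\hat{\epsilon}$ is at most a small constant (a regime guaranteed by the setting of $\hat{\epsilon}$ in Line~\ref{algline:smallerror}), we obtain $(\alpha-\beta)^{+}/\hat{\epsilon}\le 2c_{3}$. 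Hence conditionally on the values $\delta_{-i}$,
\[
\Pr\!\left[i \text{ is the decayed min}\,\big|\,\delta_{-i}\right]
\;\ge\;
\exp(-2c_{3})\,\Pr\!\left[(1-\delta_{i})x_{i}<(1+\hat{\epsilon})^{c_{3}}\min_{j}(1-\delta_{j})x_{j}\,\big|\,\delta_{-i}\right].
\]
Taking expectations over $\delta_{-i}$ yields the claimed inequality.

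The main obstacle is small and purely analytic: verifying that the ratio $(\alpha-\beta)/\hat{\epsilon}$ is bounded by $2c_{3}$ in both sign regimes of $\beta$, which amounts to checking $((1+\hat{\epsilon})^{c_{3}}-1)/\hat{\epsilon}\le 2c_{3}$. The conceptual content is entirely captured by the observation that after conditioning on the other randomness, both events reduce to tail events for a single exponential variable whose ratio is controlled by memorylessness.
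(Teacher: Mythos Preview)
Your approach is essentially the paper's: condition on $\delta_{-i}$, reduce both events to tail events for the single exponential $\delta_i$, and compare via memorylessness. The paper phrases the comparison as $\Pr[\delta_i > \gamma + 2c_3\hat{\epsilon}\mid \delta_i > \gamma]=\exp(-2c_3)$ and then checks that $\delta_i>\gamma+2c_3\hat{\epsilon}$ forces $(1-\delta_i)x_i<m$; your ratio-of-tails formulation is the same thing.

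There is, however, a small slip in your case analysis. The implication ``$\beta\le 0\Rightarrow M/x_i\le(1+\hat{\epsilon})^{-c_3}$'' has the inequality reversed (in fact $\beta\le 0$ means $M/x_i\ge(1+\hat{\epsilon})^{-c_3}$), so when $M/x_i>1$ the quantity $(\alpha-\beta)^+=\alpha-\beta$ is \emph{not} bounded by $(1+\hat{\epsilon})^{c_3}-1$, and your displayed chain through $(\alpha-\beta)^+$ breaks. The fix is immediate: bound $\alpha^{+}-\beta^{+}$ directly instead of passing through $(\alpha-\beta)^+$. If $\alpha\le 0$ then $\alpha^{+}-\beta^{+}=0$ and the ratio is $1$; if $\alpha>0$ then $M/x_i<1$, whence $\alpha^{+}-\beta^{+}\le\alpha-\beta=((1+\hat{\epsilon})^{c_3}-1)\,M/x_i<(1+\hat{\epsilon})^{c_3}-1\le 2c_3\hat{\epsilon}$, exactly the bound you need. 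With this correction the argument is complete and matches the paper.
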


\begin{proof}
Without loss of generality, suppose we generate $\delta_{i}$ last. Let
\[
  m = \min_{j \in[k]\setminus\{i\}} \left(1 - \delta_{j} \right) x_j
\]
be the previous $\epsilon$-decayed minimum.
If $m \geq x_i$ then both probabilities in the claim are equal to $1$ and the
result holds trivially.
Otherwise, consider the probability that $i$ corresponds to the minimizer
conditioned on the event
\[
\left(1 - \delta_{i} \right) x_i
< \left( 1 +  \hat{\epsilon} \right)^{c_{3}} m.
\]
Equivalently, assume that
$\delta_{i}
  >
  \gamma,$
for some $\gamma$ such that
$(1 -  \gamma ) x_i = \left( 1 +  \hat{\epsilon} \right)^{c_{3}} m$.
By the memoryless property of the exponential distribution,
we have
\begin{align*}
  \Pr\left[\delta_i > \gamma + 2c_3\hat{\epsilon} \mid \delta_i > \gamma\right] &=
  \Pr\left[\delta_i > 2c_3\hat{\epsilon}\right] \\
  &=\exp\left(-2c_3\right).
\end{align*}
Therefore, with probability at least $\exp(-2c_3)$, it follows that
\begin{align*}
  \left(1-\delta_i\right)x_i &< \left(1 - \gamma - 2c_3 \hat{\epsilon}\right)x_i\\
  &\le \left(1-2c_3 \hat{\epsilon}\right)\left(1-\gamma\right)x_i\\
  &= \left(1-2c_3\hat{\epsilon}\right)\left(1+\hat{\epsilon}\right)^{c_3} m\\
  &\le m.
\end{align*}
This means that if the decayed value of $x_i$ is within
a threshold of the previous minimum $m$, then~$x_i$ itself
will decay below $m$ with at least constant probability. 
\end{proof}

Making the substitution $c_3 = 7$,
as in line~\ref{algline:trimGlobalCandidates} of
\textsc{ApproxMinDegreeSequence}
in Figure~\ref{fig:ApproxMinDegreeSequence},
gives the following corollary,
which allows us to prove our main result.

\begin{corollary}\label{corollary:trimGlobalCandidates}
If a vertex $v$ is in $\variable{candidates}[t]$
after line~\ref{algline:trimGlobalCandidates} of \textsc{ApproxMinDegreeSequence},
then with probability at least $\exp(-14)$,
$v$ is the $\epsilon$-decayed minimum.
\end{corollary}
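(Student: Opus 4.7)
The plan is to reduce the corollary directly to Lemma~\ref{lem:CandidateChosen} by setting $c_3 = 7$ and identifying the sequence $(x_1, x_2, \dots, x_k)$ with the bucket-indexed representative values $(1 + \hat{\epsilon})^{i}$ that appear in the trimming condition on line~\ref{algline:trimGlobalCandidates}. More precisely, for each vertex $v$ with entry $(\delta_v, v, i_v) \in \variable{candidates}[t]$, I would define $x_v \defeq (1 + \hat{\epsilon})^{i_v}$, which is a fixed (non-random) value once the bucketing has been performed. Then the $\delta_v$ in the algorithm are exactly the $\hat{\epsilon} \cdot \Exp(1)$ perturbations used in the lemma, generated freshly and independently by \textsc{ExpDecayedCandidates}.

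With this identification, the trimming condition in line~\ref{algline:trimGlobalCandidates} is precisely the event
\[
  \left(1 - \delta_{v}\right) x_v
  \;<\;
  (1 + \hat{\epsilon})^{7}
  \min_{u \in \variable{candidates}[t]}
    \left(1 - \delta_{u}\right) x_u,
\]
which matches the right-hand side of Lemma~\ref{lem:CandidateChosen} after substituting $c_3 = 7$. Since $v$ is assumed to be in $\variable{candidates}[t]$ after trimming, this event occurs with probability $1$ conditional on $v$ being retained. Therefore, applying Lemma~\ref{lem:CandidateChosen} with $c_3 = 7$ shows that $v$ is the $\epsilon$-decayed minimum of the $x$-values with probability at least $\exp(-2 \cdot 7) = \exp(-14)$.

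The main subtlety I would need to address carefully is the gap between the representative values $(1+\hat{\epsilon})^{i_v}$ used in the trimming rule and the actual approximate 1-degrees used elsewhere in the algorithm. However, by Lemma~\ref{lem:BucketBounds}, every vertex in bucket $V_{i_v}$ has its approximate 1-degree within a $(1 + \hat{\epsilon})^{7}$ factor of $(1 + \hat{\epsilon})^{i_v}$, so the two notions of ``$\epsilon$-decayed minimum'' coincide up to a lower-order constant factor already absorbed in the choice of $c_3$. Because Lemma~\ref{lem:CandidateChosen} is stated in full generality for any sequence $(x_1, \dots, x_k)$ and the $\delta_v$ are regenerated independently at each call, no new independence issue arises—this is exactly the role of the exponential decay injected by \textsc{ExpDecayedCandidates}. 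The corollary then follows immediately.
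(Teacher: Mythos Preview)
Your proposal is correct and matches the paper's own argument, which is simply the sentence ``Making the substitution $c_3 = 7$ \dots gives the following corollary'': you identify the sequence $(x_v)$ with the bucket representatives $(1+\hat\epsilon)^{i_v}$, recognize that the trimming condition on line~\ref{algline:trimGlobalCandidates} is exactly the event on the right-hand side of Lemma~\ref{lem:CandidateChosen} with $c_3=7$, and conclude the $\exp(-14)$ bound. You have in fact supplied more detail than the paper does, including the observation (needed to pass from the lemma's unconditional inequality to the corollary's conditional statement) that being the $\epsilon$-decayed minimum trivially implies the trimming event, and the remark via Lemma~\ref{lem:BucketBounds} that the bucket representatives and the actual estimated degrees differ only by a factor already absorbed into the constant $c_3$.
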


\begin{proof}[Proof of Theorem~\ref{thm:ApproxMinDegree}.]
The correctness follows from Lemma~\ref{lem:Correctness}.
We can maintain access to all of the buckets across the sequence of pivots
in a total time of
\[
  O\left(m \log^{3}n \hat{\epsilon}^{-2} \right)
  =
  O\left(m \log^{5}n \epsilon^{-2} \right)
\]
by Theorem~\ref{thm:ApproxDegreeDS}, so all that remains is bounding the total
cost of the calls to $\textsc{EstimateFill1Degree}$.

By Theorem~\ref{thm:DegreeEstimation},
the cost of maintaining the component graph under pivots is $O(m \log^2{n})$,
a lower order term.
To analyze the aggregate cost of the calls to $\textsc{EstimateFill1Degree}$,
we utilize Corollary~\ref{corollary:trimGlobalCandidates}, which states that
any vertex in $\variable{candidates}[t]$ (after trimming) is the one we pivot
with constant probability.
Specifically, we prove by induction on the number of remaining vertices
that for some constant $c_4$
the expected cost of calling $\textsc{EstimateFill1Degree}$
is bounded by
\[
  c_4 
  \left( 
  \sum_{u \in \Vfill{t} }
  \degrem{,t}\left( u \right)
  \right)
  \log^{2}{n} \hat{\epsilon}^{-2}.
\]

\noindent
The base case $t = n$ follows trivially since no vertices remain.
Now suppose that the claim is true for $t+1$ vertices.
By the induction hypothesis,
the expected cost of the future steps is bounded by
\begin{align*}
  &\sum_{u \in \Vfill{t}} \Pr\left[\text{$u$ is the $\epsilon$-decayed minimum}\right] \cdot
c_4 
\left( 
  -\degrem{,t}(u) +
  \sum_{v \in \Vfill{t} }
  \degrem{,t}(v) 
\right)
\log^{2}{n} \hat{\epsilon}^{-2}\\
  &\hspace{0.0cm}=
c_4 
  \left( 
    \sum_{u \in \Vfill{t}} \degrem{,t}(u)
  \right)
  \log^{2}{n} \hat{\epsilon}^{-2}
 -
c_4 
  \sum_{u \in \Vfill{t}} \Pr\left[\text{$u$ is the $\epsilon$-decayed minimum}\right]
  \cdot
  \degrem{,t}(u)
\log^{2}{n} \hat{\epsilon}^{-2}.
\end{align*}

\noindent
Now we consider the cost of evaluating $\textsc{EstimateFill1Degree}(u,
\hat{\epsilon})$ at time $t$ if $u \in \variable{candidates}[t]$.
By Corollary~\ref{corollary:trimGlobalCandidates} and expanding
the conditional probability, we have
\[
  \Pr\left[\text{$u$ is the $\epsilon$-decayed minimum}\right]
  \ge \exp(-14)
  \Pr\left[u \in \variable{candidates}[t]\right].
\]
Therefore, using Theorem~\ref{thm:DegreeEstimation}, the expected cost of these calls is
\begin{align*}
  &\sum_{u \in \Vfill{t}}
  \prob{}{u \in \variable{candidates}[t]} \cdot 
  c_3  \degrem{,t}(u)
\log^{2}{n} \hat{\epsilon}^{-2}\\
  &\hspace{4.0cm}\leq c_3 \exp(14) 
  \sum_{u \in \Vfill{t}} \prob{}{\text{$u$ is the $\epsilon$-decayed minimum}}
\cdot
  \degrem{,t}(u)
\log^{2}{n} \hat{\epsilon}^{-2}.
\end{align*}
It follows that the inductive hyptothesis holds
for all $t$ by letting $c_4 = c_3 \exp(14)$.
The initial sum of the remaining degrees is $O(m)$,
so the total expected cost of calling \textsc{EstimateFill1Degree} is
\[
  O\left( m \log^{2}{n} \hat{\epsilon}^{-2} \right)
  =
  O\left( m \log^{4}{n} \epsilon^{-2} \right),
\]
which completes the proof.
\end{proof}

\section{Estimating the Fill 1-Degree of a Vertex}
\label{sec:DegreeEstimation}

This section discusses routines for approximating the fill 1-degree of a vertex
in a partially eliminated graph.
We also show how to maintain the partially eliminated
graph throughout the course of the algorithm, which
allows us to prove Theorem~\ref{thm:DegreeEstimation}.
The partially eliminated graph we use for degree estimation is
the component graph $\Gcomp{}$, where connected components of the
eliminated vertices are contracted into single vertices called \emph{component}
vertices.
See Section~\ref{subsec:PreliminariesGaussian} for a detailed explanation.

Our goal is to efficiently approximate the fill 1-degree of a given remaining
vertex $u$. By the definition of fill 1-degree and the neighborhoods
of component graphs, it follows that
\[
  \degfill{}(u) + 1 = \left|\{u\}\cup\Nrem{}(u) \cup \bigcup_{x \in \Ncomp{}(u)} \Nrem{}(x) \right|. 
\]
In other words, the fill 1-neighborhood of $u$ is set of remaining 1-neighbors
of $u$ in the original graph in addition to the remaining neighbors of
each component neighbor of $u$.

This union-of-sets structure has a natural $(0,1)$-matrix interpretation, where
columns correspond to remaining vertices and rows correspond to neighboring
component neighborhoods of $u$ (along with an additional row for the
$1$-neighborhood of $u$).
For each row $i$, set the entry $A(i,j)=1$ if vertex~$j$ is in the $i$-th
neighborhood set and let $A(i,j)=0$ otherwise.
The problem can then be viewed as querying for the number of nonzero
columns of $A$.
Specifically, we show how one can accurately estimate fill 1-degrees using the
following matrix queries:
\begin{itemize}
  \item $\textsc{RowSize}(A,i)$: Return the number of nonzero elements in row $i$ of $A$.
  \item $\textsc{SampleFromRow}(A,i)$: Returns a column index $j$ uniformly at
    random from the nonzero entries of row $i$ of $A$.
  \item $\textsc{QueryValue}(A,i,j)$: Returns the value of $A(i,j)$.
\end{itemize}

\noindent
The main result in this section is the follow matrix sampler.

\NonZeroColumnEstimator

Before analyzing this matrix-based estimator, we
verify that Lemma~\ref{lem:NonZeroColumnEstimator} can be used
in the graph-theoretic setting
to prove Theorem~\ref{thm:DegreeEstimation}.
We use the following tools for querying degrees and sampling neighbors in a
component graph as it undergoes pivots.

\begin{restatable}[]{lemma}{DegreeEstimationDS}
\label{lem:DegreeEstimationDS}
We can maintain a component graph under
vertex pivots in a total time of $O(m \log^2{n})$.
Additionally, this component graph data structure grants $O(\log{n})$ time oracle access for:
\begin{itemize}
\item Querying the state of a vertex.
\item Querying the component or remaining neighborhood (and hence degree) of a vertex.
\item Uniformly sampling a remaining neighbor of a component or remaining
  vertex.
\item Uniformly sampling a random component vertex.
\end{itemize}
\end{restatable}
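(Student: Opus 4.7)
The plan is to implement the component graph using balanced binary search trees (e.g.\ treaps) augmented with subtree-size statistics. Each remaining vertex $u$ stores two BSTs representing $\Nrem{}(u)$ and $\Ncomp{}(u)$; each component vertex $x$ stores a BST representing $\Nrem{}(x)$; and one global BST holds every vertex in $\Vcomp{}$. Vertex states (remaining versus eliminated, and which component an eliminated vertex currently belongs to) are tracked either through a direct status array together with a union--find structure, or by having each eliminated vertex carry a pointer to its component representative. Augmented BSTs support insertion, deletion, size queries, and uniform sampling of an element in $O(\log n)$ time each, so every oracle query listed in the lemma (state, neighborhood pointer, degree, uniform sample of a neighbor, uniform sample of a component vertex) runs in $O(\log n)$ time.

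For the pivot of a vertex $v$, I would first delete $v$ from $\Nrem{}(u)$ for every $u \in \Nrem{}(v)$, and then form a fresh component vertex $x^{*}$ by fusing the $v$-singleton together with every $x \in \Ncomp{}(v)$. The fusion proceeds in the small-to-large style: I designate as the base the component with the largest $\Nrem{}$, and then fold the remaining components into $x^{*}$ one at a time. Folding a component $y$ into $x^{*}$ means iterating over each $u \in \Nrem{}(y)$ and (a) deleting $y$ from $\Ncomp{}(u)$, (b) inserting $x^{*}$ into $\Ncomp{}(u)$ unless already present, and (c) inserting $u$ into $\Nrem{}(x^{*})$ unless already present. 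The global BST of $\Vcomp{}$ and the union--find structure are updated accordingly, and $v$'s own BSTs are discarded.

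The amortized analysis mirrors the potential argument outlined in Section~\ref{subsec:OverviewSketching}. Using the potential $\Phi(\Gcomp{t}) = \sum_{x \in \Vcomp{,t}} D(x) \log D(x)$, where $D(x)$ denotes the total \emph{original} degree of the vertices ever absorbed into $x$, the small-to-large rule guarantees that any particular original edge is touched by at most $O(\log n)$ fold operations; each touch consists of a constant number of BST insertions/deletions at cost $O(\log n)$. Summing over all pivots yields the claimed $O(m \log^2 n)$ total update time, which dominates the $O(\deg(v) \log n)$ cost of the initial deletions from $\Nrem{}(u)$ performed for each pivoted~$v$.

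The main obstacle I expect is handling duplicates cleanly during the folding step: a remaining vertex $u$ may appear simultaneously in $\Nrem{}(y_1), \Nrem{}(y_2)$, and $\Nrem{}(v)$, and the pivot must collapse these to a single edge $(u, x^{*})$ without inflating the work. I would resolve this by making every insertion idempotent via a BST membership check, and by charging the cost of each redundant insertion either to the disappearance of a parallel edge from the component graph or to the small-to-large size-doubling invariant. Under this charging the aggregate work remains bounded by (edges ever appearing in the component graph) $\cdot\, O(\log^2 n)$, which is $O(m \log^2 n)$ because the component graph is always a minor of the original graph and no new edges are introduced by contractions.
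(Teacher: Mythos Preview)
Your proposal is correct and takes essentially the same approach as the paper: store all neighborhoods in balanced BSTs, merge component neighborhoods small-to-large on pivots, and read off the $O(\log n)$ oracle operations and $O(m\log^2 n)$ total update cost. The paper's own proof is terser---it just asserts ``each element is inserted at most $O(\log n)$ times'' without discussing duplicates---so your explicit charging of redundant insertions to edge disappearances is, if anything, more careful than what appears in the paper. One small remark: the potential $\Phi(\Gcomp{t})=\sum_x D(x)\log D(x)$ you cite from Section~\ref{subsec:OverviewSketching} is actually deployed in the paper for the \emph{sketch} maintenance (bounding \textsc{InformRemaining} calls), not for the bare component-graph updates; for Lemma~\ref{lem:DegreeEstimationDS} the plain small-to-large doubling argument suffices and is what the paper uses.
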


\noindent
We defer the proof of Lemma~\ref{lem:DegreeEstimationDS} to
Section~\ref{sec:DynamicGraphs}.

Assuming the correctness of Lemma~\ref{lem:NonZeroColumnEstimator}
and Lemma~\ref{lem:DegreeEstimationDS}, we can easily
prove Theorem~\ref{thm:DegreeEstimation}, which allows us to efficiently
estimate the fill 1-degrees of vertices throughout the algorithm.

\begin{proof}[Proof of Theorem~\ref{thm:DegreeEstimation}]
We can implicitly construct $A$ and simulate the matrix operations as follows.
Storing the adjacency list of the component graph using binary search trees
clearly implies that we can implement \textsc{RowSize} and \textsc{SampleFromRow}.
Moreover, in this setting \textsc{QueryValue} corresponds to querying
connectivity, which again is simple because we use binary search trees.
Substituting in the runtime bounds gives the desired result.
\end{proof}

The rest of the section is outlined as followed.
We prove a weaker but relevant version of
the matrix estimator (Lemma~\ref{lem:NonZeroColumnEstimator})
in Section~\ref{subsec:DegreeEstimation_Matrix}.
This algorithm relies on a subroutine to estimate the mean of a
distribution, which we discuss in Section~\ref{subsec:MeanEstimation}.
Then by more carefully analyzing the previous two algorithms, we prove the
original estimation result in Section~\ref{subsec:DegreeEstimation_Better}.

\subsection{Approximating the Number of Nonzero Columns using Mean Estimators}
\label{subsec:DegreeEstimation_Matrix}

We begin by defining an estimator for counting the number of nonzero
columns of $A$.
Let
\[
  \textsc{ColumnSum}(A,j)
  \defeq
  \sum_{i=1}^r A(i, j),
\]
and consider the normalized matrix $B$ such that
\begin{align*}
  B(i,j) = \begin{cases}
    0 & \text{if $A(i,j)=0$} \\
    \textsc{ColumnSum}(A,j)^{-1} & \text{if $A(i,j) = 1$}.
  \end{cases}
\end{align*}
If we know the number of nonzeros in $B$ and can uniformly sample nonzero
entries of $B$, then we can use this distribution on $B$ as an unbiased
estimator for the number of nonzero columns in $A$.
We explicitly capture this idea with the following lemma.

\begin{lemma}
\label{lem:EstimatorCorrectness}
If $X$ is a uniformly random nonzero entry of $B$,
then
\[
  \E\left[X\right] = \frac{\textsc{NonzeroColumns}(A)}{\nnz(A)}.
\]
\end{lemma}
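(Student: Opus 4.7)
The plan is to compute $\E[X]$ directly from the definition of $B$ by partitioning the sum over its nonzero entries by columns. First I would observe that $B$ has exactly the same support as $A$, since we only rescale nonzero entries by a positive factor $\textsc{ColumnSum}(A,j)^{-1}$. In particular, $\nnz(B) = \nnz(A)$, so the uniform distribution over nonzero entries of $B$ assigns probability $1/\nnz(A)$ to each such entry.

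Next I would write out the expectation as
\begin{align*}
  \E[X] = \sum_{(i,j)\,:\,A(i,j)=1} \frac{1}{\nnz(A)} \cdot B(i,j)
        = \frac{1}{\nnz(A)} \sum_{j\,:\,\textsc{ColumnSum}(A,j)>0} \sum_{i\,:\,A(i,j)=1} \frac{1}{\textsc{ColumnSum}(A,j)}.
\end{align*}
The key observation is then that the inner sum telescopes: for any column $j$ with a nonzero entry, there are exactly $\textsc{ColumnSum}(A,j)$ terms, each equal to $1/\textsc{ColumnSum}(A,j)$, so the inner sum equals $1$. Hence the outer sum just counts the columns of $A$ that contain at least one nonzero, which is exactly $\textsc{NonzeroColumns}(A)$, yielding $\E[X] = \textsc{NonzeroColumns}(A)/\nnz(A)$.

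There is no real obstacle here; the normalization was chosen precisely so that each nonzero column contributes mass $1$ to the sum of $B$. The only subtlety worth flagging is that columns with $\textsc{ColumnSum}(A,j) = 0$ contribute nothing to either side (they have no nonzero entries to sample and they are not counted by $\textsc{NonzeroColumns}(A)$), so the reindexing of the sum is justified without needing to worry about division by zero.
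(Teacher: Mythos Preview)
Your proof is correct and is exactly the natural direct computation; the paper states this lemma without proof, relying on the observation (made just before the lemma) that the sum of all nonzero entries of $B$ equals $\textsc{NonzeroColumns}(A)$, which is precisely what your column-by-column grouping establishes.
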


\noindent
Assuming that we can uniformly sample indices $(i,j)$ of nonzero entries of
$B$, Lemma~\ref{lem:EstimatorCorrectness} implies that it is sufficient to
estimate column sums. We show how to do this by estimating the mean of an
appropriately chosen Bernoulli distribution on the column. All of
our estimators in this section use a general-purpose
\textsc{EstimateMean} algorithm (given in Figure~\ref{fig:EstimateMean}),
for any distribution over $[0,1]$.
We present its accuracy guarantees and sample complexity next,
and defer the proof to Section~\ref{lem:MeanEstimation}.

\begin{figure}[H]
	\begin{algbox}
		
		$\textsc{EstimateMean}(D, \sigma)$
		
    \underline{Input}: access to a distribution $D$ over $[0,1]$,
		cutoff threshold $\sigma > 0$.
		
		\underline{Output}: estimation of the mean of $D$.
		
		\begin{enumerate}
			\item Initialize $\variable{counter} \leftarrow 0$ and $\variable{sum} \leftarrow 0$.
			\item While {$\variable{sum} < \sigma$}:
			\begin{enumerate}
				\item Generate $X \sim D$.
				\item $\variable{sum} \leftarrow \variable{sum} + X$.
				\item $\variable{counter} \leftarrow \variable{counter} + 1$.
			\end{enumerate}
			\item Return $\sigma/\variable{counter}$.
		\end{enumerate}
		
	\end{algbox}
	
  \caption{Pseudocode for an algorithm that estimates the mean of a distribution $D$ on $[0,1]$.}
	\label{fig:EstimateMean}
\end{figure}

\begin{lemma}
\label{lem:MeanEstimation}
Let $D$ be any distribution over $[0,1]$ with (an unknown) mean $\mu$.
For any cutoff $\sigma > 0$ and error $\epsilon > 0$,
with probability at least $1 - \exp(-\epsilon^2 \sigma / 5)$,
the algorithm $\textsc{EstimateMean}(D,\sigma)$:
\begin{itemize}
  \item
  \label{part:RunTime}
  Generates $O(\sigma/\mu)$ samples from the distribution.
  \item
  \label{part:Accuracy}
  Produces an estimated mean $\overline{\mu}$ such that
  $
    \left( 1 - \epsilon \right) \mu
    \leq \overline{\mu}
    \leq
    \left( 1 + \epsilon \right) \mu.
  $
\end{itemize}
\end{lemma}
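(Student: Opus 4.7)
The plan is to analyze the random stopping time $N$ at which $\textsc{EstimateMean}$ terminates (i.e., the first $n$ such that $S_n \defeq X_1 + \cdots + X_n \ge \sigma$) by bounding $S_n$ via Chernoff at two carefully chosen fixed values of $n$ and then translating those tail bounds back to bounds on $N$ using the monotonicity relation $\{N \le n\} = \{S_n \ge \sigma\}$.

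Concretely, I would set $n_{-} \defeq \lfloor \sigma / ((1+\epsilon)\mu) \rfloor$ and $n_{+} \defeq \lceil \sigma / ((1-\epsilon)\mu) \rceil$, so that $n_{-}\mu \approx \sigma/(1+\epsilon)$ and $n_{+}\mu \approx \sigma/(1-\epsilon)$. Since each $X_i \in [0,1]$ is i.i.d.\ with mean $\mu$, the multiplicative Chernoff bound applied to the scaled sums gives
\begin{align*}
  \Pr\left[S_{n_{-}} \ge \sigma\right]
  &= \Pr\left[S_{n_{-}} \ge (1+\epsilon)\, n_{-}\mu\right]
  \le \exp\!\left(-\tfrac{\epsilon^{2}\, n_{-}\mu}{3}\right), \\
  \Pr\left[S_{n_{+}} \le \sigma\right]
  &= \Pr\left[S_{n_{+}} \le (1-\epsilon)\, n_{+}\mu\right]
  \le \exp\!\left(-\tfrac{\epsilon^{2}\, n_{+}\mu}{2}\right).
\end{align*}
Substituting $n_{-}\mu \ge \sigma/(1+\epsilon) - \mu$ and $n_{+}\mu \ge \sigma/(1-\epsilon)$, both exponents are at least $\epsilon^{2}\sigma/5$ for $\epsilon \in (0,1]$ with a bit of room to spare; a union bound yields a combined failure probability at most $\exp(-\epsilon^{2}\sigma/5)$.

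Outside of the failure event, $S_{n_{-}} < \sigma$ forces $N > n_{-}$, and $S_{n_{+}} \ge \sigma$ forces $N \le n_{+}$. This sandwich gives $\sigma/((1+\epsilon)\mu) - 1 < N \le \sigma/((1-\epsilon)\mu) + 1$, which immediately translates (after absorbing the additive $\pm 1$ into the multiplicative error by a mild rescaling of $\epsilon$) to $(1-\epsilon)\mu \le \sigma/N \le (1+\epsilon)\mu$. The same sandwich yields $N = O(\sigma/\mu)$, which is the sample complexity bound.

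The only nontrivial step is the transition from concentration at the two \emph{fixed} indices $n_{-}, n_{+}$ to a statement about the \emph{stopping time} $N$; this is clean because $N$ is the first crossing time of a nondecreasing partial sum process, so $\{N \le n\} \equiv \{S_n \ge \sigma\}$ holds surely. The remaining routine work is tuning constants so that the two Chernoff exponents both dominate $\epsilon^{2}\sigma/5$ and verifying that the additive slack from rounding $n_{-}, n_{+}$ to integers can be absorbed (e.g., by noting $\sigma/\mu \ge 1$ whenever $\sigma \ge 1$, the only regime in which the high-probability guarantee is meaningful).
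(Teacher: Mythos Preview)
Your proposal is correct and follows essentially the same approach as the paper: define two fixed indices $L = \sigma/((1+\epsilon)\mu)$ and $R = \sigma/((1-\epsilon)\mu)$, apply a multiplicative Chernoff bound to $S_L$ and $S_R$ separately (the paper states these as two helper lemmas, each yielding $\exp(-\epsilon^2\sigma/4)$), and then use the monotone relation between the stopping time and the partial sums to sandwich $\variable{counter}$ between $L$ and $R$. The only cosmetic differences are that you are explicit about integer rounding and the monotonicity argument, and the paper absorbs the union-bound factor of $2$ by first proving each tail is at most $\exp(-\epsilon^2\sigma/4)$ before combining to $\exp(-\epsilon^2\sigma/5)$.
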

\noindent

An immediate corollary of Lemma~\ref{lem:MeanEstimation} is a routine
$\textsc{ApproxColumnSum}$ for estimating column sums of $A$, where the running
time depends on the column sum itself.  We give the pseudocode for this
estimator in Figure~\ref{fig:EstimateColumnSum}, and then we prove its
correctness and running time in Lemma~\ref{lem:ColumnSumEstimation}.

\begin{figure}[H]
	
	\begin{algbox}
		
		$\textsc{ApproxColumnSum}(A, j, \epsilon, \delta)$
		
		\underline{Input:} matrix $A$ with $r$ rows,
		column index $j$,
    error $\epsilon > 0$, failure probability $\delta > 0$.\\
		Implicit access to the number of rows $r$ and the number of remaining vertices $n$.
		
		\underline{Output:} estimation for  $\textsc{ColumnSum}(A,j)$.
		
		\begin{enumerate}
			
      \item Let $D_{\text{col}}(j)$ denote the distribution for the random variable that:
        \begin{enumerate}
          \item Chooses a row $i \in [r]$ uniformly at random.
          \item Returns the value of $A(i,j)$.
        \end{enumerate}
      \item Set $\sigma \leftarrow 5 \epsilon^{-2} \log\left( 1 / \delta \right)$. 
      \item Return $r \cdot \textsc{EstimateMean}(D_{\text{col}}(j), \sigma)$.
			
		\end{enumerate}
		
	\end{algbox}
	
	\caption{Pseudocode for approximating the column sum of a matrix.}
	
	\label{fig:EstimateColumnSum}
	
\end{figure}

\begin{lemma}
\label{lem:ColumnSumEstimation}
For any $(0,1)$-matrix $A \in \mathbb{R}^{r \times n}$, column index $j$,
error $\epsilon > 0$,
and failure rate~$\delta > 0$,
invoking $\textsc{ApproxColumnSum}$ returns an $\epsilon$-approximation
to $\textsc{ColumnSum}(A,j)$
with probability at least $1 - \delta$
while making
\[
O\left( \frac{r \log\left( 1/\delta \right) }{\textsc{ColumnSum}(A,j) \epsilon^{2}} \right)
\]
oracle calls to the matrix $A$ in expectation.
\end{lemma}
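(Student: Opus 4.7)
The plan is to apply Lemma~\ref{lem:MeanEstimation} directly to the distribution $D_{\text{col}}(j)$ defined inside $\textsc{ApproxColumnSum}$. First I would observe that $D_{\text{col}}(j)$ is a $\{0,1\}$-valued distribution (so in particular supported on $[0,1]$), and its mean is
\[
  \mu = \frac{1}{r} \sum_{i = 1}^{r} A(i,j) = \frac{\textsc{ColumnSum}(A,j)}{r},
\]
since we pick $i \in [r]$ uniformly at random and return $A(i,j)$. Sampling from $D_{\text{col}}(j)$ costs exactly one $\textsc{QueryValue}$ oracle call, so counting oracle calls is the same as counting samples drawn by $\textsc{EstimateMean}$.

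Next I would plug the chosen cutoff $\sigma = 5 \epsilon^{-2} \log(1/\delta)$ into Lemma~\ref{lem:MeanEstimation}. The failure probability it grants is
\[
  \exp\!\left( - \epsilon^{2} \sigma / 5 \right)
  = \exp\!\left( - \log\left(1/\delta\right) \right)
  = \delta,
\]
so with probability at least $1 - \delta$, the returned estimate $\overline{\mu}$ satisfies $(1 - \epsilon)\mu \le \overline{\mu} \le (1 + \epsilon)\mu$. Since $\textsc{ApproxColumnSum}$ scales the output by $r$, the reported value $r \overline{\mu}$ is a $(1 \pm \epsilon)$-multiplicative approximation of $r \mu = \textsc{ColumnSum}(A, j)$, which is the accuracy claim of the lemma.

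Finally I would read off the sample-complexity bound from Lemma~\ref{lem:MeanEstimation}: the expected number of samples is
\[
  O\!\left( \frac{\sigma}{\mu} \right)
  = O\!\left( \frac{5 \epsilon^{-2} \log(1/\delta) \cdot r}{\textsc{ColumnSum}(A,j)} \right)
  = O\!\left( \frac{r \log(1/\delta)}{\textsc{ColumnSum}(A,j) \epsilon^{2}} \right),
\]
matching the stated oracle-call bound. There is no real obstacle here beyond bookkeeping; the only subtlety worth flagging is that Lemma~\ref{lem:MeanEstimation} controls the \emph{relative} accuracy of the estimator on a distribution whose mean is $\Theta(\textsc{ColumnSum}(A,j)/r)$, which is why the post-scaling by $r$ preserves the $(1 \pm \epsilon)$ guarantee rather than degrading it to an additive error. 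A brief comment should also be made that if $\textsc{ColumnSum}(A,j) = 0$ the algorithm would never terminate; in the usage context (rows correspond to genuine neighborhoods containing $u$) this case can be excluded, or handled by capping the iteration count at $r$ without affecting the claimed bounds.
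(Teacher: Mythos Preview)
Your proposal is correct and matches the paper's proof essentially line for line: both identify $D_{\text{col}}(j)$ as a Bernoulli distribution with mean $\textsc{ColumnSum}(A,j)/r$, invoke Lemma~\ref{lem:MeanEstimation} with $\sigma = 5\epsilon^{-2}\log(1/\delta)$ to get the accuracy and failure-probability guarantees, and read off the $O(\sigma/\mu)$ sample bound. Your write-up is simply more explicit about the scaling step and the degenerate $\textsc{ColumnSum}(A,j)=0$ case, which the paper leaves implicit.
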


\begin{proof}
Observe that $D_{\text{col}}(j)$ is a Bernoulli distribution
with mean $\textsc{ColumnSum}(A,j)/r$.
The success probability follows directly
from our choice of $\sigma$ and the success probability of
\textsc{EstimateMean} for Bernoulli distributions (Lemma~\ref{lem:MeanEstimation}).
Moreover, the total number of matrix-entry queries is
\[
O\left( {
  \frac{\sigma} {\textsc{ColumnSum}(A,j)/r }} 
\right) 
= 
O \left( {
  \frac{r \log(1/\delta) } {\textsc{ColumnSum}(A,j) {\epsilon^{2}}}
} \right),
\]
as desired.
\end{proof}

We can simulate sampling from the nonzero entries of
the reweighted matrix $B$ by using
the inverse of $\textsc{ApproxColumnSum}$ to estimate $B(i,j)$.
Moreover, by sampling enough entries of $B$ so that their sum is
$O(\log{n}\epsilon^{-2})$, we can accurately estimate
$\textsc{NonzeroColumns}(A)$ by Lemma~\ref{lem:EstimatorCorrectness}.
We give the slow version of this estimator in
Figure~\ref{fig:SlowerEstimateNonZeroColumns}, and we prove its
running time in Lemma~\ref{lem:NonZeroColumnEstimator}.

\begin{figure}[H]
	
	\begin{algbox}
		$\textsc{EstimateNonzeroColumns\_Slow}(A, \epsilon)$
		
		\underline{Input:} oracle access to the matrix $A$
		with $r$ rows and $n$ columns,
		error threshold $\epsilon > 0$.
		
		\underline{Output:} estimation for the number of nonzero columns in $A$.
		
		\begin{enumerate}
			
      \item Compute $\nnz(A)$, the total number of nonzeros in $A$.
      \item Let $D_{\text{global}}$ denote the distribution for the random variable that:
			\begin{enumerate}
        \item Chooses a uniformly random nonzero index $(i,j)$ in $A$
				(by first picking a row with probability proportional to its
				number of nonzeros and then picking a random nonzero entry from that row).
				\item Returns
				\[
				  \frac{1}{\textsc{ApproxColumnSum}(A, j, \epsilon, n^{-10})},
				\]
        where the value of $\textsc{ApproxColumnSum}(A, j, \epsilon, n^{-10})$
				is generated once per each column
				and reused on subsequent calls
				(via storage in a binary search tree).
			\end{enumerate}
    \item Set $\sigma \leftarrow 50\epsilon^{-2} \log(n)$
    \item Return $\nnz(A) \cdot \textsc{EstimateMean}(D_{\text{global}}, \sigma)$.
		\end{enumerate}
		
	\end{algbox}
	
  \caption{Pseudocode for (slowly) estimating the number of nonzero columns of a $(0,1)$-matrix.}
	
	\label{fig:SlowerEstimateNonZeroColumns}
	
\end{figure}

Next, we prove the correctness of $\textsc{EstimateNonzeroColumns\_Slow}$,
and then we bound the expected number of times it generates samples from
$D_{\text{global}}(j)$.

\begin{lemma}
\label{lem:ApproxnonZeroColumnsHelper}
With high probability, the estimation
$\textsc{EstimateNonzeroColumns\_Slow}(A, \epsilon)$
is within a factor of $1 \pm \epsilon$ of the number of nonzero columns of $A$.
\end{lemma}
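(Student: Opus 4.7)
The plan is to decompose the estimate's error into two multiplicative sources and control each separately. First, I would condition on the event that every cached invocation of $\textsc{ApproxColumnSum}(A,j,\epsilon,n^{-10})$---at most one per column, hence at most $n$ in total---returns a $(1\pm\epsilon)$-approximation to $\textsc{ColumnSum}(A,j)$. By Lemma~\ref{lem:ColumnSumEstimation} and a union bound, this event holds with probability at least $1 - n^{-9}$.

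Under this conditioning, I would argue that the distribution $D_{\text{global}}$ has mean within a $(1 \pm O(\epsilon))$ factor of the target value $\textsc{NonzeroColumns}(A)/\nnz(A)$. Concretely, the value returned at a uniformly random nonzero index $(i,j)$ is $1/\widetilde{\textsc{ColumnSum}}(A,j)$, which lies in the interval $[1/((1+\epsilon)\textsc{ColumnSum}(A,j)),\ 1/((1-\epsilon)\textsc{ColumnSum}(A,j))]$. Averaging over the $\nnz(A)$ nonzero entries of $A$, exactly as in the unperturbed calculation underlying Lemma~\ref{lem:EstimatorCorrectness}, yields
\[
\E\left[D_{\text{global}}\right] \in \left[\frac{1}{1+\epsilon},\ \frac{1}{1-\epsilon}\right] \cdot \frac{\textsc{NonzeroColumns}(A)}{\nnz(A)}.
\]
I would then invoke Lemma~\ref{lem:MeanEstimation} with cutoff $\sigma = 50\epsilon^{-2}\log n$ to get a $(1\pm\epsilon)$-approximation to $\E[D_{\text{global}}]$ with failure probability $\exp(-\epsilon^2\sigma/5) = n^{-10}$. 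Multiplying the returned value by $\nnz(A)$ and composing the two multiplicative distortions produces a $(1\pm O(\epsilon))$-estimate of $\textsc{NonzeroColumns}(A)$; the claimed $(1\pm\epsilon)$ bound follows by rescaling $\epsilon$ by a constant factor absorbed into the big-$O$.

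The main subtlety I expect is that Lemma~\ref{lem:MeanEstimation} formally requires the underlying distribution to be supported on $[0,1]$, whereas values drawn from $D_{\text{global}}$ can slightly exceed $1$: since every nonzero column has integer column sum at least $1$, its estimate is at least $1-\epsilon$, so samples lie in $[0, 1/(1-\epsilon)]$. The fix is to apply Lemma~\ref{lem:MeanEstimation} instead to the rescaled distribution $D_{\text{global}}/2$, which is supported on $[0,1]$ for $\epsilon \le 1/2$, and then scale the resulting estimate back up by $2$. Apart from this bookkeeping and tracking the multiplicative error composition, the proof is essentially a union bound stitched together from two previously established estimation lemmas.
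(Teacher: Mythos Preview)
Your proposal is correct and follows essentially the same route as the paper: condition on all cached \textsc{ApproxColumnSum} calls succeeding (union bound over at most $n$ columns), deduce that the mean of $D_{\text{global}}$ is within a $(1\pm O(\epsilon))$ factor of $\textsc{NonzeroColumns}(A)/\nnz(A)$, apply Lemma~\ref{lem:MeanEstimation}, compose the two multiplicative errors, and absorb the constant by shrinking~$\epsilon$. Your handling of the $[0,1]$-support hypothesis of Lemma~\ref{lem:MeanEstimation} via rescaling is in fact more careful than the paper, which silently glosses over this point.
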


\begin{proof}
To start, we extract all of the randomness out of
\textsc{EstimateNonzeroColumns\_Slow} by considering running all calls to
$\textsc{ApproxColumnSum}(A, j, \epsilon, n^{-10})$ beforehand.
By Lemma~\ref{lem:ColumnSumEstimation}, 
with high probability, for each column $j$ we have
\[
  \left( 1 - \epsilon \right)
  \textsc{ColumnSum}\left(A, j \right)
  \leq
  \textsc{ApproxColumnSum}\left(A, j, \epsilon, n^{-10}\right)
  \leq
  \left( 1 + \epsilon \right)
  \textsc{ColumnSum}\left(A, j \right).
\]
Therefore, by Lemma~\ref{lem:EstimatorCorrectness}
it follows that
\[
  \left(1 - 2\epsilon\right) \frac{\textsc{NonzeroColumns}(A)}{\nnz(A)}
  \le \mu\left(D_{\text{global}}\right)
  \le \left(1 + 2\epsilon\right) \frac{\textsc{NonzeroColumns}(A)}{\nnz(A)},
\]
for sufficiently small $\epsilon$.
Incorporating the accuracy guarantee from Lemma~\ref{lem:MeanEstimation} gives
\[
  \left(1 - 4\epsilon\right) \frac{\textsc{NonzeroColumns}(A)}{\nnz(A)}
  \leq
  \textsc{EstimateMean}\left(
    D_{\text{global}}, \sigma \right)
  \leq
  \left(1 + 4\epsilon\right) \frac{\textsc{NonzeroColumns}(A)}{\nnz(A)}.
\]
The desired bound follows by quartering $\epsilon$ and multiplying by $\nnz(A)$.
\end{proof}

\begin{proof}[Initial Proof of Lemma~\ref{lem:NonZeroColumnEstimator}
(using $O(r \log^{2}{n} \epsilon^{-4})$ operations).]

The correctness is a consequence of Lemma~\ref{lem:ApproxnonZeroColumnsHelper},
so we must bound the total number of queries to entries of $A$.
Using Lemma~\ref{lem:MeanEstimation} and Lemma~\ref{lem:EstimatorCorrectness},
the expected number of queries made to
$D_{\text{global}}$ is
\[
  O\left(
  \frac{\nnz(A) \log{n}}
  {\textsc{NonzeroColumns}(A) \epsilon^{2}}
  \right).
\]
Therefore, it suffices to bound the expected cost of each sample generated from 
$D_{\text{global}}$.

Applying Lemma~\ref{lem:ColumnSumEstimation} to each column $j$,
the expected number of queries to $A$ made by $\textsc{ColumnSum}(A,j)$ is
\[
O\left( \frac{r \log{n}}{\textsc{ColumnSum}\left(A,j\right) \epsilon^{2}} \right).
\]
Summing over all $\textsc{ColumnSum}(A,j)$ nonzero entries in column $j$
gives
\[
  O\left( \frac{r \log{n}}{\epsilon^{2}} \right)
\]
queries to $A$ per nonzero column.
It follows that the expected number of queries to $A$
per sample generated from $D_{\text{global}}$ is
\[
  O\left(\frac{\textsc{NonzeroColumns(A)}}{\nnz(A)} \cdot
   \frac{r \log{n}}{\epsilon^{2}} \right).
\]
Multiplying this by the expected number of queries to $D_{\text{global}}$
gives the overall result.
\end{proof}

We note that this sample complexity bound also holds with
high probability (instead of only in expectation) by invoking Chernoff bounds.
This is because the cost of each query to $D_{\text{global}}$ is bounded
by $O(r \log{n} \epsilon^{-2})$ and the overall cost bound is larger
by a factor of at least $\Omega(\log{n})$.

\subsection{Estimating the Mean of a Distribution}
\label{subsec:MeanEstimation}

We now provide the details for the mean estimation algorithm, which proves 
the correctness of the column sum estimator.
We analyze the following scheme:
\begin{enumerate}
  \item Generate an infinite stream of i.i.d.\ samples $X_1, X_2,\dots$ from
    any distribution $D$ over $[0,1]$.
  \item Let $\variable{counter} = \min
    \left\{t \ge 0 : {\sum_{i=1}^t X_i \ge \sigma } \right\}$.
	\item Output $\sigma/\variable{counter}$.
\end{enumerate}
This process generates more samples than \textsc{EstimateMean}
in Figure~\ref{fig:EstimateMean}, but
the extra evaluations happen after the subroutine terminates
and thus does not affect the outcome.

Let $\mu$ be the (hidden) mean of the distribution $D$.
For any error $\epsilon > 0$, define the two cutoffs
\begin{equation*}
\label{eq:L}
  L\left(D, \epsilon\right)
  \defeq
  \frac{\sigma }{{(1 + \varepsilon ) \mu }}
\end{equation*}
and
\begin{equation*}
\label{eq:R}
  R\left(D, \epsilon\right)
  \defeq
  \frac{\sigma }{{(1 - \varepsilon ) \mu }}.
\end{equation*}
\noindent
For convenience we write $L=L(D,\epsilon)$ and $R=R(D,\epsilon)$.
We claim that if $L~\le~\variable{counter}~\le R$,
then the estimation we output will be sufficiently accurate.  Therefore, we
first bound the probabilities of the complementary events $\variable{counter} <
L$ and $\variable{counter} > R$.

\begin{lemma}
\label{lem:ProbCounterTooSmall}
Let $D$ be any distribution over $[0,1]$.
For any sequence $X_1, X_2, X_3, \dots$
of i.i.d.\ random variables generated from $D$
and any choice of $\sigma \ge 0$, we have
\[
\Pr\left[
  {\sum_{i=1}^{L} X_i  \ge \sigma } \right]
\leq
\exp \left( - \frac{\epsilon^2 \sigma}{4} \right).
\]
\end{lemma}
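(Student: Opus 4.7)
The plan is to recognize this as a direct application of a standard upper-tail multiplicative Chernoff bound. Setting $S_L \defeq \sum_{i=1}^{L} X_i$, we have $\E[S_L] = L\mu = \sigma/(1+\epsilon)$, so the event $\{S_L \geq \sigma\}$ is exactly $\{S_L \geq (1+\epsilon)\E[S_L]\}$. This is precisely the regime where multiplicative Chernoff is designed to work, and since each $X_i$ lies in $[0,1]$, the hypotheses hold. (If $L$ is not an integer, we use $\lfloor L \rfloor$ with negligible loss in the constants.)

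I would carry out the proof in three steps. First, bound the moment generating function of each $X_i \in [0,1]$. Using the convexity inequality $e^{tx} \leq 1 + x(e^t - 1)$ for $x \in [0,1]$ and $t \geq 0$, we get $\E[e^{t X_i}] \leq 1 + \mu(e^t - 1) \leq \exp(\mu(e^t - 1))$. Second, use independence and Markov's inequality in the standard way:
\[
  \Pr[S_L \geq \sigma]
  \leq e^{-t\sigma} \E[e^{t S_L}]
  \leq \exp\!\left(L\mu(e^t - 1) - t\sigma\right)
  = \exp\!\left(\sigma\left[\tfrac{e^t - 1}{1 + \epsilon} - t\right]\right).
\]
Third, optimize in $t$ by setting $t = \ln(1+\epsilon)$, which yields
\[
  \Pr[S_L \geq \sigma] \leq \exp\!\left(-\sigma\left[\ln(1+\epsilon) - \tfrac{\epsilon}{1+\epsilon}\right]\right).
\]

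The only remaining ingredient is the elementary inequality $\ln(1+\epsilon) - \epsilon/(1+\epsilon) \geq \epsilon^2/4$ for $\epsilon \in (0,1]$, which delivers the claimed $\exp(-\epsilon^2 \sigma / 4)$. I would verify this via the integral representation $\ln(1+\epsilon) - \epsilon/(1+\epsilon) = \int_0^\epsilon \frac{x}{(1+x)^2}\,dx$ and then lower bound the integrand by $x/4$ on $[0,1]$; alternatively, one can compare Taylor series at the origin and check the derivative stays nonnegative on $(0,1]$. This is the only spot where any real arithmetic happens, and it is routine.

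The main (minor) obstacle is just matching the constant $4$ in the exponent rather than the $2+\epsilon$ or $3$ that one gets from textbook statements; this is handled by either invoking the range restriction $\epsilon \leq 1$ implicit in the algorithm's usage, or by keeping the sharper Chernoff form above and applying the stated elementary inequality. No randomness beyond the i.i.d.\ assumption and boundedness is needed, so the argument goes through for any distribution $D$ on $[0,1]$ and any $\sigma \geq 0$, matching the hypotheses of the lemma.
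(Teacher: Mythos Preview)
Your approach is essentially the same as the paper's: recognize that $\sigma = (1+\epsilon)L\mu$ and apply a multiplicative upper-tail Chernoff bound for sums of independent $[0,1]$-valued variables. The paper simply cites the textbook form $\Pr[S_L \ge (1+\epsilon)L\mu] \le \exp(-\epsilon^2 L\mu/3)$ and then absorbs the factor $L\mu = \sigma/(1+\epsilon)$ by taking $\epsilon$ sufficiently small, while you derive the bound from the MGF directly.

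There is one arithmetic slip worth flagging. Your integral argument lower-bounds $\frac{x}{(1+x)^2}$ by $x/4$ on $[0,1]$, but $\int_0^\epsilon \frac{x}{4}\,dx = \epsilon^2/8$, not $\epsilon^2/4$, so that route only yields $\exp(-\epsilon^2\sigma/8)$. In fact the inequality $\ln(1+\epsilon) - \epsilon/(1+\epsilon) \ge \epsilon^2/4$ fails for $\epsilon$ near $1$ (e.g., at $\epsilon=0.8$ the left side is about $0.143$ while the right side is $0.16$); it only holds for roughly $\epsilon \lesssim 0.69$. This is the same issue the paper sweeps under ``considering $\epsilon$ sufficiently small,'' so it is not a flaw unique to your argument, but you should not claim the range $(0,1]$ without that caveat.
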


\begin{proof}
By the linearity of expectation we have
\[
  \E\left[
    \sum_{i=1}^{L} X_i \right]
  = L \mu.
\]
Since $X_{1},X_{2},\dots,X_{L}$ are independent random variables
with values in $[0,1]$, Chernoff bounds give
\begin{align*}
\Pr\left[
  \sum_{i=1}^{L} X_i
  \geq
  \left(1 + \epsilon \right) L\mu 
\right]
  &\leq
  \exp \left( - \frac{\epsilon^2 L\mu}{3} \right) \\
  &\leq \exp \left( - \frac{\epsilon^2 \sigma}{4} \right)
\end{align*}
by letting $\sigma=(1+\epsilon)L\mu$ and considering
$\epsilon$ sufficiently small.
This completes the proof.
\end{proof}

\begin{lemma}
\label{lem:ProbCounterTooBig}
Let $D$ be any distribution over $[0,1]$.
For any sequence $X_1, X_2, X_3, \dots$
of i.i.d.\ random variables generated from $D$
and any choice of $\sigma \ge 0$, we have
\[
  \Pr\left[
    \sum_{i=1}^R X_i  \leq \sigma  \right]
  \leq
  \exp \left( - \frac{\epsilon^2 \sigma}{4} \right).
\]
\end{lemma}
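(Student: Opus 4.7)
The plan is to mirror the argument of Lemma~\ref{lem:ProbCounterTooSmall}, this time invoking the multiplicative Chernoff bound in the lower-tail direction for a sum of $R$ i.i.d.\ random variables bounded in $[0,1]$. By linearity of expectation,
\[
\E\left[\sum_{i=1}^{R} X_i \right] = R\mu = \frac{\sigma}{1-\epsilon}.
\]

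Next I would rewrite the target threshold $\sigma$ in terms of this expectation, noting that $\sigma = (1-\epsilon) R\mu$. Applying the standard multiplicative lower-tail Chernoff bound (valid since each $X_i \in [0,1]$) yields
\[
\Pr\left[\sum_{i=1}^R X_i \leq (1-\epsilon) R\mu\right] \leq \exp\left(-\frac{\epsilon^2 R\mu}{2}\right).
\]
Finally, because $R\mu = \sigma/(1-\epsilon) \geq \sigma$, the right-hand side is at most $\exp(-\epsilon^2 \sigma/2)$, which is in turn bounded by $\exp(-\epsilon^2\sigma/4)$ as claimed.

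There is essentially no obstacle here; the only things to be careful about are (i) using the lower-tail form of Chernoff rather than the upper-tail form used in the preceding lemma, and (ii) confirming that the weaker exponent $\epsilon^2\sigma/4$ (rather than the $\epsilon^2\sigma/2$ that the bound actually delivers) is used so that both lemmas have matching failure probabilities, which is convenient when they are combined via a union bound to prove Lemma~\ref{lem:MeanEstimation}. No hypotheses beyond $X_i \in [0,1]$ and independence are needed, and no restriction on $\sigma$ or $\mu$ enters other than the trivial $\sigma \geq 0$.
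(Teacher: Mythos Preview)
Your proposal is correct and follows essentially the same approach as the paper, which simply says to mirror the proof of Lemma~\ref{lem:ProbCounterTooSmall} using the lower-tail Chernoff bound instead. In fact you supply more detail than the paper does: you make explicit that $R\mu=\sigma/(1-\epsilon)\geq\sigma$, which is what makes the exponent cleanly bounded by $\epsilon^2\sigma/4$ (and indeed by $\epsilon^2\sigma/2$) without any ``$\epsilon$ sufficiently small'' caveat needed in this direction.
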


\begin{proof}
Consider the proof of Lemma~\ref{lem:ProbCounterTooSmall}, and use a Chernoff
bound for the lower tail instead.
\end{proof}

\begin{proof}[Proof of Lemma~\ref{lem:MeanEstimation}]
We will show that the estimator behaves as intended when
$L \le \variable{counter} \le R$.
Considering the complementary events, it is easy to see that
\[
\Pr\left[\variable{counter} \le L \right] \leq \Pr\left[{\sum_{i=1}^{L} X_i  \ge \sigma } \right]
\]
and
\[
\Pr\left[\variable{counter} \geq R \right] \leq \Pr\left[{\sum_{i=1}^{R} X_i  \le \sigma } \right].
\]
Therefore, by Lemma~\ref{lem:ProbCounterTooSmall} and Lemma~\ref{lem:ProbCounterTooBig}
it follows that
\begin{align*}
  \Pr\left[L \le \variable{counter} \le R \right] &\ge
  1 - 2\exp\left(-\frac{\epsilon^2 \sigma}{4}\right) \\
  &\ge 1 - \exp\left(-\frac{\epsilon^2\sigma}{5}\right).
\end{align*}

Assume that $L \le \variable{counter} \le R$ and recall the definitions
of $L$ and $R$.
It follows that the number of samples generated is $O(\sigma/\mu)$.
To prove that $\sigma/\variable{counter}$ is an accurate estimate,
observe that
\[
  \frac{\sigma}{(1+\epsilon)\mu} \le \variable{counter} \le \frac{\sigma}{(1-\epsilon)\mu},
\]
and therefore
\[
  \left(1-\epsilon\right)\mu \le \frac{\sigma}{\variable{counter}} \le \left(1+\epsilon\right)\mu.
\]
This completes the proof.
\end{proof}

\subsection{Improving the Error Bounds Through a More Holistic Analysis}
\label{subsec:DegreeEstimation_Better}

We now give a better running time bound by combining the analyses of the two
previous estimators in a more global setting.  Pseudocode for this final
estimation routine is given in Figure~\ref{fig:EstimateNonZeroColumns}.

\begin{figure}[H]
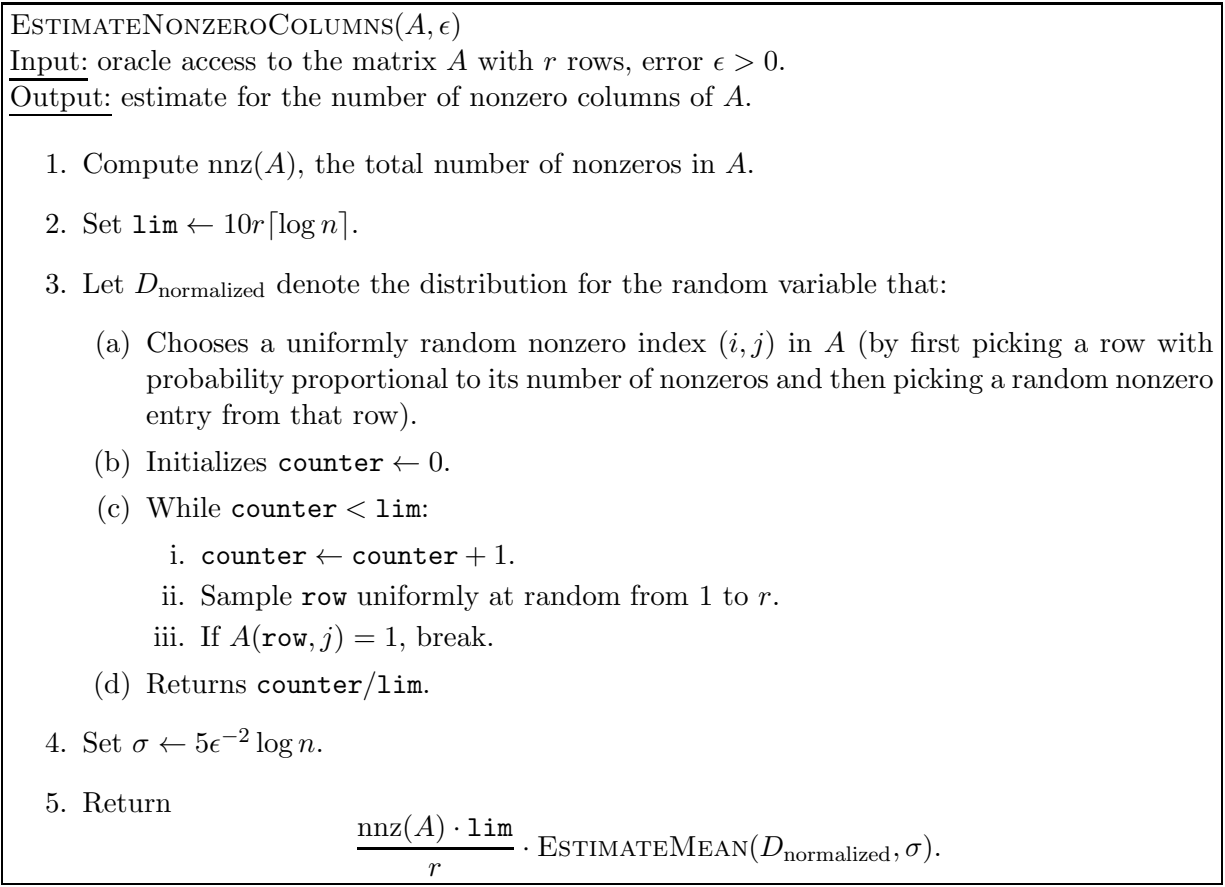


\begin{algbox}
$\textsc{EstimateNonzeroColumns}(A, \epsilon)$

\underline{Input:} oracle access to the matrix $A$
with $r$ rows, 
error $\epsilon > 0$.

\underline{Output:} estimate for the number of nonzero columns of $A$.

\begin{enumerate}

\item Compute $\nnz(A)$, the total number of nonzeros in $A$.
\item Set $\variable{lim} \leftarrow 10r \lceil\log{n}\rceil$.
\item Let $D_{\text{normalized}}$ denote the distribution for the random variable
  that:
  \begin{enumerate}
    \item Chooses a uniformly random nonzero index $(i,j)$ in $A$
      (by first picking a row with probability proportional to its number
      of nonzeros and then picking a random nonzero entry from that row).
    \item Initializes $\variable{counter} \leftarrow 0$.
    \item While $\variable{counter} < \variable{lim}$:
       \begin{enumerate}
         \item $\variable{counter} \leftarrow \variable{counter} + 1$.
         \item Sample $\variable{row}$ uniformly at random from $1$ to $r$.
         \item If $A(\variable{row},j) = 1$, break.
    \end{enumerate}
    \item Returns $\variable{counter}/\variable{lim}$.
\end{enumerate}
\item Set $\sigma \leftarrow 5\epsilon^{-2}\log{n}$.
\item Return
  \[
    \frac{\nnz(A) \cdot \variable{lim}}{r}\cdot\textsc{EstimateMean}(D_{\text{normalized}}, \sigma).
  \]
\end{enumerate}

\end{algbox}

\caption{Pseudocode for a faster estimation of the number of nonzero columns of a matrix.}
\label{fig:EstimateNonZeroColumns}
\end{figure}

\noindent
The essence of this algorithm can be better understood by analyzing a simpler
version of $D_{\text{normalized}}$.
The difference between these two distributions
is that we artificially force $D_{\text{normalized}}$ to be a distribution
over $[0,1]$ (first by truncating and then normalizing)
so that we can use the $\textsc{EstimateMean}$ algorithm.
We show that our threshold of $O(r \log{n})$ for truncating the number of
sampled rows can perturb the expected value by at most a factor of $1/\text{poly}(n)$.

\begin{definition}
\label{def:DCombined}
We define the simpler variant $D_{\text{simple}}$ of $D_{\text{normalized}}$
as follows:
\begin{enumerate}
  \item Sample a nonzero index $(i, j)$ from $A$ uniformly at random.
  \item Return the minimum of $10r\lceil\log{n}\rceil$ and
    the number of random rows $\variable{row}$ until $A(\variable{row}, j)=1$.
\end{enumerate}
\end{definition}

Now we analyze the expected value of $D_{\text{simple}}$ and relate it to
that of $D_{\text{normalized}}$ to
prove the correctness of \textsc{EstimateNonzeroColumns} and
bound the time needed to sample from $D_{\text{normalized}}$.

\begin{lemma}
\label{lem:DNormalizedExpectation}
If $X$ is a random variable drawn from $D_{\text{normalized}}$, then
\[
\left( 1 - \frac{1}{n} \right)
  \frac{r \cdot \textsc{NonzeroColumns}(A)}{\variable{lim} \cdot \nnz(A)}
\leq
  \E\left[X\right]
\leq
  \frac{r \cdot \textsc{NonzeroColumns}(A)}{\variable{lim} \cdot \nnz(A)},
\]
and the expected cost of each sample generated from $D_{\variable{normalized}}$ is
\[
  O\left( \frac{r \cdot \textsc{NonzeroColumns}(A)}{\nnz(A)} \right).
\]
\end{lemma}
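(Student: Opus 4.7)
The plan is to condition on the random column index $j$ picked in the first step of $D_{\text{normalized}}$, compute the expected truncated geometric count for that column exactly, and then average. Let $c_j \defeq \textsc{ColumnSum}(A,j)$. Because the initial index $(i,j)$ is chosen uniformly among the nonzero entries of $A$, column $j$ is selected with probability $c_j/\nnz(A)$, and in particular we never condition on a column with $c_j = 0$. Given this choice, the inner loop terminates the first time a uniformly sampled row hits a $1$ in column $j$, so the counter value $T_j$ is distributed as $\min(G_j,\variable{lim})$, where $G_j$ is geometric with success probability $p_j = c_j/r$.

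The key computation is the tail-sum identity
\[
\E[T_j] \;=\; \sum_{k=1}^{\variable{lim}} \Pr[T_j \ge k] \;=\; \sum_{k=0}^{\variable{lim}-1} (1-p_j)^{k} \;=\; \frac{r}{c_j}\left(1 - \left(1-\tfrac{c_j}{r}\right)^{\variable{lim}}\right).
\]
Since $X = T_j/\variable{lim}$, averaging over the column choice gives
\[
\E[X] \;=\; \sum_{j\,:\,c_j > 0} \frac{c_j}{\nnz(A)}\cdot\frac{\E[T_j]}{\variable{lim}} \;=\; \sum_{j\,:\,c_j > 0} \frac{r}{\nnz(A)\cdot\variable{lim}}\left(1 - \left(1-\tfrac{c_j}{r}\right)^{\variable{lim}}\right).
\]
The miraculous cancellation of the $c_j$ factor is what drives the entire argument: the two distributions ``sample proportional to $c_j$'' and ``hitting time $\Theta(r/c_j)$'' were chosen precisely to compose into an unbiased estimator of the count of nonzero columns.

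From this closed form the upper bound is immediate by dropping the nonnegative term $(1-c_j/r)^{\variable{lim}}$. For the lower bound, since $c_j \ge 1$ and $\variable{lim} = 10r\lceil\log n\rceil$, we use $1 - c_j/r \le e^{-c_j/r}$ to bound
\[
\left(1-\tfrac{c_j}{r}\right)^{\variable{lim}} \;\le\; \exp\!\left(-\tfrac{c_j\,\variable{lim}}{r}\right) \;\le\; e^{-10\log n} \;=\; n^{-10},
\]
which gives a factor of at least $(1-n^{-10}) \ge 1 - 1/n$ on the sum. Finally, for the expected cost of drawing one sample from $D_{\text{normalized}}$, the cost is just $T_j$ (not $T_j/\variable{lim}$), so the same computation without the $1/\variable{lim}$ factor yields
\[
\sum_{j\,:\,c_j>0} \frac{c_j}{\nnz(A)}\cdot\frac{r}{c_j}\left(1-(1-c_j/r)^{\variable{lim}}\right) \;\le\; \frac{r\cdot\textsc{NonzeroColumns}(A)}{\nnz(A)},
\]
as claimed.

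There is essentially no significant obstacle here; the statement is a direct expectation computation. The only place to be careful is the lower-bound estimate, where the choice $\variable{lim} = 10r\lceil\log n\rceil$ must be used to drive the truncation error down to $1/\mathrm{poly}(n)$, which in turn must absorb into the $1 - 1/n$ factor stated in the lemma. I would also remark that all sums are over columns with $c_j > 0$ simply because columns with $c_j = 0$ are never reached by the initial uniform nonzero sample, so they contribute nothing and do not need a separate argument.
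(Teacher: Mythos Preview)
Your proof is correct and follows the same overall strategy as the paper: condition on the selected column $j$, compute the mean of the truncated geometric counter, show the truncation error is $n^{-\Omega(1)}$ via $\variable{lim}=10r\lceil\log n\rceil$, and average. The one difference is purely computational: you evaluate $\E[T_j]$ directly via the tail-sum identity $\E[T_j]=\sum_{k=0}^{\variable{lim}-1}(1-p_j)^k$, whereas the paper first computes the untruncated mean $\E[H_j]=1/p_j$ using the Maclaurin series $\sum_k kx^{k-1}=(1-x)^{-2}$ and then subtracts off a truncation correction $(1-p_j)^{\variable{lim}}/p_j$. Your route is a bit more direct and lands immediately on the multiplicative form $(r/c_j)\bigl(1-(1-c_j/r)^{\variable{lim}}\bigr)$, which makes the $(1-1/n)$ lower bound a one-line estimate rather than an additive-to-multiplicative conversion; otherwise the two arguments are the same.
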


\begin{proof}
For each column $j$, denote by $\nnz(A[:,j])$ the number of nonzero entries
in the column, and
let the probability of picking a nonzero entry from this column be
\[
  p_j \defeq \frac{\nnz(A[:,j])}{r}.
\]
When considering columns selected by $D_{\text{combined}}$ and
$D_{\text{simple}}$, the column necessarily has a nonzero entry so $p_j > 0$.

Next, define the random variable $H_j$ to be the number of times it takes to
independently sample a row $\variable{row}$ between $1$
and $r$ uniformly at random such that $A(\variable{row}, j) = 1$.
For all integers $k \ge 1$,
\[
  \Pr\left[H_j = k\right] = \left(1 - p_j\right)^{k-1} p_j.
\]
It follows that the expected value of $H_j$ is
\begin{align*}
  \E\left[H_j\right] &= \sum_{k=1}^\infty k \Pr\left[H_j = k\right]\\
    &= p_j \sum_{k=1}^\infty k \left(1 - p_j\right)^{k-1} \\
    &= \frac{p_j}{\left(1 - \left(1-p_j\right)\right)^2} \\
    &= \frac{r}{\nnz\left(A[:,j]\right)},
\end{align*}
where the second to last line uses the Maclaurin series
\[
  \sum_{k=1}^\infty k x^{k-1} = \frac{1}{(1-x)^2},
\]
for all $|x| < 1$. Note that we can apply this formula because $p_j > 0$.

To account for truncation, consider the random variable
$Z_j = \min(H_j, \variable{lim})$ and observe that
\begin{align*}
  \E\left[Z_j\right] &= \sum_{k=1}^{\variable{lim}} k \Pr\left[H_j = k\right]
                     + \sum_{k=1}^{\infty} (\variable{lim}+k - k) \Pr\left[H_j = \variable{lim}+k\right]\\
     &= \E\left[H_j\right] - \left(1-p_j\right)^{\variable{lim}}p_j \sum_{k=1}^\infty k\left(1-p_j\right)^{k-1}\\
     &= \E\left[H_j\right] - \frac{\left(1-p_j\right)^{\variable{lim}}}{p_j},
\end{align*}where we again use the Maclaurin series above.
Recalling that $\nnz(A[:,j]) \ge 1$ and $r \le n$, it follows from the
definition of $p_j$ that
\begin{align*}
  \frac{\left(1-p_j\right)^{\variable{lim}}}{p_j} &\le n \left(1-p_j\right)^{\variable{lim}}\\
  &\le n \exp\left(-\frac{\nnz(A[:,j])}{r} \cdot 10 r \log{n}\right)\\
  &= n \left(\frac{1}{n}\right)^{10 \cdot \nnz(A[:,j])}\\
  &\le \frac{1}{n^9}.
\end{align*}
Therefore, by truncating the number of row samples, we deviate from $\E[H_j]$ by at
most $1/\text{poly}(n)$. Letting $Z$ be a random variable drawn from
$D_{\text{simple}}$, we have the conditional expectation bound
\begin{align*}
  \frac{r}{\nnz\left(A[:,j]\right)} - \frac{1}{n^9}
  \le \E\left[Z \mid \text{$Z$ chooses column $j$}\right]
  \le \frac{r}{\nnz\left(A[:,j]\right)}.
\end{align*}
If we now consider the probability that $Z$ selects column $j$, it follows that
\begin{align*}
  \frac{r \cdot \textsc{NonzeroColumns}(A)}{\nnz(A)} \left(1 - \frac{1}{n^7}\right)
  \le \E\left[Z\right]
  \le \frac{r \cdot \textsc{NonzeroColumns}(A)}{\nnz(A)},
\end{align*}
since $\nnz(A) = O(n^2)$.
To prove the claim for the second distribution $D_{\text{normalized}}$, we
use the same argument and the additional fact that $\variable{lim} = O(n^2)$.
The expected running time per sample follows from the expected value of
$D_{\text{simple}}$.
\end{proof}

\begin{proof}[Proof of Lemma~\ref{lem:NonZeroColumnEstimator}]
Since $D_{\variable{normalized}}$ is a distribution over $[0,1]$, we can use
\textsc{EstimateMean} to approximate the expected value within a factor of
$\epsilon$ with high probability by Lemma~\ref{lem:MeanEstimation}.
The correctness of \textsc{EstimateNonzeroColumns} follows from our choice
of $\sigma$ and Lemma~\ref{lem:DNormalizedExpectation}.

It remains to bound the running time of the estimator.
By Lemma~\ref{lem:MeanEstimation} and Lemma~\ref{lem:DNormalizedExpectation},
the expected number of times we sample $D_{\text{normalized}}$ is
\[
  O\left(\frac{\nnz(A) \cdot \variable{lim} \cdot \log{n} \epsilon^{-2}}{r \cdot \textsc{NonzeroColumns}(A)}\right).
\]
The expected cost per sample generated from $D_{\text{normalized}}$ is
\[
  O\left(\frac{r\cdot \textsc{NonzeroColumns}(A)}{\nnz(A)}\right)
\]
by Lemma~\ref{lem:DNormalizedExpectation}.
Multiplying these expectations gives that the total expected running time is
\[
O\left( \variable{lim} \cdot \log{n} \epsilon^{-2} \right)
= O\left(r \log^2{n} \epsilon^{-2} \right).
\]
Furthermore, since the cost per sample from $D_{\text{normalized}}$
is bounded by $\lim = O(r \log{n})$,
it follows that the running time is concentrated around this value
with high probability.
\end{proof}

\section{Maintaining Graphs Under Pivots}
\label{sec:DynamicGraphs}

In this section we first show how to efficiently maintain the component graph
under pivots in such a way that supports component and remaining neighborhood queries.
This proves Lemma~\ref{lem:DegreeEstimationDS}, which in turns completes the proof
of Theorem~\ref{thm:DegreeEstimation}.
Then we spend the rest of the section demonstrating how to maintain a dynamic
1-neighborhood sketch (described in Definition~\ref{def:Sketch}) of a fill
graph as it undergoes vertex eliminations to prove Theorem~\ref{thm:DataStructureMain}.

For convenience, we restate our claim about supporting dynamic component graphs
$\Gcomp{}$.

\DegreeEstimationDS*

\begin{proof}
We maintain the adjacency list of $\Gcomp{}$ explicitly, where each node
stores its state as well as its neighbors in balanced binary search trees.
When we pivot a remaining vertex $v$, we examine all all of its neighbors
that are component vertices (i.e.\ $\Ncomp{}(u)$) and merge the neighborhood
lists of these vertices.
By always inserting elements from the smaller list
into the larger, we can guarantee that each element
is inserted at most $O(\log{n})$ times.
The total number of elements in the
neighborhood lists is $m$, so 
since each insertion costs $O(\log n)$ time, 
it follows that the total cost across all $m$ edges is $O(m \log^2{n})$.

When a vertex is pivoted, we also need to move it
from the remaining vertex list to the component vertex list for each of its
neighbors.  This can be done by iterating over all the edges of the vertex
once. The cost is $O(m)$ because each vertex is pivoted at most once and
prior to pivoting no edges are added to the graph.
By maintaining all lists using balanced binary search trees,
we can make all updates and sample a random remaining (or component)
neighbor in $O(\log{n})$ time.
A global list that tracks all remaining and component
vertices also allows for $O(\log{n})$ time uniform sampling.
\end{proof}

We now focus on proving Theorem~\ref{thm:DataStructureMain}.
We maintain a 1-neighborhood $\ell_0$-sketch data structure for a fill graph
as it undergoes pivots (starting with the original graph) similarly to how
we maintain the adjacency list of the component graph.
Because the minimum key value $R(v)$ in the 1-neighborhood of a vertex $u$
continually changes, we show how to track the minimizer of a vertex via an
eager-propagation routine. This protocol informs the neighbors of a pivoted
vertex about its minimum key, which ultimately propagates minimum key values
throughout the graph as needed.

In Figure~\ref{fig:DynamicGlobalVar}, we give a brief description about the
data structures we use to achieve this.
When we refer to maintaining sets of vertices, the underlying data structure
is a balanced binary search tree.
Recall that min heaps (e.g.\ binary heaps) of size $O(n)$
support the methods $\textsc{Min}$, $\textsc{Insert}$, and $\textsc{Delete}$
which require $O(1)$, $O(\log n)$, and $O(\log n)$ time respectively.
Additionally, we use a standard subroutine $\textsc{HeapMerge}$ 
to merge two heaps in $O(\log^2 n)$ time.

\begin{figure}[H]
  \begin{algbox}
    \begin{enumerate}
      \item A set $\Vrem{}$ containing the remaining vertices.
      \item A set $\Vcomp{}$ containing the component vertices.
      
      \item For each $x \in \Vcomp{}$, a corresponding min heap
      \[
        \variable{remaining}[x]
      \]
      that contains the key values $R(v)$ of its remaining neighbors $v \in \Nrem{}(x)$.

  	  \item For each $u \in \Vrem{}$, a corresponding min heap
      \[
         \variable{fill}[u]
      \]
      that contains the union of $\variable{remaining}[x].\textsc{Min}()$
      for each component vertex $x \in \Ncomp{}(u)$,
      as well as the key values of the vertices in $\Nrem{}(u)$.
    \end{enumerate}
    
  \end{algbox}
  
  \caption{Data structures needed to maintain $\Gcomp{}$ and an $\ell_0$-sketch of $\Gfill{}$ under vertex pivots.}
  \label{fig:DynamicGlobalVar}
\end{figure}

As a brief intuition behind the algorithm,
consider the case where no vertex is deleted,
but we merge neighborhoods of vertices.
In this case, as the neighborhood of a particular vertex grows,
the expected number of times the minimum $R$ value in
this neighborhood changes is $O(\log n)$.
To see this, consider the worst case where each time,
the neighborhood of $v$ increases by 1.
Then the expected number of changes in the minimum is
\[
1/2 + 1/3 + \ldots +1/(n-1) \leq O(\log n).
\]
The major difficulty dealing with this is that
deletions reduce degrees.
In particular, it is possible for the min at some vertex
to change $\Omega(n)$ times due to repeated deletions.
As a result, we can only bound the total, or average
number of propagations.
This leads to a much more involved amortized
analysis, where we also use backwards analysis to explicitly
bound the probability of each informing operation.

Given a component graph $\Gcomp{t}$ and a (remaining)
vertex $u$ to be pivoted, we use the routine \textsc{PivotVertex}
to produce a new graph $\Gcomp{t+1}$.
In terms of the structure of the graph, our routine does the same
thing as the traditional quotient graph model for symmetric
factorization~\cite{GeorgeL81}.

Therefore we turn our attention to the problem of maintaining the minimum $R$
values of the neighborhoods.
For a subset of vertices $V' \subseteq \Vcomp{}$,
let $\Rmin(V')$ denote the minimum $R$ value among all its vertices.
Specifically, we want to maintain the values
$\Rmin(\Nrem{}(w))$ for every $w \in \Vcomp{}$ and
$\Rmin(\Nfill{}(v))$ for every $v \in \Vrem{}$.
This update procedure is basically a notification mechanism.
When the status of a vertex changes, we update the data structures of its
neighbors correspondingly.
The \variable{fill}[u] heap will then give $\Rmin(\Nfill{}(u))$ and be
used to estimate the fill-degree of each remaining vertex
as described in Section~\ref{sec:Sketching}.

Suppose a remaining vertex $v$ is pivoted.
Then, for a component vertex $w$, the content of $\variable{remaining}[w]$
changes only if $v$ is its neighbor.
Pseudocode of this update (\textsc{PivotVertex}) is given in Figure~\ref{fig:PivotVertex}.
In particular, since $v$ is no longer a remaining vertex, its entry needs to be removed from
$\variable{remaining}[z]$.
Since $v$ is now a component vertex,
we need to construct $\variable{remaining}[v]$,
and update the $\variable{fill}$ heaps of its remaining neighbors appropriately.
Furthermore, if $R(v)$ was the minimum element in $\variable{remaining}[w]$,
this is no longer the case and the other remaining neighbors of $w$ need to be notified of
this (so they can update their $\mathit{fill}$ heaps).
This is done via the call to $\textsc{InformRemaining}$ in
Line~\ref{algline:pivot-inform-remaining} of the algorithm.
The last step consists of melding the (now component) vertex $v$ with its existing component
neighbors via calls to $\textsc{Meld}$.
The pseudocode for this routine is in Figure~\ref{fig:Meld}.
Note that, at all times, we make a note of any remaining vertex
whose $\minimizer$ is updated due to the pivoting.
\begin{figure}[H]

\begin{algbox}
$\textsc{PivotVertex}(v)$

  \underline{Input}: (implicitly as a global variable)
  a component graph $\Gcomp{t} = \langle {\Vrem{}},{\Vcomp{}},\Ecomp{}\rangle$
  along with associated data structures.\\
  A vertex $v \in {\Vrem{}}$ to be pivoted, 

  \underline{Output}: A list of vertices in $\Vrem{}$ whose $\minimizer$s have changed.

\begin{enumerate}
\item Initialize $\variable{changed\_list} \leftarrow \emptyset$.
\item Create an empty min-heap $\variable{remaining}[v]$
\item For each vertex $y \in \Nrem{}(v)$ in lexicographical order
\begin{enumerate}
\item  $\variable{fill}[y].\textsc{Delete}(R(v))$ \label{algline:remove-from-remaining}
\item $\variable{remaining}[v].\textsc{Insert}(R(y))$
\item If $R(v)$ was the old minimum in $\variable{fill}[y]$:

      $\variable{changed\_list} \leftarrow \variable{changed\_list} ~\cup~ \{y\}$
\end{enumerate}
\item For each vertex $y \in \Nrem{}(v)$ in lexicographical order
\begin{enumerate}
	\item  $\variable{fill}[y].\textsc{Insert}(\variable{remaining}[v].\textsc{Min}())$ (if not already present)
	\item If the minimum in $\variable{fill}[y]$ changes:
	
	$\variable{changed\_list} \leftarrow \variable{changed\_list} ~\cup~ \{y\}$
\end{enumerate}
\item For each vertex $w \in \Ncomp{}(v)$ in lexicographical order
\begin{enumerate}
	\item  $\variable{remaining}[w].\textsc{Delete}(R(v))$.
	\label{algline:remove-from-component}
	\item If $R(v)$ was the old minimum in $\variable{remaining}[w]$:
	
	$\variable{changed\_list} \leftarrow \variable{changed\_list}
	~\cup~ \textsc{InformRemaining}(w,R(v),\variable{remaining}[w].\textsc{Min}())$\label{algline:pivot-inform-remaining}
	\item  $\textsc{Meld}(v, w)$;
\end{enumerate}
\item Update $\Vcomp{}$,$\Vrem{}$ and $\Ecomp{}$ to form $\Gcomp{t+1}$;
\item Return $\variable{changed\_list}$.
\end{enumerate}
\end{algbox}

\caption{Pseudocode for pivoting a vertex}
\label{fig:PivotVertex}

\end{figure}

\begin{figure}[H]

\begin{algbox}
$\textsc{Meld}(v, w)$

\underline{Input}: (implicitly as a global variable)
A graph state $G = \langle {\Vrem{}},{\Vcomp{}},E\rangle$
along with associated data structures.\\
Two component vertices $v$ (the pivoted vertex) and $w$ to be melded. 

\underline{Output}: A list of vertices in $\Nrem{}(v) \cup \Nrem{}(w)$ whose $\minimizer$s have changed.

\begin{enumerate}
	\item Initialize $\variable{changed\_list} \leftarrow \emptyset$.
	\item If
	$\variable{remaining}[v].\textsc{Min}() < \variable{remaining}[w].\textsc{Min}()$
	\begin{enumerate}
		\item $\variable{changed\_list} \leftarrow \textsc{InformRemaining}(w,\variable{remaining}[w].\textsc{Min}(),\variable{remaining}[v].\textsc{Min}())$;\label{algline:informvaboutu}
	\end{enumerate}
	\item Else If
	$\variable{remaining}[w].\textsc{Min}() < \variable{remaining}[v].\textsc{Min}()$
	\begin{enumerate}
		\item $\variable{changed\_list} \leftarrow \textsc{InformRemaining}(v,\variable{remaining}[v].\textsc{Min}(),\variable{remaining}[w].\textsc{Min}())$;\label{algline:informuaboutv}
	\end{enumerate}
	\item $\variable{remaining}[v] \leftarrow \textsc{HeapMerge}(\variable{remaining}[v],\variable{remaining}[w])$
	\item Return $\variable{changed\_list}$.
\end{enumerate}

\end{algbox}

\caption{Pseudocode for melding two component vertices,
	and informing their neighbors of any changes in the minimizers
	of $\Nrem{}$.}
\label{fig:Meld}

\end{figure}

For every component vertex $w$ such that $R(v)$ is the minimum value in $\variable{remaining}(w)$,
the routine \textsc{InformRemaining} (Pseudocode in Figure~\ref{fig:InformRemaining}) is
responsible for updating the contents in the \variable{fill} heaps of remaining
vertices adjacent to $w$.
This routine is also required when we merge two component vertices
in the algorithm \textsc{Meld}, since there are now more entries
in the \variable{fill} heaps of adjacent remaining vertices.

\begin{figure}[H]
	
	\begin{algbox}
		$\textsc{InformRemaining}(w, R_{old}, R_{new})$
		
		\underline{Input}: (implicitly as a global variable)
		a component graph $\Gcomp{} = \langle {\Vrem{}},{\Vcomp{}},\Ecomp{}\rangle$
		along with associated data structures;\\
		a vertex $w \in \Vcomp{}$ that's causing updates;\\
		old and new values for $\Rmin(\Nrem{}(w))$: $R_{old}$ and $R_{new}$.
		
		\underline{Output}: A list of vertices $v \in \Nrem{}(w)$ whose $\minimizer$s have changed.
		
		\begin{enumerate}
			\item Initialize $\variable{changed\_list} \leftarrow \emptyset$.
			\item For each $v \in \Nrem{}(w)$
			\begin{enumerate}
				\item Delete the entry $R_{old}$ from $\variable{fill}[v]$
				if it exists
				\item Add the entry $R_{new}$ to $\variable{fill}[v]$
				\item If $\variable{fill}[v].\textsc{Min}()$ changed,
				$\variable{changed\_list} \leftarrow \variable{changed\_list} \cup \{ v \}$.
			\end{enumerate}
			\item Return $\variable{changed\_list}$.
		\end{enumerate}
		
	\end{algbox}
	
	\caption{Pseudocode for propagating to remaining vertex neighbors}
	\label{fig:InformRemaining}
\end{figure}

We break down the cost of calls to \textsc{InformRemaining} into two parts: when it is invoked by
\textsc{PivotVertex}, and when it is invoked by \textsc{Meld}.
The first type of calls happens only when a remaining vertex $v$ is pivoted, and $v$
is the minimum entry of the \textit{remaining} heap of a component vertex.
The following lemma gives an upper bound on the expected cost of such calls to
\textsc{InformRemaining} by arguing that this event happens with low
probability.

\begin{lemma}
\label{lem:RemainingUpdatesDueToDeletion}
  The expected total number of updates to remaining vertices made by
  \textsc{InformRemaining} when invoked from \textsc{PivotVertex}
  (Line~\ref{algline:pivot-inform-remaining})
  over any sequence of $n$ pivots that are independent of the $R$ values is $O(m)$.
\end{lemma}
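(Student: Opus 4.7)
The plan is to use a backward analysis over the random variables $\{R(u)\}$, which by hypothesis are independent of the pivot sequence, and then convert the expected cost into a simple counting argument over the component graph. First I would fix the entire pivot sequence and focus on a single pair $(v_t, w)$, where $v_t$ is the vertex pivoted at step $t$ and $w \in \Ncomp{,t}(v_t)$ is one of its component neighbors. The call to \textsc{InformRemaining} on Line~\ref{algline:pivot-inform-remaining} for this pair is triggered only when $R(v_t)$ is the minimum of the multiset $\{R(u) : u \in \Nrem{,t}(w)\}$, and when it does fire its cost is $O(|\Nrem{,t}(w)|)$.

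Since $v_t \in \Nrem{,t}(w)$ and, by the independence hypothesis, the $R$ values on this set remain i.i.d.\ uniform even after conditioning on the fixed pivot sequence, symmetry implies that $R(v_t)$ is the minimum with probability exactly $1/|\Nrem{,t}(w)|$. Thus the expected cost attributable to the pair $(v_t, w)$ is $O(1)$, independently of the neighborhood size. Summing by linearity of expectation, the total expected number of updates is of order
\[
  \sum_{t} |\Ncomp{,t}(v_t)|.
\]

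To bound this sum I would use a charging argument on the original graph $G$. Every component neighbor $w \in \Ncomp{,t}(v_t)$ must contain at least one vertex $x$ with $(v_t, x) \in E(G)$, for otherwise the edge $(v_t, w)$ could not exist in $\Gcomp{,t}$. Charging each such $w$ to one such original edge incident to $v_t$ gives $|\Ncomp{,t}(v_t)| \leq \deg_G(v_t)$, and since each vertex is pivoted at most once, $\sum_t \deg_G(v_t) \leq 2m$. The main subtlety I anticipate is handling the conditioning: the pivot sequence may depend on randomness elsewhere in the enclosing algorithm, so one must explicitly invoke the lemma's independence hypothesis to justify the symmetry step, after which linearity of expectation absorbs the correlations across the different $(t, w)$ events that share $R$-values.
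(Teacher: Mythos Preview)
Your proposal is correct and follows essentially the same approach as the paper: both argue that for each pair $(v_t,w)$ the event ``$R(v_t)$ is the minimum of $\Nrem{,t}(w)$'' has probability $1/|\Nrem{,t}(w)|$ while the work is $O(|\Nrem{,t}(w)|)$, giving $O(1)$ in expectation per pair, and then bound the number of pairs by $\sum_t \deg_G(v_t)=O(m)$. Your write-up is in fact a bit more careful than the paper's in explicitly invoking the independence hypothesis to justify the symmetry step under conditioning on the fixed pivot sequence.
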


\begin{proof}
  Let $\Gcomp{}$ be the component graph at a certain instant
  in the algorithm.
  Let $v \in \Vrem{}$ be the vertex to
  be pivoted, and let $w \in \Ncomp{}(v)$ be a neighboring component vertex.
  We only invoke \textsc{InformRemaining} if $R(v)$ is the minimum value in
  $\variable{remaining}[w]$, which occurs with probability
  $1/|\Nrem{}(w)|$ and would cost
  $O(|\Nrem{}(w)|)$ updates.
  Therefore the expected number of updates is only $O(1)$ for each
  edge between a remaining vertex and a component vertex.
  When a remaining vertex $v$ is pivoted, its degree is the same as in the
  original graph.
  Therefore the number of edges between $v$ and a
  component vertex is bounded by the degree of $v$
  and hence the total expected number of updates is $O \left( \sum_{v \in V} \deg(v) \right) = O(m)$.
\end{proof}

The calls to $\textsc{Meld}$ are the primary bottlenecks in the running time,
but will be handled similarly.
Its pseudocode is given in Figure~\ref{fig:Meld}.

We will show that the expected number of vertices updated by
$\textsc{InformRemaining}$ that result from any fixed sequence of calls to
$\textsc{Meld}$ is bounded by $O(m \log{n})$.
We first analyze the number of updates during a single meld in the following
lemma.

\begin{lemma}
  \label{lem:remaining-updates}
  Let $u$ and $v$ be two component vertices in a graph stage $\Gcomp{}$.
  Then the expected number of updates to vertices by
  $\textsc{InformRemaining}$ when melding $u$ and $v$ is at most:
  \begin{align*}
  \frac{2\left|\Nrem{}(u)\right|
    \cdot \left|\Nrem{}(v)\right|}
  {\left|\Nrem{}(u)\right|
    +
    \left|\Nrem{}(v)\right|
  },
  \end{align*}
\end{lemma}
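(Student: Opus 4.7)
The plan is to compute the expected cost directly by analyzing which of the two heap minima is smaller. From the pseudocode of $\textsc{Meld}(u,v)$, at most one invocation of $\textsc{InformRemaining}$ occurs per meld: if $\variable{remaining}[u].\textsc{Min}() < \variable{remaining}[v].\textsc{Min}()$, then $\textsc{InformRemaining}$ is called on $v$ and touches $|\Nrem{}(v)|$ remaining vertices; symmetrically, if the inequality is reversed, the call is on $u$ and touches $|\Nrem{}(u)|$ vertices; and if the two minima agree, neither branch triggers, so no updates occur.

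Next I would exploit the fact that the keys $R(\cdot)$ are independent and drawn from a continuous distribution, so with probability one the overall minimum of the key set on $\Nrem{}(u) \cup \Nrem{}(v)$ is attained at a single vertex, and by symmetry this vertex is uniformly distributed over $\Nrem{}(u) \cup \Nrem{}(v)$. Conditioning on where the minimizer lies yields three cases: if it falls in $\Nrem{}(u) \setminus \Nrem{}(v)$ the cost is $|\Nrem{}(v)|$; if it falls in $\Nrem{}(v) \setminus \Nrem{}(u)$ the cost is $|\Nrem{}(u)|$; and if it falls in $\Nrem{}(u) \cap \Nrem{}(v)$ both heap-minima coincide and the cost is zero. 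Writing $a = |\Nrem{}(u)|$, $b = |\Nrem{}(v)|$, and $c = |\Nrem{}(u) \cap \Nrem{}(v)|$, the expected number of updates is therefore
\[
\frac{(a - c)\,b + (b - c)\,a}{a + b - c} \;=\; \frac{2ab - c(a+b)}{a + b - c}.
\]

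The remaining step is a short algebraic check that this quantity is at most $\tfrac{2ab}{a+b}$ for all admissible $c \in [0, \min(a,b)]$; subtracting and clearing denominators reduces the claim to $c(a^2 + b^2) \ge 0$, which is trivial, and equality holds in the disjoint case $c = 0$ so the stated bound is tight. The only mild obstacle is handling the overlap $c > 0$, but this algebraic check shows that shared remaining neighbors only decrease the expected number of updates, and the disjoint case is immediate from the uniform-minimizer argument.
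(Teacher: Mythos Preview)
Your proof is correct and follows essentially the same approach as the paper: both condition on which of the three regions $\Nrem{}(u)\setminus\Nrem{}(v)$, $\Nrem{}(v)\setminus\Nrem{}(u)$, $\Nrem{}(u)\cap\Nrem{}(v)$ contains the overall minimizer, using that this location is uniform over $\Nrem{}(u)\cup\Nrem{}(v)$. The only cosmetic difference is that the paper bounds each probability separately via $\frac{a-c}{a+b-c}\le\frac{a}{a+b}$ before multiplying by the cost, whereas you compute the exact expectation $\frac{2ab-c(a+b)}{a+b-c}$ and then verify $c(a^2+b^2)\ge 0$; both arrive at the same bound and both are tight at $c=0$.
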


\begin{proof}
  Let us define:
  \begin{align*}
    n_{common} & = \left|\Nrem{}(u) \cap \Nrem{}(v)\right|,\\
    n_{u} & = \left|\Nrem{}(u) \setminus \Nrem{}(u)\right|,\\
    n_{v} & = \left|\Nrem{}(u) \setminus \Nrem{}(v)\right|.
  \end{align*} 
  If the minimum $R$ value is generated by a vertex from
  $\Nrem{}(u) \cap \Nrem{}(v)$, then no cost is incurred.
  If it is generated by a vertex from
  $\Nrem{}(u) \setminus \Nrem{}(v)$, we need to update the
  every vertex in $\Nrem{}(v)$
  (line~\ref{algline:informvaboutu}).
  This happens with probability
  \begin{align*}
    \frac{n_{u}}{n_{common} + n_{u} + n_{v}}
    &\le 
    \frac{n_{u} + n_{common}}{2n_{common} + n_{u} + n_{v}}
    \\&=\ 
    \frac{
      \left|\Nrem{}(u)\right|
    }{
      \left|\Nrem{}(u)\right| + \left|\Nrem{}(v)\right|
    }.
  \end{align*}
  Therefore the expected number of updates is bounded by:
  \begin{align*}
    \frac{\left|\Nrem{}(u)\right|
      \cdot \left|\Nrem{}(v)\right|}
    {\left|\Nrem{}(u)\right|
     +
     \left|\Nrem{}(v)\right|
    },
  \end{align*}
  and we get the other term (for updating $u$'s neighborhood) similarly.
\end{proof}

This allows us to carry out an amortized analysis for the number of updates to remaining
vertices.
We will define the potential function of an intermediate component graph during elimination
in terms of the degrees of component vertices \emph{in the original graph $G$}, in
which adjacent component vertices are \emph{not} contracted.
Let $u^{\circ}$ denote the set of vertices in $V(G)$ which have
been melded into $u$ in $\Gcomp{}$.
\begin{align*}
  \Phi(\Gcomp{})
  \defeq
  \sum_{u \in \Vcomp{}} D(u) \log \left(D(u)\right) ,
\end{align*}
where $D(u)$ for a vertex $u\in \Vcomp{}$ is defined to be
\begin{align*}
  D(u)=\sum_{v \in u^{\circ}} \deg_{G}(u).
\end{align*}
This function starts out at $0$, and can be at most $m\log{n}$.

\begin{lemma}
  \label{lem:potential-increase}
  When melding two neighboring component vertices in a graph $\Gcomp{t}$ to create
  $\Gcomp{t+1}$,
  the expected number of vertex updates by $\textsc{InformRemaining}$
  is at most
  \begin{align*}
    2\left(\Phi (\Gcomp{t+1})-\Phi (\Gcomp{t} )\right).
  \end{align*}
\end{lemma}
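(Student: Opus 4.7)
My plan is to bound the expected number of updates directly via Lemma~\ref{lem:remaining-updates} and then show that the harmonic-mean expression it provides is absorbed by the log-convexity of the potential $\Phi$. Let $u$ and $v$ be the two component vertices being melded, and assume without loss of generality that $D(u) \le D(v)$. By Lemma~\ref{lem:remaining-updates}, the expected number of \textsc{InformRemaining} updates is at most
\[
  \frac{2\,|\Nrem{}(u)|\cdot|\Nrem{}(v)|}{|\Nrem{}(u)|+|\Nrem{}(v)|}
  \;\le\; 2\min\bigl(|\Nrem{}(u)|,|\Nrem{}(v)|\bigr),
\]
using the fact that the harmonic mean of two positive reals is at most twice their minimum. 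The first auxiliary claim I would establish is that $|\Nrem{}(u)| \le D(u)$ for every component vertex $u$: each remaining neighbor of $u$ in $\Gcomp{}$ must be connected (in the original graph $G$) to at least one of the vertices of $u^{\circ}$, so $|\Nrem{}(u)|$ is bounded by the number of $G$-edges leaving $u^{\circ}$, which is at most $\sum_{w\in u^{\circ}} \deg_G(w) = D(u)$. Combined with the previous display and the assumption $D(u)\le D(v)$, this yields an upper bound of $2D(u)$ on the expected number of vertex updates.

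The complementary step is to lower bound $\Phi(\Gcomp{t+1}) - \Phi(\Gcomp{t})$ by $D(u)$. Since melding $u$ and $v$ produces a single component vertex $w$ with $D(w) = D(u)+D(v)$ and leaves every other $D(\cdot)$ unchanged, a direct calculation gives
\[
  \Phi(\Gcomp{t+1}) - \Phi(\Gcomp{t})
  \;=\; D(u)\log\frac{D(u)+D(v)}{D(u)} + D(v)\log\frac{D(u)+D(v)}{D(v)}.
\]
The second term is nonnegative, and from $D(u) \le D(v)$ we have $D(u)+D(v) \ge 2D(u)$, so the first term is at least $D(u)\log 2$. Taking $\log = \log_2$ (the base only affects a global constant in the potential and does not alter the asymptotic conclusion), we obtain $\Phi(\Gcomp{t+1}) - \Phi(\Gcomp{t}) \ge D(u)$. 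Multiplying by two matches the earlier upper bound of $2D(u)$ on the expected number of updates, completing the proof.

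The step that requires the most care is the inequality $|\Nrem{}(u)| \le D(u)$; one has to keep track of the distinction between the contracted component graph and the original graph and use that $D(u)$ counts original-graph degrees rather than current degrees in $\Gcomp{}$. Everything else is a mechanical manipulation of harmonic means and the entropy-style potential, and the choice of logarithm base is absorbed into the constant ``$2$'' that appears in the statement (or, equivalently, into a constant factor scaling of $\Phi$).
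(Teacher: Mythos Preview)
Your proof is correct and follows the same outline as the paper's: invoke Lemma~\ref{lem:remaining-updates}, bound $|\Nrem{}(u)| \le D(u)$, and compare the result to the potential increase (both arguments implicitly take $\log = \log_2$ in the definition of $\Phi$ to obtain the exact constant~$2$). The only difference is that where the paper retains the harmonic-mean form $\tfrac{2D(u)D(v)}{D(u)+D(v)}$ and bounds it by $2D(v)\log_2\!\bigl(1+D(u)/D(v)\bigr)$, you pass through the cruder estimate $2\min(D(u),D(v))$ and compare directly to $D(u)\log_2 2$, which is a mild simplification of the same idea.
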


\begin{proof}
When melding two component vertices $u$ and $v$ in $\Gcomp{t}$ to form $\Gcomp{t+1}$, the change in
potential is given by
\begin{align*}
\Phi(\Gcomp{t+1})-\Phi(\Gcomp{t}) = (D(u)+D(v))\log(D(u)+D(v))
-D(u)\log D(u)
-D(v)\log D(v).
\end{align*}
On the other hand, by Lemma~\ref{lem:remaining-updates} the expected number of
remaining vertices updated is
\begin{align*}
\frac{2\left|\Nrem{}(u)\right|
\cdot \left|\Nrem{}(v)\right|}
{\left|\Nrem{}(u)\right|
+
\left|\Nrem{}(v)\right|
}
\leq
\frac{2D(u)D(v)}{D(u)+D(v)}.
\end{align*}
To see that the above statement is true,
observe that $\Nrem{}(u) \leq D(u)$,
$\Nrem{}(v) \leq D(v)$, and that
both the LHS and RHS can be viewed as two resistors
in parallel.
Now it suffices to show the following the algebraic identity:
\begin{align*}
2 x \log{x} + 2 y \log{y} + \frac{2xy}{x + y}
\le
2 \left( x + y \right) \log\left( x + y\right),
\end{align*}
and let $x=D(u)$ and $y=D(v)$.
By symmetry, we can assume $x\le y$ without loss of generality.
Then we get
\begin{align*}
\frac{xy}{x + y}
& \leq \frac{xy}{y}\\
& = y \cdot \frac{x}{y}\\
& \leq y \cdot \log\left(1 + \frac{x}{y} \right),
\end{align*}
where the last inequality follows from $\log(1 + z) \ge z$ when
$z \le 1$.
Plugging this in then gives:
\begin{align*}
2 x \log{x} + 2 y \log{y} + \frac{2xy}{x + y}
& \leq 2 x \log{x} + 2 y \left( \log{y}
+ \log\left(1 + \frac{x}{y} \right) \right)\\
& = 2 x \log{x} + 2 y \log\left( x + y\right)\\
& \leq 2 \left( x + y \right) \log\left( x + y \right). \qedhere
\end{align*}
\end{proof}

\begin{lemma}
  \label{lem:remaining-updates-from-meld}
  Over any fixed sequence of calls to \textsc{Meld}, the expected number of
  updates to the \variable{fill} heaps in remaining vertices (lines~\ref{algline:informvaboutu} and~\ref{algline:informuaboutv}) is bounded by $O(m \log{n})$.
\end{lemma}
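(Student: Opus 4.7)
The plan is to use Lemma~\ref{lem:potential-increase} together with a telescoping argument on the potential function
\[
  \Phi(\Gcomp{}) = \sum_{u \in \Vcomp{}} D(u) \log\left(D(u)\right).
\]
First, I would enumerate the sequence of calls to \textsc{Meld} in the order they occur and label the component graphs immediately before and after the $j$-th meld by $\Gcomp{t_j}$ and $\Gcomp{t_j+1}$. By Lemma~\ref{lem:potential-increase}, the expected number of updates to remaining vertices caused by the $j$-th meld is at most $2\bigl(\Phi(\Gcomp{t_j+1}) - \Phi(\Gcomp{t_j})\bigr)$. Summing over all melds and using linearity of expectation, the total expected number of updates is at most
\[
  2 \sum_{j} \bigl(\Phi(\Gcomp{t_j+1}) - \Phi(\Gcomp{t_j})\bigr).
\]

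Next I would argue that this sum telescopes to at most $2 \Phi_{\max}$, where $\Phi_{\max}$ is the largest value the potential attains during the execution. The only events that can change $\Phi$ are (i) a call to \textsc{Meld}, and (ii) the creation of a brand-new component vertex when a remaining vertex is first pivoted (which simply adds a positive term $D(v)\log D(v)$ to the sum). Both kinds of events only ever increase $\Phi$, so the sum of the positive increments attributed to melds is bounded by the final value of $\Phi$.

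To bound $\Phi_{\max}$, I would observe that $D(u)$ is the sum of \emph{original} degrees of the vertices that have been merged into $u$, so $\sum_{u \in \Vcomp{}} D(u) \le \sum_{v \in V} \deg_G(v) = 2m$. Combined with $\log D(u) \le \log(2m) = O(\log n)$, this gives
\[
  \Phi(\Gcomp{}) \le \Bigl(\sum_{u \in \Vcomp{}} D(u)\Bigr) \cdot \max_u \log D(u) = O(m \log n).
\]
Since $\Phi(\Gcomp{0}) = 0$, the telescoping sum is $O(m \log n)$, so the expected number of updates to the \variable{fill} heaps across all calls to \textsc{Meld} is $O(m \log n)$, completing the proof.

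I do not expect any serious obstacle: the only subtle point is justifying that non-meld potential changes (the $D(v)\log D(v)$ terms from newly pivoted vertices) can only help, so that the meld-induced increments are bounded by $\Phi_{\max}$ rather than by the harder-to-control sum of all changes.
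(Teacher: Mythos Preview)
Your proposal is correct and follows essentially the same approach as the paper: invoke Lemma~\ref{lem:potential-increase}, observe that $\Phi$ is monotone nondecreasing (so the meld-increments sum to at most $\Phi_{\max}$), and bound $\Phi_{\max}=O(m\log n)$. Your write-up is in fact more careful than the paper's, which compresses all of this into two sentences; in particular your explicit handling of the non-meld potential changes from newly created component vertices is a point the paper leaves implicit.
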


\begin{proof}
  By Lemma~\ref{lem:potential-increase}, the number of updates is within a
  constant of the potential increase.
  Since our potential function $\Phi$ is bounded between $0$ and $O(m\log n)$,
  and at no point can it decrease. Hence, the total number of updates is also bounded by $O(m\log n)$.
\end{proof}

Combining the above lemmas gives our main theorem from
Section~\ref{sec:Sketching} on maintaining one copy of the 1-neighborhood sketch.

\begin{proof}(of Theorem~\ref{thm:DataStructureMain})
  Given any graph $G$ and a fixed sequence of vertices for pivoting, we use the
  \textsc{PivotVertex} routine to produce the sequence of graph states
  \begin{align*}
    G = \Gcomp{0}, \Gcomp{1}, \Gcomp{2},\dots,\Gcomp{n} = \emptyset.
  \end{align*}
  Recall that the goal is to maintain $\Rmin(\Nrem{}(w))$ for
  all $w\in \Vcomp{}$ and $\Rmin(\Nfill{}(v))$ for all
  $v\in \Vrem{}$.
  This is achieved by maintaining the two min-heaps, \variable{remaining} and
  \variable{fill}.

  When pivoting a remaining vertex $v$, \textsc{PivotVertex} first makes
  a constant number of updates to $v$'s remaining neighbors,
  which are bounded above by the original degree of $v$. Next, it removes $v$ from the
  \variable{remaining} heaps among $v$'s component vertex neighbors
  (line~\ref{algline:remove-from-component}), which are again at most as many as the original degree of $v$.
  Therefore the total cost of this part of the algorithm is $O(m\log n)$.
  The major chunk of the running time cost is incurred by updates to the
  \variable{fill} heaps in \textsc{InformRemaining}.
  By Lemma~\ref{lem:RemainingUpdatesDueToDeletion} and
  Lemma~\ref{lem:remaining-updates-from-meld}, the number of such updates is bounded by
  $O(m\log n)$.
  As each update is a $O(\log n)$ operation on a heap, the the total running
  time is $O(m\log^2n)$.
  The final step of a meld consists of merging two \variable{remaining}
  heaps.
  Since two min-heaps can be merged in time $O(\log^2 n)$,
  and the number of merges for a pivoted vertex can be bounded by its original degree,
  the cost of this step can be bounded by $O(m\log^2 n)$.
\end{proof}


\section{SETH-Hardness for Computing Minimum Degree Orderings}
\label{sec:Hardness}

Our hardness results for computing the minimum
fill degree and the min-degree ordering are based on
the \emph{strong exponential time hypothesis} (SETH),
which states that for all $\theta > 0$
there exists a~$k$ such that solving $k$-SAT requires
$\Omega(2^{(1-\theta)n})$ time.
Many hardness results based on SETH, including ours, go through the \OV problem
and make use of the following result.

\begin{theorem}[\cite{Williams05}]
\label{thm:HardnessOV}
Assuming SETH,
for any $\theta > 0$, there does not exist an $O(n^{2 - \theta})$ time algorithm
that takes $n$ binary vectors with $\Theta(\log^{2}{n})$ bits
and decides if there is an orthogonal pair.
\end{theorem}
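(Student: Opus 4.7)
The plan is to reduce $k$-SAT to \OV via Williams' split-and-list technique, contrapositively deriving an $O(2^{(1-\theta/2)n})$-time $k$-SAT algorithm from any hypothetical $O(N^{2-\theta})$-time \OV algorithm. The first preparatory step would be to invoke the Impagliazzo--Paturi--Zane sparsification lemma to reduce, with only subexponential branching, the SETH-hard $k$-SAT problem to instances with $m = c_k n$ clauses for some constant $c_k$ depending only on $k$. This bound on $m$ is what ultimately controls the dimension of the \OV instance we build.

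Next, I would partition the variable set into two halves $X_1$ and $X_2$ of size $n/2$ each, giving $N = 2^{n/2}$ partial assignments on each side. For each partial assignment $\alpha$ of $X_1$, I construct a binary vector $v_\alpha \in \{0,1\}^m$ indexed by clauses, with $v_\alpha[i] = 1$ if clause $i$ is \emph{not} satisfied by $\alpha$ and $v_\alpha[i] = 0$ otherwise. I build vectors $u_\beta$ analogously for $X_2$. The key observation is that the combined assignment $\alpha \cup \beta$ satisfies the formula if and only if every clause is satisfied by at least one half, which is exactly the condition $\langle v_\alpha, u_\beta \rangle = 0$. Hence the formula is satisfiable iff the bipartite set $\{v_\alpha\} \cup \{u_\beta\}$ contains an orthogonal pair (non-bipartite \OV follows by standard padding tricks, e.g.\ prepending a $0/1$ ``side'' bit with a matching $1/0$ check coordinate).

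For the dimension bound, note $m = O(n) = O(\log N)$, so the resulting \OV vectors actually fit in $O(\log N)$ bits, which is strictly within the $\Theta(\log^2 N)$ budget allowed by the theorem; we may zero-pad as needed. An $O(N^{2 - \theta})$-time algorithm for this regime of \OV would then solve $k$-SAT in time
\[
  O\bigl((2 \cdot 2^{n/2})^{2 - \theta}\bigr) \;=\; O\bigl(2^{(1 - \theta/2)n}\bigr),
\]
for every $k$, contradicting SETH.

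The main obstacle in making this rigorous is correctly handling the sparsification step: IPZ produces a disjunction of $2^{\varepsilon n}$ sparse subinstances for every $\varepsilon > 0$, so one must choose $\varepsilon$ small relative to $\theta$ to absorb the branching factor into the $(1 - \theta/2)$ exponent and still beat $2^{(1-\theta')n}$ for some $\theta' > 0$. The remainder of the argument (constructing the clause-indicator vectors and verifying the orthogonality-satisfiability equivalence) is purely mechanical.
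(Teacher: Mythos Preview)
The paper does not prove this theorem at all: it is quoted as a known result from~\cite{Williams05} and used as a black box for the downstream hardness reductions (Lemma~\ref{lem:HardnessSingleStep} and Theorem~\ref{thm:HardnessOrdering}). Your proposal is the standard split-and-list reduction from Williams' original paper, and it is correct as sketched, including the handling of sparsification and the observation that the resulting dimension $O(\log N)$ fits comfortably within the stated $\Theta(\log^2 N)$ regime; the padding remark for the non-bipartite version even matches the comment the paper makes immediately after the theorem statement.
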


\noindent
We remark that \OV is often stated as deciding if there exists
a pair of orthogonal vectors from two different sets~\cite{Williams2015hardness},
but we can reduce the problem to a single set by appending $[1; 0]$ to all
vectors in the first set and $[0; 1]$ to all vectors in the second set.

The first hardness observation for computing the minimum fill degree 
in a partially eliminated graph is a direct reduction to \OV.
To show this, we construct a bipartite graph that demonstrates how \OV can be
interpreted as deciding if a union of cliques covers the edges of a
clique on the remaining vertices of a partially eliminated graph.

\begin{lemma}
\label{lem:HardnessSingleStep}
Assuming SETH, for any $\theta > 0$,
there does not exist an $O(m^{2 - \theta})$ time algorithm that takes 
as input $G$ with a set of eliminated vertices and computes the minimum fill degree in
$\Gfill{}$.
\end{lemma}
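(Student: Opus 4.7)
The plan is to give a reduction from \OV to computing the minimum fill degree. Given an instance of \OV consisting of $n$ binary vectors $v_1, v_2, \dots, v_n$ of dimension $d = \Theta(\log^2{n})$, I would construct a bipartite graph $G$ as follows. For each vector $v_i$ create a remaining vertex $u_i$, and for each coordinate $j \in [d]$ create an eliminated vertex $x_j$. Add the edge $(u_i, x_j)$ if and only if $v_i[j] = 1$. The set of eliminated vertices is then $\{x_1, x_2, \dots, x_d\}$.

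Next I would verify the key structural correspondence: in the fill graph $\Gfill{}$ on the remaining vertices $\{u_1, \dots, u_n\}$, we have $(u_i, u_{i'}) \in \Efill{}$ if and only if $u_i$ and $u_{i'}$ share a common eliminated neighbor $x_j$, which holds if and only if $v_i[j] = v_{i'}[j] = 1$ for some $j$, i.e.\ $\langle v_i, v_{i'}\rangle \ne 0$. Since the only remaining vertices are the $u_i$ and they have no direct edges to each other in $G$, it follows that
\[
  \degfill{}(u_i) = \left|\left\{i' \ne i : \langle v_i, v_{i'}\rangle \ne 0\right\}\right|
  = (n-1) - \left|\left\{i' \ne i : v_i \perp v_{i'}\right\}\right|.
\]
Therefore the minimum fill degree satisfies $\min_i \degfill{}(u_i) \leq n-2$ if and only if there exists an orthogonal pair in the \OV instance.

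The final step is to translate the edge count into a SETH lower bound. The constructed graph has $m \leq n \cdot d = O(n \log^2{n})$ edges. If there were an $O(m^{2-\theta})$ algorithm for computing the minimum fill degree for some $\theta > 0$, then running it on $G$ and comparing the answer to $n-2$ would solve \OV in time
\[
  O\!\left((n \log^2 n)^{2-\theta}\right) = O\!\left(n^{2 - \theta} \cdot \log^{4-2\theta}{n}\right)
  = O\!\left(n^{2-\theta'}\right)
\]
for any $\theta' < \theta$, since polylogarithmic factors are absorbed. This contradicts Theorem~\ref{thm:HardnessOV} and thereby SETH.

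The main obstacle is conceptual rather than technical: one must set up the graph so that a single minimum-degree query over the fill graph encodes the \emph{existence} of an orthogonal pair, rather than requiring an answer on each individual pair. The bipartite construction above threads this needle because eliminating the coordinate vertices induces cliques precisely on the vectors sharing a $1$-coordinate, so a vertex escapes being connected to everyone else exactly when its vector has an orthogonal partner. The remaining bookkeeping around polylogarithmic slack in $m$ versus $n$ is routine and follows the standard template for SETH-hardness results via \OV.
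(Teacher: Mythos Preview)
Your proposal is correct and essentially identical to the paper's proof: the same bipartite construction with vector vertices remaining and coordinate vertices eliminated, the same observation that $(u_i,u_{i'})$ is a fill edge iff $\langle v_i,v_{i'}\rangle\neq 0$, and the same edge-count bound $m=O(n\log^2 n)$ to derive the contradiction with SETH. The only cosmetic difference is that the paper absorbs the polylog slack as $O(n^{2-\theta/2})$ rather than your $O(n^{2-\theta'})$.
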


\begin{proof}
Consider an $\OV$ instance with $n$ vectors
$\aa_1, \aa_2, \dots, \aa_n \in \{0,1\}^d$,
and construct a bipartite graph $G = (\Vvec, \Vdim, E)$
such that each vertex in $\Vvec$ corresponds to a vector $\aa_i$
and each vertex in $\Vdim$ uniquely corresponds to a dimension
$1 \leq j \leq d$.
For the edges, we connect vertices $i \in \Vvec$
with $j \in \Vdim$ if and only if $\aa_i(j) = 1$.

Consider the graph state with all of $\Vdim$ eliminated
and all of $\Vvec$ remaining.
We claim that there exists a pair of orthogonal vectors
among $\aa_1, \aa_2,\dots,\aa_n$ if and only if
there exists a remaining vertex $v \in V(G^{+})$
with $\deg(v) < n - 1$.
Let $u, v \in \Vvec$ be any two different vertices,
and let $\aa_u$ and $\aa_v$ be their corresponding vectors.
The vertices $u$ and $v$ are adjacent in $G^{+}$ if and only if there
exists an index $1 \leq j \leq d$
such that $\aa_u(j) = \aa_v(j) = 1$.

Suppose there exists an $O(m^{2 - \theta})$ time algorithm for implicitly
finding the minimum fill degree in a partially eliminated graph, for some
$\theta > 0$.  Then for $d = \Theta(\log^2 n)$ we can use this algorithm to
compute the vertex with minimum fill degree in the bipartite graph described
above in time
\[
  O\left(m^{2-\theta}\right)
  = O\left(\left(n \log^2 n\right)^{2-\theta}\right)
  = O\left(n^{2 - \theta / 2}\right),
\]
which contradicts SETH by Theorem~\ref{thm:HardnessOV}.
\end{proof}

Building on the previous observation, we now show that an exact linear-time
algorithm for computing min-degree elimination orderings is unlikely.  In
particular, our main hardness result is:

\begin{restatable}[]{theorem}{HardnessOrdering}
\label{thm:HardnessOrdering}
Assuming SETH,
for any $\theta > 0$, there does not exist
an $O(m^{4/3 - \theta})$ time algorithm for producing a 
min-degree elimination ordering.
\end{restatable}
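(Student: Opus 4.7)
The plan is to reduce \OV to min-degree ordering. Given $n$ vectors in $\{0,1\}^d$ with $d = \Theta(\log^2 n)$, I would construct a graph $G$ with $m = \widetilde{O}(n^{3/2})$ edges such that the min-degree ordering of $G$ witnesses whether there is an orthogonal pair. An $O(m^{4/3-\theta})$ algorithm for min-degree ordering would then give an $\widetilde{O}(n^{2-(3/2)\theta})$-time \OV algorithm, contradicting Theorem~\ref{thm:HardnessOV}.

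The construction builds on Lemma~\ref{lem:HardnessSingleStep}: that lemma exhibits a bipartite vector--dimension graph where, after the dimension vertices have been marked eliminated, finding a minimum-fill-degree vertex decides \OV. To obtain ordering hardness, I would split the $n$ vectors into $\sqrt{n}$ groups of size $\sqrt{n}$ and build a disjoint bipartite gadget for every pair of groups, each serving as a local \OV test on its $2\sqrt{n}$ vectors. With $\Theta(n)$ pairs and $\widetilde{O}(\sqrt{n})$ edges per gadget, the total edge count is $\widetilde{O}(n^{3/2})$. Inside a gadget I would add a small auxiliary structure whose role is to coerce the min-degree algorithm into pivoting all of the gadget's dimension vertices before any of its vector vertices; once that happens, the fill graph on the remaining vector vertices of the gadget matches the Lemma~\ref{lem:HardnessSingleStep} encoding, and the next vertex pivoted exposes the \OV status of that pair.

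Since distinct gadgets share no edges, the global ordering restricted to any gadget is a valid min-degree ordering of that gadget in isolation, so each per-pair answer can be read off independently. Taking the disjunction of the $\Theta(n)$ per-pair answers recovers the global \OV answer in $\widetilde{O}(n^{3/2})$ post-processing time, completing the reduction.

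The main obstacle will be the design of the per-gadget auxiliary structure. In the raw bipartite gadget, dimension vertices have degree up to $2\sqrt{n}$ while vector vertices have degree only $\Theta(\log^2 n)$, so the default min-degree ordering pivots vectors first and destroys the signal. I would need auxiliary vertices that invert this ordering using only $\widetilde{O}(\sqrt{n})$ edges per gadget, while also being robust against the cascading fill generated as the dimensions are successively eliminated. A naive inversion that attaches $\Omega(\sqrt{n})$ auxiliary edges to each vector vertex costs $\Omega(n)$ edges per gadget and pushes the total past $n^{3/2}$, so the key technical step is to find a sparser, shared auxiliary structure---potentially a small shared hub or clique connected to the gadget's vector vertices---whose effect on the ordering persists through the cascading fill. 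The analysis would proceed by an invariant argument showing that the minimum fill degree inside a gadget remains at a dimension vertex until all $d$ dimensions are pivoted, at which point the Lemma~\ref{lem:HardnessSingleStep} reduction delivers the \OV answer.
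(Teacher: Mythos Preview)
Your proposal has a genuine gap at the point you yourself flag: forcing the per-gadget ordering within an $\widetilde{O}(\sqrt{n})$ edge budget is not merely ``the key technical step'' but is in fact impossible as stated. In a gadget you have $2\sqrt{n}$ vector vertices of initial degree at most $d=\Theta(\log^2 n)$ and $d$ dimension vertices whose degrees may be as large as $2\sqrt{n}$. Before any pivot occurs, the minimum-degree vertex is therefore a vector vertex unless you raise every vector vertex's degree above the largest dimension degree, i.e.\ by $\Omega(\sqrt{n})$. With $2\sqrt{n}$ vector vertices that is $\Omega(n)$ extra edges per gadget and $\Omega(n^2)$ in total, blowing the $\widetilde{O}(n^{3/2})$ budget. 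Tricks based on fill do not help: fill only appears after a pivot, and the very first pivot is decided by raw degrees. Any shared auxiliary structure that touches several gadgets destroys the disjointness you rely on to read off per-pair answers independently.

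The paper avoids this trap by \emph{not} splitting into disjoint gadgets. It keeps a single copy of the $n$ vector vertices and instead attacks the other side of the inequality: rather than raising vector degrees past $n$, it lowers dimension degrees to $O(\sqrt{n})$ by replacing each dimension vertex with $O(n)$ copies, one per subset of a covering set system of $[n]$ into $O(n)$ sets of size $O(\sqrt{n})$ that hits every pair (Lemma~\ref{lem:CoveringSetSystem}). A single shared clique $\Vpad$ of size $20\sqrt{n}$, attached to all $n$ vector vertices, then pads every vector degree above $20\sqrt{n}$ at a total cost of only $O(n^{3/2})$ edges because the padding is amortised across all vectors at once. This yields $|E|=O(n^{3/2}d)=\widetilde{O}(n^{3/2})$ with the ordering guarantee of Lemma~\ref{lem:ConstructionElimOrdering}, after which the argument finishes exactly as in Lemma~\ref{lem:HardnessSingleStep}. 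The conceptual difference from your plan is that the $\sqrt{n}$ saving comes from the covering-system decomposition of dimensions, not from a group-pair decomposition of vectors.
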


The main idea behind our construction is to modify the bipartite graph
in the proof of Lemma~\ref{lem:HardnessSingleStep} in such a way that
a minimum degree ordering has the effect of eliminating
the $d$ vertices in $\Vdim$ before any vertex in~$\Vvec$. This allows us to
use \MinDeg to efficiently solve any instance of \OV.
A limitation of the initial construction is that vertices in
$\Vdim$ can have degree as large as $n$, so requiring that they be
eliminated first is difficult to guarantee.
To overcome this problem, we create a degree hierarchy by splitting 
each vertex in $\Vdim$ into $\Theta(n)$ vertices with degree $O(\sqrt{n})$.
We call this construction a \emph{covering set system} because it maintains all
two-step connections between vertices in $\Vvec$.

\begin{lemma}
\label{lem:CoveringSetSystem}
Given any positive integer $n$,
we can construct in $O(n^{3/2})$ time
a covering set system of the integers $[n] = \{1,2, \dots, n\}$.
This system is collection of subsets $I_1, I_2, \dots, I_k \subseteq [n]$
such that:
\begin{itemize}
  \item The number of subsets $k = O(n)$.
  \item The cardinality $|I_j| \le 10 \sqrt{n}$, for all $1 \leq j \leq k$.
  \item For each $(i_1, i_2) \in [n]^2$ there exists a subset $I_j$ such that
  $i_1, i_2 \in I_j$.
\end{itemize}
\end{lemma}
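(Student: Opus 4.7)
The plan is to construct the covering system via a grid-style partition. Let $s = \lceil \sqrt{n} \rceil$ and fix any partition of $[n]$ into $s$ blocks $A_1, A_2, \ldots, A_s$, where each block has size at most $\lceil n/s \rceil \le \lceil \sqrt{n} \rceil$. For instance, one may take $A_a = \{(a-1)s + 1, (a-1)s + 2, \ldots, \min(as, n)\}$ for each $a \in [s]$, which is straightforward to write down in $O(n)$ time.

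Next, I would form the collection of subsets by indexing over all pairs of blocks: for each $(a,b) \in [s]^2$, define
\[
I_{(a,b)} \defeq A_a \cup A_b.
\]
This yields $k = s^2 \le (\sqrt{n}+1)^2 = O(n)$ subsets, each of cardinality at most $|A_a| + |A_b| \le 2\lceil \sqrt{n} \rceil \le 10\sqrt{n}$, which handles the first two conditions of the lemma.

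For the covering property, fix any $(i_1, i_2) \in [n]^2$. Since the $A_a$'s partition $[n]$, there exist unique indices $a, b \in [s]$ with $i_1 \in A_a$ and $i_2 \in A_b$, and then $\{i_1, i_2\} \subseteq A_a \cup A_b = I_{(a,b)}$. (If $i_1 = i_2$, this still holds with $a = b$.) Hence every pair is covered.

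Finally, for the runtime, constructing each $I_{(a,b)}$ requires writing down at most $2\lceil \sqrt{n} \rceil$ elements, and there are $s^2 = O(n)$ subsets to construct, giving a total cost of $O(n \cdot \sqrt{n}) = O(n^{3/2})$ as required. I do not anticipate any real obstacle here; the only minor bookkeeping is confirming the size bound $2\lceil \sqrt{n}\rceil \le 10\sqrt{n}$ for all $n \ge 1$, which holds trivially, and verifying that our pair of blocks construction truly covers the diagonal pairs $(i, i)$, which it does by taking $a = b$.
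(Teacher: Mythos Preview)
Your proposal is correct and, in fact, more elementary than the paper's argument. The paper constructs the covering via an affine plane over a finite field: it picks a prime $p$ with $\sqrt{n} \le p < 4\sqrt{n}$ (via Bertrand's postulate), embeds $[n]$ in a $p \times p$ grid, and takes as subsets all $p^2$ ``diagonal'' lines $\{(x, ax+b \bmod p) : x \in \mathbb{Z}_p\}$ together with the $p$ ``vertical'' rows. Any two grid points either share an $x$-coordinate (and lie in a common row) or determine a unique line, which yields the covering property. This gives $p^2 + p = O(n)$ sets each of size $p = O(\sqrt{n})$.

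Your block-pair construction achieves exactly the same asymptotic bounds with no number theory at all: partition into $\lceil \sqrt{n}\rceil$ blocks and take all pairwise unions. The paper's affine-plane route is the ``right'' construction from a design-theory standpoint (it is essentially tight, with each element lying in only about $\sqrt{n}$ subsets rather than your $\approx 2\sqrt{n}$), but that efficiency is not needed for the lemma as stated. For the purposes of Theorem~\ref{thm:HardnessOrdering}, your simpler argument suffices and avoids the detour through primes.
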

\noindent
We also pad each vertex in $\Vvec$ with $\Omega(\sqrt{n})$
edges to ensure that it is eliminated after the vertices
introduced by the covering set systems.
We formally describe this construction in Figure~\ref{fig:HardnessOrderingConstruction}.

\begin{figure}[H]

\begin{algbox}
\begin{enumerate}
\item Create one vertex per input vector $\aa_1,\aa_2,\dots,\aa_n$,
and let these vertices be $\Vvec$.
\item For each dimension $j=1$ to $d$:
  \begin{enumerate}
    \item Construct a covering set system for $[n]$.
  \item Create a vertex in $\Vdim$ for each subset in this covering set
  system.
  \item For each vector $\aa_i$ such that $\aa_i(j) = 1$,
    add an edge between its vertex in $\Vvec$ and every vertex corresponding
    to a subset in this covering system that contains $i$.
  \end{enumerate}
\item Introduce $20 \sqrt{n}$ extra vertices called $\Vpad$:
  \begin{enumerate}
  \item Connect all pairs of vertices in $\Vpad$.
  \item Connect each vertex in $\Vpad$ with every vertex in $\Vvec$.
  \end{enumerate}
\end{enumerate}

\end{algbox}

\caption{Construction for reducing \OV to \MinDeg.}

\label{fig:HardnessOrderingConstruction}
\end{figure}

\begin{lemma}
\label{lem:size}
For any \OV instance with $n$ vectors of dimension $d$, let $G$ be the graph
produced by the construction in Figure~\ref{fig:HardnessOrderingConstruction}.
We have $|V| = O(nd)$ and $|E| = O(n^{3/2}d)$.
\end{lemma}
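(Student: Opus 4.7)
The plan is to bound $|V|$ and $|E|$ by summing contributions from each step of the construction in Figure~\ref{fig:HardnessOrderingConstruction}, using the size guarantees from Lemma~\ref{lem:CoveringSetSystem} as the main quantitative input.

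For the vertex count, I would partition $V$ into $\Vvec$, $\Vdim$, and $\Vpad$ and bound each piece separately. By construction $|\Vvec| = n$. For $|\Vdim|$, each of the $d$ dimensions contributes one vertex per subset of its covering set system, and Lemma~\ref{lem:CoveringSetSystem} guarantees $O(n)$ subsets per system, so $|\Vdim| = O(nd)$. Finally $|\Vpad| = 20\sqrt{n} = O(\sqrt{n})$. Adding these gives $|V| = n + O(nd) + O(\sqrt{n}) = O(nd)$, since $d \geq 1$.

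For the edge count I would consider the three sources of edges. The edges between $\Vvec$ and $\Vdim$ are the dominant term. For a fixed dimension $j$, each covering set $I_\ell$ contributes at most $|I_\ell|$ edges (one per index $i \in I_\ell$ with $\aa_i(j) = 1$), and Lemma~\ref{lem:CoveringSetSystem} bounds $\sum_\ell |I_\ell| \leq O(n) \cdot 10\sqrt{n} = O(n^{3/2})$. Summing over the $d$ dimensions gives $O(n^{3/2}d)$ such edges. The clique on $\Vpad$ contributes $\binom{20\sqrt{n}}{2} = O(n)$ edges, and the complete bipartite connection between $\Vpad$ and $\Vvec$ contributes $20\sqrt{n} \cdot n = O(n^{3/2})$ edges. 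Combining these bounds yields $|E| = O(n^{3/2}d) + O(n) + O(n^{3/2}) = O(n^{3/2}d)$.

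There is no real obstacle here, as both bounds are immediate consequences of Lemma~\ref{lem:CoveringSetSystem} combined with elementary counting. The only point that warrants a moment of care is observing that the per-dimension edge bound $\sum_\ell |I_\ell|$ correctly upper bounds the edges added in step 2(c), regardless of which vectors $\aa_i$ are nonzero in coordinate $j$, and that $\Vpad$-incident edges are lower-order compared to the $O(n^{3/2}d)$ main term.
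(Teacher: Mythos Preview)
Your proposal is correct and follows essentially the same approach as the paper: partition $V$ into $\Vvec$, $\Vdim$, $\Vpad$ and bound each piece, and partition $E$ into the $\Vpad$-clique edges, the $\Vpad$--$\Vvec$ bipartite edges, and the $\Vvec$--$\Vdim$ edges, using Lemma~\ref{lem:CoveringSetSystem} to bound the last term by $d \cdot O(n) \cdot 10\sqrt{n} = O(n^{3/2}d)$. Your justification of the $\Vvec$--$\Vdim$ edge count via $\sum_\ell |I_\ell|$ is slightly more explicit than the paper's, but the argument is the same.
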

\begin{proof}
The number of vertices in $G$ is
\[
  \left|V\right|
  =
  20 \sqrt{n} + n + d \cdot O\left(n\right)
  = O\left(nd\right).
\]
Similarly, an upper bound on the number of edges in $G$ is
\[
 \left|E\right|
  = \binom{20 \sqrt{n} }{2} + 20 \sqrt{n} \cdot n
  + d \cdot 10\sqrt{n} \cdot O\left(n\right) 
  = 
  O\left(n^{3/2} d\right),
\]
where the terms on the left-hand side of the final equality correspond to edges
contained in $\Vpad$,
the edges between $\Vpad$ and $\Vvec$,
and edges between $\Vvec$ and $\Vdim$, respectively.
\end{proof}

\begin{lemma}
\label{lem:ConstructionElimOrdering}
Consider a graph $G$ constructed from an \OV instance
as described in Figure~\ref{fig:HardnessOrderingConstruction}.
For any min-degree ordering of $G$, the vertices
in $\Vdim$ are the first to be eliminated.
Furthermore, the fill degree of the next vertex to be eliminated is
$\min_{v \in \Vvec} \degfill{}(v)$.
\end{lemma}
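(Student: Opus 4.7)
The plan is to show that the degree hierarchy built into the construction forces every min-degree ordering to exhaust $\Vdim$ before touching $\Vvec \cup \Vpad$, and then to compute the resulting fill degrees explicitly after the $\Vdim$ phase to identify the next vertex.

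First I would bound the initial degrees in $G$. By Lemma~\ref{lem:CoveringSetSystem}, every $\Vdim$ vertex corresponds to a subset of size at most $10\sqrt{n}$ and is incident only to $\Vvec$, so it has degree at most $10\sqrt{n}$. Every $\Vpad$ vertex is adjacent to the other $20\sqrt{n}-1$ vertices of $\Vpad$ and to all $n$ vertices of $\Vvec$, giving degree exactly $n+20\sqrt{n}-1$. Every $\Vvec$ vertex is adjacent to all $20\sqrt{n}$ vertices of $\Vpad$, so it has degree at least $20\sqrt{n}$. The next observation is that $\Vdim$ is an independent set, and no edge joins $\Vdim$ to $\Vpad$. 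Consequently, pivoting a $\Vdim$ vertex only adds clique edges among its $\Vvec$ neighbors, leaving the fill degree of every other $\Vdim$ vertex unchanged and the fill degree of every $\Vpad$ vertex unchanged. Hence, as long as only $\Vdim$ vertices have been eliminated, the remaining fill degrees satisfy
\[
\max_{x \in \Vdim} \degfill{}(x) \leq 10\sqrt{n} < 20\sqrt{n} \leq \min_{v \in \Vvec} \degfill{}(v), \qquad \min_{u \in \Vpad} \degfill{}(u) = n+20\sqrt{n}-1,
\]
for $n$ sufficiently large. Therefore the minimizer of the fill degree is always in $\Vdim$ during this phase, and every min-degree ordering eliminates all of $\Vdim$ before any vertex of $\Vvec \cup \Vpad$.

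Next I would identify the first vertex pivoted after $\Vdim$ is fully eliminated. For any $v \in \Vvec$, its fill 1-neighborhood in $\Gfill{}$ consists of $\Vpad$ (contributing $20\sqrt{n}$ vertices) together with every $u \in \Vvec\setminus\{v\}$ that shares a common $\Vdim$ neighbor with $v$ in the original graph. By the construction, $v$ and $u$ share a $\Vdim$ neighbor exactly when there is a dimension $j$ with $\aa_v(j)=\aa_u(j)=1$ \emph{and} some subset $I$ in the covering system of dimension $j$ containing both $v$ and $u$. The covering property of Lemma~\ref{lem:CoveringSetSystem} guarantees the second condition is always satisfiable, so this is equivalent to $\aa_v$ and $\aa_u$ not being orthogonal. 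Hence
\[
\degfill{}(v) = 20\sqrt{n} + \left|\{u \in \Vvec \setminus \{v\} : \aa_v \not\perp \aa_u\}\right| \leq n + 20\sqrt{n} - 1.
\]
A $\Vpad$ vertex still has fill degree exactly $n+20\sqrt{n}-1$ since no pivot altered its neighborhood. Thus $\min_{v \in \Vvec}\degfill{}(v) \leq \min_{u \in \Vpad}\degfill{}(u)$, so the next vertex eliminated attains fill degree exactly $\min_{v \in \Vvec}\degfill{}(v)$, as claimed.

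The only subtle point, and therefore the place I would be most careful, is verifying that $\Vvec$ vertices never dip below the $\Vdim$ minimum during the $\Vdim$ phase. This is where the $\Vpad$ gadget is essential: the $20\sqrt{n}$ pad edges floor every $\Vvec$ fill degree at $20\sqrt{n}$, strictly above the $10\sqrt{n}$ cap on $\Vdim$ fill degrees guaranteed by the covering system. With that floor in place, the rest of the argument is a direct comparison of explicit degree bounds, and no amortized or global reasoning is required.
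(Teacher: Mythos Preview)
Your proof is correct and follows essentially the same approach as the paper: bound the three classes of initial degrees, use the independence of $\Vdim$ to argue its fill degrees never rise while $\Vvec$ fill degrees are floored at $20\sqrt{n}$ by the $\Vpad$ gadget, and then compare fill degrees after the $\Vdim$ phase. Your treatment of the second part is in fact more explicit than the paper's---you compute $\degfill{}(v)$ for $v\in\Vvec$ via the covering property and the orthogonality condition, whereas the paper simply notes that either some $\Vvec$ vertex is chosen or all remaining vertices tie at $n+20\sqrt{n}-1$---but the underlying reasoning is the same.
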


\begin{proof}
Let the graph be $G = (V, E)$, such that $V$ is partitioned into
\[
  V = \Vvec \cup \Vdim \cup \Vpad,
\]
as described in Figure~\ref{fig:HardnessOrderingConstruction}.
Initially, for every vertex $\vpad \in \Vpad$ we have
\[
  \deg\left( \vpad \right)
  =
  \left(20 \sqrt{n} - 1\right) + n.
\]
For every vertex $\vvec \in \Vvec$ we have
\[
  \deg\left(\vvec\right)
  =
  20 \sqrt{n} + \left| E\left( \vvec, \Vdim \right) \right|
  \geq
  20 \sqrt{n},
\]
and for every vertex $\vdim \in \Vdim$ we have
\[
  \deg\left(\vdim \right) \le 10 \sqrt{n}.
\]

Pivoting out a vertex in $\Vdim$ does not increase the fill degree of any other
vertex in $\Vdim$ since no two vertices in $\Vdim$ are adjacent.
As these vertices are pivoted, we still maintain
\[
  \degfill{}(v) \ge 20 \sqrt{n},
\]
for all $v \in \Vvec$.
Therefore, the first vertices to be pivoted must be all $v \in \Vdim$.
After all the vertices in $\Vdim$ have been pivoted,
the next vertex has fill degree $\min_{v \in \Vvec} \degfill{}(v)$,
because either a vertex in $\Vvec$ will be eliminated
or all remaining vertices have fill degree $20\sqrt{n} + n - 1$.
\end{proof}

\begin{proof}[Proof of Theorem~\ref{thm:HardnessOrdering}.]
Suppose for some $\theta > 0$ there exists an $O(m^{4/3 - \theta})$ time algorithm
for \MinDeg.
Construct the graph $G=(V,E)$ with covering sets as described in
Figure~\ref{fig:HardnessOrderingConstruction}.
For $d = \Theta(\log^2 n)$,
it follows from Lemma~\ref{lem:size} that
$|V|=O(n \log^2 n)$
and
$|E| = O(n^{3/2} \log^2 n)$.
Therefore, by the assumption, we can obtain
a min-degree ordering of $G$ in time
\[
  O\left(m^{4/3 - \theta}\right)
  = O\left( \left(n^{3/2} \log^2 n\right)^{4/3 - \theta}\right)
  = O\left(n^{2 - \theta}\right).
\]

By Lemma~\ref{lem:ConstructionElimOrdering},
after the first $|\Vdim|$ vertices have been pivoted,
the fill graph $\Gfill{}$ is essentially identical to
the partially eliminated state of the bipartite graph in the proof of Lemma~\ref{lem:HardnessSingleStep}.
We can then compute the fill degree of the next vertex
to be eliminated in $O(m) = O(n^{2 - \theta})$ time by Lemma~\ref{lem:ComputeFill}.
Checking whether the fill degree of this vertex is $20\sqrt{n} + n - 1$
allows us to solve \OV in
$O(n^{2-\theta})$ time, which contradicts SETH.
\end{proof}

All that remains is to efficiently construct the covering set system defined in
Lemma~\ref{lem:CoveringSetSystem}. We can interpret this construction as a way
to cover all the edges of $K_n$ using $O(n)$ $K_{10\sqrt{n}}$ subgraphs.
We note that our construction is closely related to
Steiner systems obtained via finite affine planes
as well as existence results for covering problem with fixed-size
subgraphs~\cite{Chee2013covering,Caro1998covering}.

\begin{proof}[Proof of Lemma~\ref{lem:CoveringSetSystem}.]
We use a simple property of finite fields.
Let $p = \textsc{NextPrime}(\sqrt{n})$, which
we can compute in $O(n)$ 
since $p < 4 \sqrt{n}$ by Bertrand's postulate.
Clearly $[n] \subseteq [p^2]$, so it suffices to find a covering for $[p^2]$.
Map the elements of $[p^2]$ to the coordinates of a $p \times p$ array in the
canonical way so that
$1 \mapsto (0,0), 2 \mapsto(0, 1), \dots, p^2 \mapsto (p-1,p-1)$. 
For all $(a,b) \in \{0,1,\dots,p-1\}^2$,
define
\[
D\left(a,b\right)
=
\left\{
  \left(x,y\right) \in \left\{0,1,\dots,p-1\right\}^2
  : y \equiv ax + b \pmod{p}
\right\}
\]
to be the diagonal subsets of the array,
and define
\[
R\left(a\right)
=
\left\{
  \left(x,y\right) \in \left\{0,1,\dots,p-1\right\}^2 :
  x \equiv a \pmod{p}
\right\}
\]
to be the row subsets of the array.
Let the collection of these subsets be
\[
  S
  =
  \left\{
    D\left(a,b\right) : a,b \in \left\{0,1,\dots,p-1\right\}
  \right\}
  \cup
  \left\{
     R\left(a\right) : a \in \left\{0,1,\dots,p-1\right\}
  \right\}.
\]

The construction clearly satisfies the first two conditions.
Consider any $(a,b) \in [p^2]^2$ and their coordinates
in the array $(x_1,y_1)$ and $(x_2,y_2)$.
If $x_1 = x_2$, then $(x_1,y_1), (x_2,y_2) \in R(x_1)$.
Otherwise, it follows that $(x_1,y_1)$ and $(x_2,y_2)$ are solutions to the line
\[
y
\equiv
  \frac{y_1 - y_2}{x_1 - x_2}
\left( x - x_1 \right) + y_1 \pmod{p},
\]
so the third condition is satisfied.
\end{proof}

\section*{Acknowledgements}

We thank John Gilbert and Gramoz Goranci for many helpful discussions regarding
various topics in this paper.  We also would like to acknowledge Animesh
Fatehpuria for independently obtaining the construction of covering set systems
in Lemma~\ref{lem:CoveringSetSystem}.

\bibliographystyle{alpha}
\bibliography{references}

\begin{appendix}
	
\section{Guarantees for Selection-Based Estimators}
\label{sec:SketchingProofs}

In this section we prove Lemma~\ref{lem:minValue},
which states that the reciprocal of the $\lfloor k(1-1/e) \rfloor$-quantile in
$\variable{minimizers}[u]$ can be used to accurately approximate of
$\deg(u)$.
Our proofs follow the same outline as in \cite[Section 7]{Cohen97},
but we consider keys $R(u)$ drawn uniformly from $[0,1)$ instead of
the exponential distribution.
We restate the lemma for convenience.

\minValue*

We start by stating Hoeffding's tail inequality for sums of independent
Bernoulli random variables, and then we give a useful numerical bound that
relates the approximation error and degree of a vertex to the probability that
the quantile $Q(u)$ variable deviates from its expected value.

\begin{lemma}[Hoeffding's inequality]
\label{lem:Hoeffding}
Let $X_1, X_2, \dots, X_n$ be i.i.d.\ Bernoulli random variables such that
$\Pr[X_i = 1] = p$ and $\Pr[X_i = 0] = 1-p$.
Then for any $\delta > 0$ we have the inequalities
\begin{align*}
\Pr \left[ \sum_{i=1}^n X_i \leq (p-\delta)n \right] \leq \exp(-2\delta^2 n), \\
\Pr \left[ \sum_{i=1}^n X_i \geq (p+\delta)n \right] \leq \exp(-2\delta^2 n).
\end{align*}
\end{lemma}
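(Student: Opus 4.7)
The plan is to prove this by the standard Chernoff-type argument, i.e.\ Markov's inequality applied to the moment generating function combined with Hoeffding's lemma. First, I would reduce to the upper tail: the lower tail bound follows by applying the upper tail argument to the i.i.d.\ Bernoulli variables $Y_i = 1 - X_i$ (which have mean $1-p$ and the same Bernoulli structure), since the event $\sum_i X_i \le (p-\delta)n$ is identical to $\sum_i Y_i \ge ((1-p)+\delta) n$. So it suffices to prove the upper tail.

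For the upper tail, center the variables by setting $Z_i = X_i - p$, so $\E[Z_i]=0$ and $Z_i$ is supported on $[-p,\,1-p]$, an interval of length exactly $1$. For any $t > 0$, Markov's inequality applied to $\exp(t \sum_i Z_i)$ together with independence of the $Z_i$ yields
\[
\Pr\!\left[\sum_{i=1}^n Z_i \geq \delta n\right]
\;\le\; e^{-t\delta n} \prod_{i=1}^n \E\!\left[e^{t Z_i}\right].
\]

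The key analytic ingredient is Hoeffding's lemma: for any mean-zero random variable $Y$ supported on an interval of length $\ell$, one has $\E[e^{tY}] \le \exp(t^2 \ell^2 /8)$. The standard proof uses convexity of $x \mapsto e^{tx}$ to upper bound it by the chord through the endpoints of the support, and then a Taylor expansion of the logarithm of the resulting expression (whose second derivative is bounded by $\ell^2/4$). Applied in our setting with $\ell = 1$, this gives $\E[e^{tZ_i}] \le e^{t^2/8}$ for every $i$, so $\prod_i \E[e^{tZ_i}] \le e^{nt^2/8}$.

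Substituting and optimizing the free parameter, the choice $t = 4\delta$ makes the exponent
\[
-t\delta n + \frac{n t^2}{8} \;=\; -4\delta^2 n + 2\delta^2 n \;=\; -2\delta^2 n,
\]
which is exactly the claimed bound $\exp(-2\delta^2 n)$. The only real obstacle is Hoeffding's lemma itself; everything else is a one-parameter optimization of the exponent. Since the bound is a textbook result, the proof in the paper can be presented by citing Hoeffding's lemma and then carrying out the two-line moment-generating-function calculation above, rather than redoing the convexity argument from scratch.
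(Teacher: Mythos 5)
Your argument is correct: the reduction of the lower tail to the upper tail via $Y_i = 1 - X_i$, the Markov/moment-generating-function step, Hoeffding's lemma with interval length $\ell = 1$, and the choice $t = 4\delta$ giving exponent $-2\delta^2 n$ are all exactly the standard derivation. The paper itself offers no proof of this lemma --- it is stated as a classical tool (used only in the appendix bounds for $Q(u)$) --- so your treatment, which you yourself note could be compressed to a citation plus the two-line optimization, is entirely consistent with how the paper handles it.
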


\begin{lemma}
\label{lem:expbounds}
For any $|\epsilon| < 0.1$ and $d \ge 1$, we have
\[
  \exp\left( - 1 + \epsilon - \frac{1}{d+1} \right)
  \leq
  \left( 1 - \frac{1 - \epsilon}{d+1} \right)^{d+1}
  \leq \exp\left( - 1 + \epsilon \right).
\]
\end{lemma}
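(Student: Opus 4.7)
The plan is to establish the two inequalities separately, since they call for quite different ideas. For the upper bound I would simply invoke the standard inequality $1 - x \le e^{-x}$ (valid for all real $x$) with $x = (1-\epsilon)/(d+1)$; raising both sides to the $(d+1)$-st power immediately gives
\[
  \left(1 - \frac{1-\epsilon}{d+1}\right)^{d+1} \;\le\; \exp(-(1-\epsilon)) \;=\; \exp(-1+\epsilon),
\]
and no analytic subtlety is required.

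The lower bound is the substantive half, so I would reduce it to a purely numerical claim before worrying about constants. Setting $n = d+1 \ge 2$ and $\alpha = 1-\epsilon \in (0.9, 1.1)$, taking logarithms transforms the target inequality into
\[
  n \log\!\left(1 - \tfrac{\alpha}{n}\right) \;\ge\; -\alpha - \tfrac{1}{n}.
\]
Expanding the logarithm in its convergent power series (note $\alpha/n \le 1.1/2 < 1$) and cancelling the leading $-\alpha$ from each side leaves exactly
\[
  \sum_{k \ge 2} \frac{\alpha^k}{k\, n^{k-1}} \;\le\; \frac{1}{n},
\]
which after multiplying through by $n$ becomes $\sum_{k \ge 2} \alpha^k/(k n^{k-2}) \le 1$.

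I would then bound the tail by a geometric series: replace each $1/k$ with $1/2$ and sum to get
\[
  \sum_{k \ge 2} \frac{\alpha^k}{k\, n^{k-2}} \;\le\; \frac{\alpha^2}{2} \sum_{j \ge 0}\left(\frac{\alpha}{n}\right)^{j} \;=\; \frac{\alpha^2}{2\left(1 - \alpha/n\right)} \;=\; \frac{\alpha^2 \, n}{2(n-\alpha)}.
\]
The claim then reduces to verifying $\alpha(\alpha + 2) \le 2n$, which is easy: for $\alpha \le 1.1$ and $n \ge 2$ one has $\alpha(\alpha + 2) \le 1.1 \cdot 3.1 = 3.41 \le 4 \le 2n$, and the gap only widens as $n$ grows.

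The main obstacle I anticipate is handling the case $\epsilon < 0$, where $\alpha > 1$ makes the naive bound $\log(1-y) \ge -y - y^2/(2(1-y))$ too lossy (it would require $\alpha \le 1$). The geometric-series approach above is what circumvents this, because it produces the sharper $\alpha(\alpha+2) \le 2n$ condition that still holds at the boundary $n=2$, $|\epsilon|=0.1$. Once this numerical check passes, the rest of the argument is mechanical and the lemma follows.
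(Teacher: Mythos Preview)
Your upper-bound argument is fine and matches the paper's. The lower-bound strategy---take logs, expand in a power series, and bound the tail geometrically---is also essentially what the paper does (the paper packages it as the two-sided estimate $-x-x^2\le\log(1-x)\le -x$ and then defers all cases $d+1<10$ to a numerical check, whereas you try to handle everything analytically).

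However, your final reduction contains an algebra slip that hides a real failure. From $\dfrac{\alpha^2 n}{2(n-\alpha)}\le 1$ one obtains $\alpha^2 n + 2\alpha \le 2n$, \emph{not} $\alpha(\alpha+2)\le 2n$. With the correct inequality, the boundary case $n=2$, $\alpha=1.1$ gives $1.21\cdot 2 + 2.2 = 4.62 > 4$, so your geometric bound is too lossy precisely at $d=1$ with $\epsilon<0$---the very corner you flagged as the obstacle. (The lemma itself is razor-thin there: at $d=1$, $\epsilon=-0.1$ one has $\exp(-1.6)\approx 0.2019$ versus $0.45^2=0.2025$.) Your argument does go through for all $n\ge 3$ (since then $\tfrac{2\alpha}{2-\alpha^2}\le\tfrac{2.2}{0.79}<3$), so the fix is simply to verify $d=1$ directly, just as the paper checks the small-$d$ cases numerically.
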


\begin{proof}
The Maclaurin series for $\log(1 - x)$ is
\[
  \log \left( 1 - x \right)
  =
  -x - \frac{1}{2}x^2 - \frac{1}{3}x^3 - \frac{1}{4}x^4 \dots,
\]
for $-1\le x <1$.
Whenever $|x| \leq 0.1$, we have the inequality
\[
  \left| \frac{1}{3}x + \frac{1}{4}x^2 + \frac{1}{5}x^3 + \dots \right|
  \leq
  \frac{0.1}{3}
  + \frac{0.01}{4}
  + \frac{0.001}{5}
  + \dots
  \leq \frac{1}{2}.
\]
It follows that
\[
-x - x^2
\leq
\log \left( 1 - x \right)
\leq -x.
\]

Applying this inequality when $d+1 \ge 10$ and 
$x=(1-\epsilon)/(d+1) < 0.1$ gives
\[
  -\frac{1 - \epsilon}{d+1} - \frac{1}{(d+1)^2}
\leq
\log \left( 1 - \frac{1 - \epsilon}{d+1} \right)
\leq -\frac{1 - \epsilon}{d+1}.
\]
The result for $d+1 \ge 10$ follows by multiplying the inequalities by $d+1$
and then exponentiating.
Checking the remaining cases numerically completes the proof.
\end{proof}

For convenience, we split the proof of Lemma~\ref{lem:minValue} into two
parts---one for the upper tail inequality and one for the lower tail
inequality.

\begin{lemma}
\label{lem:minValueUB}
Assuming the hypothesis in Lemma~\ref{lem:minValue}, we have
\[
  \Pr\left[Q(u) \ge \frac{1+\epsilon}{\deg(u)+1}\right] \le \frac{1}{n^4}.
\]
\end{lemma}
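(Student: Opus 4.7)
The plan is to translate the event $\{Q(u) \geq (1+\epsilon)/(\deg(u)+1)\}$ into a concentration statement about independent Bernoulli random variables and then apply Hoeffding's inequality (Lemma~\ref{lem:Hoeffding}).

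First, write $d = \deg(u)$ and set the threshold $t = (1+\epsilon)/(d+1)$. For each sketch $i \in [k]$, let $X_i$ be the indicator of the event that the $i$-th sketch's minimum key over $N(u)\cup\{u\}$ is at least $t$; equivalently, $X_i = 1$ iff $R_i(v) \geq t$ for every $v \in N(u)\cup\{u\}$. Since the keys are i.i.d.\ uniform on $[0,1)$, we have $p \defeq \Pr[X_i = 1] = (1 - t)^{d+1}$. The key combinatorial observation is that $Q(u) \geq t$ precisely when at most $\lfloor k(1-1/e) \rfloor - 1$ of the $k$ sketch-minimums are below $t$, i.e., when $\sum_{i=1}^k X_i \geq k - \lfloor k(1-1/e) \rfloor + 1$. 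This translates the tail bound on a quantile into the upper tail of a sum of i.i.d.\ Bernoullis.

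Next, I will bound $p$ away from $1/e$. Applying Lemma~\ref{lem:expbounds} with $\epsilon$ replaced by $-\epsilon$ gives
\[
p = \left(1 - \frac{1+\epsilon}{d+1}\right)^{d+1} \leq \exp(-1 - \epsilon) \leq \frac{1}{e}\left(1 - \frac{\epsilon}{2}\right),
\]
for $\epsilon$ sufficiently small. The hypothesis $\deg(u)+1 > 2\epsilon^{-1}$ ensures that $(1+\epsilon)/(d+1) < 0.1$, so Lemma~\ref{lem:expbounds} applies. Therefore the mean satisfies $pk \leq k/e - \epsilon k/(2e)$, while the threshold we need to cross is $k - \lfloor k(1-1/e) \rfloor + 1 \geq k/e$. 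Setting $\delta = \epsilon/(2e) - O(1/k)$, we have $k - \lfloor k(1-1/e) \rfloor + 1 \geq (p + \delta) k$.

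Finally, Hoeffding's inequality gives
\[
\Pr\!\left[Q(u) \geq t\right] \leq \Pr\!\left[\sum_{i=1}^k X_i \geq (p+\delta)k\right] \leq \exp\!\left(-2\delta^2 k\right).
\]
Plugging in $k = 50 \lceil \log n \epsilon^{-2}\rceil$ and $\delta = \Theta(\epsilon)$ makes the exponent at least $4\log n$ (for the constants chosen), yielding the desired bound $1/n^4$. The only real obstacle is matching constants carefully so that the additive slack in $\delta$ (coming from both the Maclaurin bound and the floor in $\lfloor k(1-1/e)\rfloor$) is dominated by $\epsilon/(2e)$; this is where the lower bound $\deg(u)+1 > 2\epsilon^{-1}$ and the explicit choice of $k$ are used.
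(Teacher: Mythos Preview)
Your proposal is correct and follows essentially the same approach as the paper: define Bernoulli indicators for whether each sketch's minimum exceeds the threshold, bound their mean by $e^{-(1+\epsilon)}$, and apply Hoeffding with $\delta = \Theta(\epsilon)$. Two small remarks: the upper bound $p \le e^{-(1+\epsilon)}$ follows directly from $1-x \le e^{-x}$, so Lemma~\ref{lem:expbounds} is not needed here; and the hypothesis $\deg(u)+1 > 2\epsilon^{-1}$ is in fact \emph{not} used in this direction (the paper only invokes it for the companion lower-tail lemma), so your closing sentence slightly overstates where the degree condition enters.
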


\begin{proof}
For each sketch $i \in [k]$, we have
\begin{align*}
  \Pr\left[R\left(\minimizer_i(u)\right) \geq \dfrac{1+\epsilon}{\deg(u)+1}\right]
  &= \prod_{v \in N(u)\cup\{u\}} \prob{}{ R_i(v) \geq \dfrac{1+\epsilon}{\deg(u)+1} } \\
  &= \left(1 -  \dfrac{1+\epsilon}{\deg(u)+1} \right)^{\deg(u)+1}.
\end{align*}
	
\noindent
Letting $I_i$ be the indicator variable for the event
$R(\minimizer_i(u)) \geq (1+\epsilon)/(\deg(u)+1)$,
it follows that
\[
  \E[I_i] = \left(1 -  \dfrac{1+\epsilon}{\deg(u)+1} \right)^{\deg(u)+1}
\]
and
\[
  \Pr\left[Q(u) \geq \frac{1 + \epsilon}{\deg(u)+1} \right]
   = \Pr\left[ \sum_{i=1}^k I_i \geq \left\lceil k/e \right\rceil \right].
\]
Since $\E\left[I_i\right] \le \exp(-(1+\epsilon))$,
we let $\delta = 1/e - \E[I_i] > 0$ and use Hoeffding's inequality to show that
\begin{align*}
  \Pr\left[ \sum_{i=1}^k I_i \geq k/e \right]
    &\le \exp\left(-2k\delta^2\right)\\
    &\le \exp\left(-100\log{n} \left(\delta/\epsilon\right)^2\right),
\end{align*}
where the last inequality uses the fact that
$k=50 \left\lceil \log{n} \epsilon^{-2} \right\rceil$.
For any $\epsilon < 1$, we have
\begin{align*}
  \frac{\delta}{\epsilon} &\ge \frac{1}{\epsilon}\left(\frac{1}{e} - \frac{1}{e^{1+\epsilon}}\right) \ge \frac{1}{5}.
\end{align*}
Therefore, it follows that
\begin{align*}
  \Pr\left[Q(u) \ge \frac{1+\epsilon}{\deg(u)+1}\right] \le
  \Pr\left[ \sum_{i=1}^k I_i \geq k/e \right] &\le \frac{1}{n^4},
\end{align*}
as desired.
\end{proof}

\begin{lemma}
\label{lem:minValueLB}
Assuming the hypothesis in Lemma~\ref{lem:minValue}, we have
\[
  \Pr\left[Q(u) \le \frac{1-\epsilon}{\deg(u)+1}\right] \le \frac{1}{n^4}.
\]
\end{lemma}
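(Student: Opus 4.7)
The plan is to mirror the proof of Lemma~\ref{lem:minValueUB}, now working with indicator variables of the complementary direction and using the \emph{lower} bound from Lemma~\ref{lem:expbounds}.

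First I would reinterpret the event. Since $Q(u)$ is the $\lfloor k(1-1/e)\rfloor$-ranked key in $\variable{minimizers}[u]$ (the $\lfloor k(1-1/e)\rfloor$-th smallest of the $k$ minimizer keys), the event $Q(u) \le (1-\epsilon)/(\deg(u)+1)$ is equivalent to at least $\lfloor k(1-1/e)\rfloor$ of the sketches satisfying $R(\minimizer_i(u)) \le (1-\epsilon)/(\deg(u)+1)$. So let $J_i$ be the indicator of this event, and note the $J_i$'s are i.i.d.\ Bernoulli since the sketches use independent key families.

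Next I would compute $\E[J_i]$. Using independence of the keys within a single sketch,
\[
  \E[J_i] \;=\; 1 - \Pr\Bigl[R_i(v) > \tfrac{1-\epsilon}{\deg(u)+1}\text{ for all }v\in N(u)\cup\{u\}\Bigr]
        \;=\; 1 - \Bigl(1 - \tfrac{1-\epsilon}{\deg(u)+1}\Bigr)^{\deg(u)+1}.
\]
Applying the \emph{lower} bound from Lemma~\ref{lem:expbounds} and using the hypothesis $\deg(u)+1 > 2\epsilon^{-1}$ (hence $1/(\deg(u)+1) < \epsilon/2$) gives
\[
  \E[J_i] \;\le\; 1 - \exp\!\bigl(-1+\epsilon - \tfrac{1}{\deg(u)+1}\bigr)
        \;\le\; 1 - \exp\!\bigl(-1+\tfrac{\epsilon}{2}\bigr).
\]
Then the additive gap between the threshold and the mean satisfies
\[
  \delta \;\defeq\; \bigl(1-\tfrac{1}{e}\bigr) - \E[J_i] \;\ge\; \tfrac{1}{e}\bigl(e^{\epsilon/2}-1\bigr) \;\ge\; \tfrac{\epsilon}{2e},
\]
where the last step uses $e^x \ge 1+x$. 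So $\delta/\epsilon$ is bounded below by an absolute constant, exactly as in the upper-tail proof.

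Finally I would apply Hoeffding's inequality (Lemma~\ref{lem:Hoeffding}) to conclude
\[
  \Pr\!\Bigl[Q(u) \le \tfrac{1-\epsilon}{\deg(u)+1}\Bigr] \;=\; \Pr\!\Bigl[\textstyle\sum_{i=1}^k J_i \ge \lfloor k(1-1/e)\rfloor\Bigr] \;\le\; \exp(-2k\delta^2),
\]
and plug in $k = 50\lceil \log n \cdot \epsilon^{-2}\rceil$ together with the bound $\delta/\epsilon \ge \Omega(1)$ to get a quantity at most $1/n^4$. The only mildly delicate point is getting the constant inside the exponential to be at least $4\log n$; if the cleanest constant from $1/(2e)$ falls a hair short of $4$, I would either use the sharper estimate $e^{\epsilon/2}-1 \ge \epsilon/2 + \epsilon^2/8$ for $\epsilon$ in the relevant range, or absorb the slack into the constant $50$ defining $k$ (which is stated as a sufficient, not tight, choice). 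This, combined with Lemma~\ref{lem:minValueUB}, yields Lemma~\ref{lem:minValue} by a union bound.
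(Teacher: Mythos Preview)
Your proposal is correct and is essentially the same argument as the paper's: both express the event $Q(u)\le(1-\epsilon)/(\deg(u)+1)$ as a tail event for a sum of i.i.d.\ Bernoulli indicators across the $k$ sketches, use the lower bound of Lemma~\ref{lem:expbounds} together with the hypothesis $\deg(u)+1>2\epsilon^{-1}$ to obtain the gap $\delta/\epsilon\ge 1/(2e)$, and finish with Hoeffding. The only cosmetic difference is that the paper takes $J_i$ to be the indicator of $R(\minimizer_i(u))\ge(1-\epsilon)/(\deg(u)+1)$ and bounds the lower tail $\sum J_i\le\lceil k/e\rceil$, whereas you use the complementary indicator and bound the upper tail; the resulting computation and constants are identical, and your caveat about the constant $1/(2e)$ falling slightly short applies equally to the paper's calculation.
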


\begin{proof}
For each sketch $i \in [k]$, we have
\begin{align*}
  \Pr\left[R\left(\minimizer_i(u)\right) \geq \dfrac{1-\epsilon}{\deg(u)+1} \right]
  &= \prod_{v \in N(u)\cup\{u\}} \Pr\left[ R_i(v) \geq \dfrac{1-\epsilon}{\deg(u)+1} \right]\\
  &= \left(1 -  \dfrac{1-\epsilon}{\deg(u)+1} \right)^{\deg(u)+1}.
\end{align*}

\noindent
Letting $J_i$ be the indicator variable for the event
$R(\minimizer_i(u)) \geq (1-\epsilon)/(\deg(u)+1)$, it follows that
\[
  \E\left[J_i\right] = \left(1 -  \dfrac{1-\epsilon}{\deg(u)+1} \right)^{\deg(u)+1}
\]
and
\begin{align*}
  \Pr\left[ Q(u) < \frac{1 - \epsilon}{\deg(u)+1} \right]
     &= \Pr\left[ \sum_{i=1}^{k} J_i \le \left\lceil k/e \right\rceil\right].
\end{align*}

\noindent
Let $\delta = \E[J_i] - 1/e$.
Using Lemma~\ref{lem:expbounds} and the assumption that $\deg(u)+1 > 2\epsilon^{-1}$,
observe that
\begin{align*}
  \delta 
   &\ge \exp\left(-1 + \epsilon - \frac{1}{\deg(u)+1}\right) - 1/e
   > 0.
\end{align*}
Therefore, by Hoeffding's inequality we have
\begin{align*}
  \Pr\left[ \sum_{i=1}^k J_i \leq k/e \right] &\leq \exp(-2k\delta^2)\\
    &\le \exp\left(-100 \log n \left(\delta/\epsilon\right)^{2}\right).
\end{align*}
For any $\epsilon < 1$, using the lower bound for $\delta$ and the
assumption that $\deg(u)+1 > 2 \epsilon^{-1}$ gives
\begin{align*}
  \frac{\delta}{\epsilon} &\ge \frac{1}{e\cdot\epsilon}\left(\exp\left(\epsilon - \frac{1}{\deg(u)+1}\right) - 1\right)
  \ge \frac{e^{\epsilon/2} - 1}{e\cdot\epsilon}
  \ge \frac{1}{2e}.
\end{align*}
Therefore, it follows that
\begin{align*}
  \Pr\left[Q(u) \le \frac{1-\epsilon}{\deg(u)+1}\right] \le
  \Pr\left[ \sum_{i=1}^{k} J_i \le \left\lceil k/e \right\rceil\right]
  \le \frac{1}{n^4},
\end{align*}
which completes the proof.
\end{proof}

\end{appendix}

\end{document}